\def\FULLVER{}
\newtheorem{lemma}[theorem]{Lemma}
\newtheorem*{theorem*}{Theorem}
\newtheorem*{lemma*}{Lemma}
\newtheorem{corollary}[theorem]{Corollary}
\newtheorem{observation}[theorem]{Observation}
\theoremstyle{definition}
\DeclareMathAlphabet{\mathpzc}{OT1}{pzc}{m}{it}
\newcommand*{\RR}{%
  \textsf{I\kern-.3ex R}%
}
	\newcommand{\er}[1]{\textcolor{blue}{#1}}
	\newcommand{\erel}[1]{\er{(Erel says: #1)}}
	\newcommand{\eden}[1]{\textcolor{red}{(Eden says: #1)}}
	\newcommand{\yon}[1]{\textcolor{Green}{(Yonatan says: #1)}}
	\newcommand{\er}[1]{#1}
	\newcommand{\erel}[1]{}
	\newcommand{\eden}[1]{}
	\newcommand{\yon}[1]{}
\newcommand{\multApprox}{\alpha}
\newcommand{\additiveApprox}{\epsilon}
\newcommand{\DEFmultApprox}{\alpha}
\newcommand{\DEFadditiveApprox}{\epsilon}
\newcommand{\DEFmultError}{\overline{\multApprox}}
\newcommand{\DEFmultErrorOf}[1]{\overline{#1}}
\newcommand{\DEFadditiveError}{\epsilon}
\newcommand{\Hquad}{\hspace{0.5em}}
\newcommand{\multError}{\beta}
\newcommand{\additiveError}{\epsilon}
\newcommand{\leximinPreferred}{\succ}
\newcommand{\nLeximinPreferred}{\nsucc}
\newcommand{\alphaBetaPreferred}{\succ_{(\DEFmultApprox,\DEFadditiveApprox)}}
\newcommand{\alphaBetaPreferredParams}[2]{\succ_{(#1,#2)}}
\newcommand{\nAlphaBetaPreferred}{\nsucc_{(\DEFmultApprox,\DEFadditiveApprox)}}
\newcommand{\nAlphaBetaPreferredParams}[2]{\nsucc_{(#1,#2)}}
\newcommand{\relationSetAlphaBeta}{\mathcal{R}_{(\multApprox,\additiveApprox)}}
\newcommand{\relationSetParams}[2]{\mathcal{R}_{(#1,#2)}}
\newcommand{\sortedValues}[1]{\mathbf{V}^{\uparrow}(#1)}
 \newcommand{\valBy}[2]{\mathpzc{V}_{#1}^{\uparrow}(#2)}
\newcommand{\retSol}{x^*}
\newcommand{\ztVar}[1]{z_t}
\newcommand{\lexmaxmin} {\operatorname{lex} \max \min}
\newcommand{\progBasic}{P1}
\newcommand{\progSums}{P2}
\newcommand{\progCompact}{P2-Comp}
\newcommand{\progLinear}{P3}
	\newcommand{\displayComsoc}[1]{#1}
        \newcommand{\displayEcai}[1]{}
        \newcommand{\forXinY}[2]{ #1 = 1, \ldots, #2}
        \newcommand{\equWithExp}[2]{#2 && (#1)}
	\newcommand{\displayComsoc}[1]{}
    \newcommand{\displayEcai}[1]{#1}
    \newcommand{\forXinY}[2]{\forall #1 \in [#2]}
    \newcommand{\equWithExp}[2]{#1\text{:}\\ & #2}
    \newcommand{\fullVer}{}
    \newcommand{\appendixName}[2]{#1}
    \newcommand{\fullVer}{ of the full version}
    \newcommand{\appendixName}[2]{#2}
\begin{document}

\begin{frontmatter}

\title{Leximin Approximation: From Single-Objective to Multi-Objective}

\author[A]{\fnms{Eden}~\snm{Hartman}\orcid{0000-0001-5819-4432}\thanks{
Contact information: eden.r.hartman@gmail.com, avinatan@cs.biu.ac.il, aumann@cs.biu.ac.il,  erelsgl@gmail.com.}}
\author[A,B]{\fnms{Avinatan}~\snm{Hassidim}\orcid{0000-0002-3855-344X}}
\author[A]{\fnms{Yonatan}~\snm{Aumann}\orcid{0000-0002-6217-671X}}\author[C]{\fnms{Erel}~\snm{Segal-Halevi}\orcid{0000-0002-7497-5834}} 

\address[A]{Bar-Ilan University, Israel} 
\address[B]{Google, Israel} 
\address[C]{Ariel University, Israel} 


\begin{abstract}
Leximin is a common approach to multi-objective optimization,
frequently employed in fair division applications.
In leximin optimization, one first aims to maximize the smallest objective value; subject to this, one maximizes the second-smallest objective; and so on. 
Often, even the single-objective problem of maximizing the smallest value cannot be solved accurately.  
What can we hope to accomplish for leximin optimization in this situation?
Recently, Henzinger et al. (2022) defined a notion of \emph{approximate} leximin optimality.
Their definition, however, considers only an additive approximation.   

In this work, we first define the notion of approximate leximin optimality, allowing both multiplicative and additive errors. 
We then show how to compute, in polynomial time, such an approximate leximin solution, using an oracle that finds an approximation to a single-objective problem. The approximation factors of the algorithms are closely related: an $(\multApprox,\additiveApprox)$-approximation for the single-objective problem (where $\multApprox \in (0,1]$ and $\additiveApprox \geq 0$ are the multiplicative and additive factors respectively) translates into an $\left(\frac{\multApprox^2}{1-\multApprox + \multApprox^2}, \frac{\additiveApprox}{1-\multApprox +\multApprox^2}\right)$-approximation for the multi-objective leximin problem, regardless of the number of objectives. 

\end{abstract}

\end{frontmatter}


\eden{To myself: to remove page numbering before submission}

\section{Introduction}
Many real life scenarios involve more than one objective. 
These situations are often modeled as \emph{multi-objective optimization problems}. These are defined by the set of possible decisions, along with functions that describe the different objectives.
As a concrete example, we use the context of social choice, in which the objective functions represent people's utilities.
Different criteria can be used to determine optimality when considering multi-objectives. 
For example, the \emph{utilitarian} criterium aims to maximize the sum of utilities, whereas the \emph{egalitarian} criterium aims to maximize the least utility.
This paper focuses on the \emph{leximin} criterium, according to which one aims to maximize the least utility, and, subject to this, maximize the second-smallest utility, and so on.
In the context of social choice, 
the leximin criterium is considered a criterium of fairness,
as strives to benefit, as much as possible, the least fortunate in society.

Common algorithms for finding a leximin optimal solution are iterative, optimizing one or more single-objective optimization problems at each iteration (for example~\cite{airiau_portioning_2019,arbiv_fair_2022,bei_truthful_2022, nace_max-min_2008,Ogryczak2004TelecommunicationsND,willson}).
Often, these single-objective problems cannot be solved exactly (e.g. when they are computationally hard, or when there are numeric inaccuracies in the solver), but can be solved approximately.
In this work, we define an approximate variant of leximin and show how such an approximation can be computed, given approximate single-objective solvers.

\paragraph{The Challenge}
When single-objective solvers only \emph{approximate} the optimal value, existing methods for extending the solvers to leximin optimally may fail, as we illustrate next.

A common algorithm,
independently proposed many times, e.g.~\cite{airiau_portioning_2019,nace_max-min_2008,Ogryczak2004TelecommunicationsND, willson}, is based on the notion of \textit{saturation}. It operates roughly  as follows.
In the first iteration, the algorithm looks for the maximum value that all objective functions can achieve simultaneously, denoted $z_1$. Then, it determines which of the objective functions are saturated --- that is, cannot achieve more than $z_1$ given that the others achieve at least $z_1$.
Afterwards, in each iteration $t$, given that for any $i<t$ the objective-functions that were determined saturated in the $i$'th iteration achieve at least $z_i$, it looks for the maximum value that all other objective-functions can achieve simultaneously, denoted $z_t$, and then determines which of those functions are saturated.
When all functions become saturated, the algorithm ends.

  

The following simple example  demonstrates the problem that may arise when the individual solver may return sub-optimal results. Consider the following problem:
\begin{align*}
    \lexmaxmin \quad &\{f_1(x) = x_1, f_2(x)=x_2\} \\
    s.t. \quad  &(1) \Hquad x_1 + x_2 \leq 1, \quad (2) \Hquad x \in \mathbb{R}^2_{+}
\end{align*}
As $f_1$ and $f_2$ are symmetric, the leximin optimal solution in this case is $(0.5,0.5)$.
Now suppose that rather than finding the exact value $0.5$, the solver returns the value $0.49$. 
The optimal value of $f_1$ given that $f_2$ achieves at least $0.49$ is $0.51$, and vice versa for $f_2$.
As a consequence, none of the objective functions would be determined saturated, and the algorithm may not terminate. 
One could perhaps define an objective as ''saturated'' if its maximum attainable value is close to the maximum $z_t$, but there is no guarantee that this would lead to a good approximation\footnote{An example will be given on request.}
%


\paragraph{Contributions}
This paper studies the problem of leximin optimization in multi-objective optimization problems, 
focusing on problems for which even the single-objective problems cannot be solved exactly in polynomial time.
\eden{Was: 'Our contribution is threefold.', should it be rewritten differently for two? maybe: 'We present two key contributions in this research.'}
Our contribution is twofold.

First, a new definition of leximin approximation is presented.
It captures both multiplicative and additive errors.
The definition has several desirable properties, including that a leximin optimal solution is always approximately-optimal,
and that the definition is equivalent to the original one in the absence of errors.

Second, an algorithm is provided that, given an approximation algorithm for a single-objective problem, computes a leximin approximation to the multi-objective problem.
The algorithm was first presented by Ogryczak
and {\'{S}}liwi{\'{n}}ski~\cite{Ogryczak_2006} for exact leximin-optimization. In contrast to the saturation-based algorithm described in the Introduction, this algorithm always terminates, even when the single-objective solver is inaccurate. 
Moreover, the accuracy of the returned solution is closely correlated with the accuracy of the single-objective solver --- given an $(\multApprox,\additiveApprox)$-approximation algorithm for the single-objective problem (where $\multApprox$ and $\additiveApprox$ describe the multiplicative and additive factors respectively), the returned solution is an $\left(\frac{\multApprox^2}{1-\multApprox + \multApprox^2}, \frac{\additiveApprox}{1-\multApprox +\multApprox^2}\right)$-approximation of leximin.
Importantly, this holds for any number of objectives.

Our algorithm is applicable whenever a single-objective solver is inaccurate due to numerical issues, such as floating-point rounding errors, or stopping an iterative solution procedure before complete convergence.
%
%

\paragraph{Organization} 
Section \ref{sec:preliminaries} gives preliminary knowledge and basic definitions. 
Section \ref{sec:approx-leximin-def} presents the definition of leximin approximation.
An algorithm for computing such an approximation is presented in Section \ref{sec:algo-short}.
Section \ref{sec:future} concludes with some future work directions.

\subsection{Related Work}
This paper is related to a large body of research, which can be classified into three main fields: multi-objective optimization problems, approximation variants of known solution concepts, and algorithms for finding optimal leximin solutions.

In general multi-objective\footnote{Multi-objective is also called \emph{multi-criteria} (for example in~\cite{ehrgott_multicriteria_2005}).} optimization problems, finding a leximin\footnote{Leximin is also called \emph{Max-Min fairness} (for example in~\cite{nace_max-min_2008}), \emph{Lexicographic Min-Max} (for example in~\cite{Ogryczak_2006}), \emph{Lexicographic max-ordering} (for example in~\cite{ehrgott_multicriteria_2005}) and \emph{Leximax} (for example in~\cite{henzinger_et_al_FORC_2022}).} optimal solution is quite common goal~\cite{ehrgott_multicriteria_2005}, which is still an open challenge.
Studies on this topic are usually focused on a specific problem and leverages its special characteristics --- the structure of the \textit{feasible region} and \textit{objective-functions} that describe the concrete problem at hand.
In this paper, we focus on the widely studied domain of resource allocation problems~\cite{moulin2004fair}.
In that context, as leximin maximization is an extension of egalitarian welfare maximization, it is usually mentioned when fairness is desired.

There are cases where a leximin optimal solution can be calculated in polynomial time, for example in: fair allocation of divisible items~\cite{willson}, giveaway lotteries~\cite{arbiv_fair_2022}, portioning with ordinal preferences~\cite{ airiau_portioning_2019}, cake sharing~\cite{bei_truthful_2022}, multi-commodity ﬂow networks~\cite{nace_max-min_2008}, and location problems~\cite{Ogryczak_1997_loc}.
However, even when algorithms are theoretically polynomial, they can still be inaccurate in practice, for example due to numeric round-off errors.

In other cases, calculating a leximin optimal solution is NP-hard, for example in: representative
cohort selection~\cite{henzinger_et_al_FORC_2022}, fair combinatorial auctions~\cite{bouveret_computing_2009}, package upgrade-ability~\cite{cabral2022sat}, and allocating papers to referees~\cite{garg_assigning_2010, lian_conference_2018}.
However, to our knowledge, 
studies of this kind typically suggest non-polynomial algorithms and heuristics for solving small instances of the general problem and empirically evaluate their efficiency, rather than suggesting polynomial-time approximation algorithms.

Another approach to leximin optimization is to represent the leximin ordering by an aggregation function. 
Such a function takes a utility vector and returns a number, such that a solution is leximin-preferred over another if-and-only-if its aggregate number is higher.
Finding such a function will of course reduce the problem to solving only one single-objective optimization problem.
Unfortunately, it is known that no aggregate function can represent the leximin ordering in \emph{all} problems~\cite{moulin2004fair,Ogryczak2004TelecommunicationsND}.
Still, there are interesting cases in which such functions can be found. 
For example, Yager~\cite{yager_ordered_1988} suggested that the \emph{ordered weighted averaging (OWA)} technique can be used when there is a lower bound on the difference between any two possible utilities. 
However, it is unclear how (and whether) approximating the aggregate function would translate to approximating leximin.
%
%

To the best of our knowledge, other general approximations of leximin exist but they are less common.
They are usually mentioned in the context of robustness or noise (e.g.~\cite{henzinger_et_al_FORC_2022,kalai_lexicographic_2012}) and lack characteristics that we emphasize within the context of errors.
Most similar to our work is the recent paper by Henzinger et al.~\cite{henzinger_et_al_FORC_2022}.
This paper presents several approximation variants of leximin for the case of \emph{additive} errors in the single-objective problems.
Their motivation is different than ours --- they use approximation as a method to improve efficiency and ensure robustness to noise. However, one of their definitions,  (\emph{$\epsilon$-tradeoff Leximax}) fits our motivation of achieving the best possible leximin-approximation in the presence of errors.
In fact, our approximation definition can be viewed as a generalization of their definition to include both multiplicative and additive errors.
It should also be noted that the authors mention multiplicative approximation in the their Future Work Section.

\section{Preliminaries}\label{sec:preliminaries}

We denote the set $\{1,\dots, n\}$ by $[n]$ for $n \in \mathbb{N}$.

 \paragraph{Single-objective optimization}

A \emph{single-objective maximization (resp. minimization) problem} is a tuple $(S,f)$ where $S$ is the set of all feasible solutions to the problem (usually $S \subseteq \mathbb{R}^m$ for some $m \in \mathbb{N}$) and $f \colon S \to \mathbb{R}_{\geq 0}$ is a function describing the objective value of a solution $x \in S$.
The goal in such problems is to find an \emph{optimal} solution, that is, a feasible solution $x^* \in S$ that has the maximum (resp. minimum) objective value, that is $f(x^*) \geq f(x)$ (resp. $f(x^*) \leq f(x)$) for any other solution $x \in S$.

A \emph{$(\multApprox,\additiveApprox)$-approximation algorithm} for a single-objective \emph{maximization} problem $(S,f)$ is one that returns a solution $x\in S$, which \emph{approximates} the optimal solution $x^*$ from below. That is, $f(x) \geq \multApprox \cdot f(x^*) - \additiveApprox$ for $\multApprox  \in (0,1]$ and $\additiveApprox \geq 0$ (that describe the multiplicative and additive approximation factors respectively).


A \emph{$p$-randomized} approximation algorithm, for $p\in(0,1]$, is one that returns a solution $x\in S$ such that, with probability $p$, the objective value $f(x)$ is approximately-optimal.

The term "with high probability" (w.h.p.) is used when the success probability is at least $\left(1-\frac{1}{poly(I)}\right)$ where $I$ describes the input size.

\paragraph{Multi-objective optimization}
A \textit{multi-objective maximization} problem~\cite{branke_multiobjective_2008} can be described as follows:
\begin{align*}
    \max \quad &\{f_1(x), f_2(x), \dots f_n(x)\} \\
    s.t. \quad  & x \in S
\end{align*}
Where $S\subseteq \mathbb{R}^m$ for some $m \in \mathbb{N}$ is the \textit{feasible region} and $f_1, f_2, \dots, f_n$ are $n$ \textit{objective-functions} $f_i\colon S \to \mathbb{R}_{\geq 0}$\footnote{Note that the number of objectives in multi-objective optimization is commonly assumed to be constant. However, in this paper, we use a more general setting in which the number of objectives is a parameter of the problem.}.\eden{I think we should say that $f_i\colon S' \to \mathbb{R}_{\geq 0}$ for some  $S \subseteq S'$}
An example application is group decision making:
some $n$ people have to decide on an issue that affects all of them.
The set of possible decisions is $S$, and the utility each person $i$ derives from a decision $x\in S$ is $f_i(x)$.

\paragraph{Ordered outcomes notation}
The $j$'th smallest objective value obtained by a solution $x\in S$ is denoted by $\valBy{j}{x}$, i.e.,
\begin{align*}
    \valBy{1}{x} \leq \valBy{2}{x} \leq \dots \leq \valBy{n}{x}.
\end{align*}
The corresponding \emph{sorted} utility vector is denoted by $\sortedValues{x} = (\valBy{1}{x}, \ldots, \valBy{n}{x})$.
%
%
%

\paragraph{The leximin order}
A solution $y$ is considered \emph{leximin-preferred} over a solution $x$, denoted $y \leximinPreferred x$, if there exists an integer $1 \leq k\leq n $ such that the smallest $(k-1)$ objective values of both are equal, whereas the $k$'th smallest objective value of $y$ is higher:
\begin{align*}
    \forall j < k \colon \quad &\valBy{j}{y} = \valBy{j}{x}\\
    & \valBy{k}{y} > \valBy{k}{x}
\end{align*} 
Two solutions, $x,y,$ are \emph{leximin equivalent} if $\sortedValues{x} = \sortedValues{y}$. 
The leximin order
is a \emph{total} order, and strict between any two solutions that yield \emph{different} sorted utility vectors ($\sortedValues{x} \neq \sortedValues{y}$).  
A \emph{maximum} element of the leximin order is a solution
over which \emph{no} solution is preferred (including solutions that yield the same utilities).

\paragraph{Leximin optimal}
%
A \emph{leximin optimal} solution is a maximum element of the leximin order.
Given a feasible region $S$, as the order is determined only by the utilities, we denote this optimization problem as follows. 
\begin{align*}
	\lexmaxmin \quad &\{f_1(x), f_2(x), \dots f_n(x)\} \\
	s.t. \quad  & x \in S
\end{align*}

\section{Approximate Leximin Optimality}\label{sec:approx-leximin-def}
In this section, we present our definition of leximin approximation in the presence of multiplicative and additive errors, in the context of multi-objective optimization problems.

\subsection{Motivation: Unsatisfactory  Definitions}

Which solutions should be considered approximately-optimal in terms of  leximin? 
Several definitions appear intuitive at first glance.
As an example, suppose we are interested in approximations with an allowable multiplicative error of $0.1$.
Denote the utilities in the leximin-optimal solution by $(u_1,\ldots,u_n)$.
A first potential definition is that any solution in which the sorted utility vector is at least $(0.9\cdot u_1,\ldots,0.9\cdot u_n)$ should be considered approximately-optimal.
For example, if the utilities in the optimal solution are $(1,2,3)$, then a solution with utilities $(0.9, 1.8, 2.7)$ is approximately-optimal.
However, allowing the smallest utility to take the value $0.9$ may substantially increase the maximum possible value of the second (and third) smallest utility --- e.g.~a solution that yields utilities $(0.9, 1000,1000)$ might exist. In that case, a solution with utilities  $(0.9, 1.8, 2.7)$ is very far from optimal.
We expect a good approximation notion to consider the fact that an error in one utility might change the optimal value of the others.

The following, second attempt at a definition, captures this requirement.
An approximately-optimal solution is one that yields utilities at least $(0.9\cdot m_1, 0.9 \cdot m_2, \dots, 0.9 \cdot m_n)$, where $m_1$ is the maximum value of the smallest utility, $m_2$ is the maximum value of the second-smallest utility \emph{among all solutions whose smallest utility is at least $0.9 \cdot m_1$};
$m_3$ is the maximum value of the third-smallest utility among all solutions whose smallest utility is at least $0.9 \cdot m_1$ and their second-smallest utility is at least $0.9\cdot m_2$; and so on. 
In the above example, to be considered approximately-optimal, the smallest utility should be at least $0.9$ and the second-smallest should be at least $900$.
Thus, a solution with utilities $(0.9, 1.8, 2.7)$ is not considered approximately-optimal. Unfortunately, according to this definition, even the leximin-optimal solution --- with utilities $(1,2,3)$  --- is not considered approximately-optimal.
We expect a good approximation notion to be a relaxation of leximin-optimality.

\subsection{Our Definition}\label{sub:our-def}
 Let $\DEFmultApprox\in (0,1]$ and $\DEFadditiveApprox \geq 0$ be multiplicative and additive approximation factors, respectively.

\paragraph{Comparison of values}
As we focus on maximization problems, 
%
%
given two values $v_2 \geq v_1 \geq 0$, we say that $v_1$ approximates $v_2$ if $v_1 \geq \DEFmultApprox \cdot v_2 - \DEFadditiveApprox$.
In this case, $v_1$ is an approximate replacement for $v_2$.
However, when $v_1 < \DEFmultApprox \cdot v_2 - \DEFadditiveApprox$, we say that $v_2$ is \emph{$(\DEFmultApprox,\DEFadditiveApprox)$-substantially-higher} than $v_1$.
In this case, $v_1$ is smaller than any $(\DEFmultApprox,\DEFadditiveApprox)$-approximation of $v_2$. 

\paragraph{The approximate leximin order} 
The first step is defining the following \textit{partial} order\footnote{A proof that the approximate leximin order is a strict partial order can be found in Appendix \appendixName{\ref{sec:approx-order-is-strict-partial}}{A}\fullVer.}:
a solution $y$ is \emph{$(\DEFmultApprox,\DEFadditiveApprox)$-leximin-preferred} over a solution $x$, denoted $y \alphaBetaPreferred x$, if there exists an integer $1 \leq k \leq n$ such that the smallest $(k-1)$ objective values of $y$ are \emph{at least} those of $x$, and the $k$'th smallest objective value of $y$ is $(\DEFmultApprox,\DEFadditiveApprox)$-substantially-higher than the $k$'th smallest objective value of $x$, that is:
\begin{align*}
    \forall j < k \colon \quad &\valBy{j}{y} \geq \valBy{j}{x}\\
    &\valBy{k}{y} > \frac{1}{\DEFmultApprox} \left( \valBy{k}{x} + \DEFadditiveApprox\right)
\end{align*}
A maximal element of this order is a solution over which no solution is $(\DEFmultApprox,\DEFadditiveApprox)$-leximin-preferred.
For clarity, we define the corresponding relation set as follows:
\begin{align*}
    \relationSetAlphaBeta = \{(y,x) \mid  x,y \in S, \Hquad y \alphaBetaPreferred x\}
\end{align*}

Before describing the approximation definition, we present two observations about this relation that will be useful later, followed by an example to illustrate how it works.
The proofs are straightforward and are omitted. 

The first observation is that the leximin order is equivalent to the approximate leximin order for $\DEFmultApprox = 1$ and $\DEFadditiveApprox = 0$ (that is, in the absence of errors).

\begin{lemma}\label{lemma:approx-relation-prop1}
    Let $x,y \in S$. Then, $y \leximinPreferred x \iff y \alphaBetaPreferredParams{1}{0} x$
\end{lemma}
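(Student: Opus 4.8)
The plan is to prove the two implications separately. The forward direction is essentially immediate, and the backward direction requires only a small amount of care in choosing the right witness index.

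For the forward direction ($y \leximinPreferred x \Rightarrow y \alphaBetaPreferredParams{1}{0} x$): suppose $y \leximinPreferred x$ is witnessed by an index $k$, so that $\valBy{j}{y} = \valBy{j}{x}$ for all $j < k$ and $\valBy{k}{y} > \valBy{k}{x}$. Since equality trivially implies $\valBy{j}{y} \geq \valBy{j}{x}$, and since $\frac{1}{1}\left(\valBy{k}{x} + 0\right) = \valBy{k}{x}$, the very same index $k$ witnesses $y \alphaBetaPreferredParams{1}{0} x$.

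For the backward direction ($y \alphaBetaPreferredParams{1}{0} x \Rightarrow y \leximinPreferred x$): suppose $y \alphaBetaPreferredParams{1}{0} x$ is witnessed by an index $k$, so $\valBy{j}{y} \geq \valBy{j}{x}$ for all $j < k$ and $\valBy{k}{y} > \valBy{k}{x}$. The subtlety is that below $k$ we only have $\geq$, not $=$, so $k$ itself need not witness the strict leximin relation. Instead I would let $k'$ be the smallest index at which the sorted vectors $\sortedValues{y}$ and $\sortedValues{x}$ differ; this is well-defined since they differ at index $k$, hence $k' \leq k$. By minimality, $\valBy{j}{y} = \valBy{j}{x}$ for all $j < k'$, which gives the equality part of the leximin condition. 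It then remains to verify $\valBy{k'}{y} > \valBy{k'}{x}$ (rather than $<$): if $k' = k$ this is exactly the strict inequality in the hypothesis, and if $k' < k$ the hypothesis gives $\valBy{k'}{y} \geq \valBy{k'}{x}$, which together with $\valBy{k'}{y} \neq \valBy{k'}{x}$ forces $\valBy{k'}{y} > \valBy{k'}{x}$. Hence $k'$ witnesses $y \leximinPreferred x$.

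The only obstacle, and it is a minor one, is recognizing that the witness index may have to be decreased when passing from the approximate relation back to the exact one; once $k'$ is identified, everything follows from elementary comparisons, and no properties of $S$ or of the objective functions are needed beyond the two definitions.
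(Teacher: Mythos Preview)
Your proof is correct and follows essentially the same approach as the paper's: both directions use the same witness $k$ for the forward implication, and for the backward implication both pass to a minimal index $k'$ below $k$ to recover the equalities required by the strict leximin definition. The only cosmetic difference is that you take $k'$ to be the first index where the sorted vectors differ, whereas the paper takes $k'$ to be the first index where $\valBy{k'}{y} > \valBy{k'}{x}$; these yield the same index here, and the remaining verification is identical.
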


The second observation relates different approximate leximin orders according to their \emph{error} factors.
Notice that, for additive errors, $\DEFadditiveError$ also describes the error size; whereas for multiplicative errors, one minus $\DEFmultApprox$ describes it.
Throughout the remainder of this section, we denote the multiplicative \emph{error factor} by $\DEFmultError = 1-\DEFmultApprox$.

\begin{observation}\label{obs:approx-relation-prop2}
     Let $0 \leq \DEFmultErrorOf{\DEFmultApprox_1} \leq  \DEFmultErrorOf{\DEFmultApprox_2} < 1$ and $0 \leq \DEFadditiveError_1 \leq \DEFadditiveError_2$. 
     Then, $y \alphaBetaPreferredParams{\DEFmultApprox_2}{\DEFadditiveApprox_2} x \Rightarrow y \alphaBetaPreferredParams{\DEFmultApprox_1}{\DEFadditiveApprox_1} x$.
\end{observation}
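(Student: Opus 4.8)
The plan is to show that if $y \alphaBetaPreferredParams{\DEFmultApprox_2}{\DEFadditiveApprox_2} x$, then the \emph{same} witness index $k$ works for the relation $\alphaBetaPreferredParams{\DEFmultApprox_1}{\DEFadditiveApprox_1}$. So suppose there is an integer $1 \leq k \leq n$ with $\valBy{j}{y} \geq \valBy{j}{x}$ for all $j < k$ and $\valBy{k}{y} > \frac{1}{\DEFmultApprox_2}\left(\valBy{k}{x} + \DEFadditiveApprox_2\right)$. The first $k-1$ inequalities are exactly what the $(\DEFmultApprox_1,\DEFadditiveApprox_1)$-order demands for indices below $k$, so nothing needs to be done there. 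The whole content of the proof is the single inequality at index $k$: I need to verify that $\frac{1}{\DEFmultApprox_2}\left(\valBy{k}{x} + \DEFadditiveApprox_2\right) \geq \frac{1}{\DEFmultApprox_1}\left(\valBy{k}{x} + \DEFadditiveApprox_1\right)$, which then gives $\valBy{k}{y} > \frac{1}{\DEFmultApprox_1}\left(\valBy{k}{x} + \DEFadditiveApprox_1\right)$ by transitivity of $>$ and $\geq$.

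For that inequality, first translate the hypotheses on error factors back to the $\DEFmultApprox$'s: since $\DEFmultErrorOf{\DEFmultApprox_i} = 1 - \DEFmultApprox_i$, the assumption $\DEFmultErrorOf{\DEFmultApprox_1} \leq \DEFmultErrorOf{\DEFmultApprox_2}$ means $1 - \DEFmultApprox_1 \leq 1 - \DEFmultApprox_2$, i.e.\ $\DEFmultApprox_1 \geq \DEFmultApprox_2$, and both lie in $(0,1]$ (the strictness $\DEFmultErrorOf{\DEFmultApprox_2} < 1$ ensures $\DEFmultApprox_2 > 0$, so no division by zero). Then $\frac{1}{\DEFmultApprox_2} \geq \frac{1}{\DEFmultApprox_1} > 0$. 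Combining this with $\DEFadditiveApprox_2 \geq \DEFadditiveApprox_1 \geq 0$ and the fact that $\valBy{k}{x} \geq 0$ (objective values are non-negative), we get $\frac{1}{\DEFmultApprox_2}\left(\valBy{k}{x} + \DEFadditiveApprox_2\right) \geq \frac{1}{\DEFmultApprox_2}\left(\valBy{k}{x} + \DEFadditiveApprox_1\right) \geq \frac{1}{\DEFmultApprox_1}\left(\valBy{k}{x} + \DEFadditiveApprox_1\right)$, where the first step uses $\DEFadditiveApprox_2 \geq \DEFadditiveApprox_1$ and $\frac{1}{\DEFmultApprox_2} > 0$, and the second uses $\frac{1}{\DEFmultApprox_2} \geq \frac{1}{\DEFmultApprox_1}$ together with $\valBy{k}{x} + \DEFadditiveApprox_1 \geq 0$.

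This is essentially a one-line monotonicity argument, so I do not anticipate a genuine obstacle; the only thing to be careful about is the direction flip between error factors and approximation factors (smaller error factor means \emph{larger} $\DEFmultApprox$), and the need to use non-negativity of $\valBy{k}{x}$ and of $\DEFadditiveApprox_1$ to push the scalar $\frac{1}{\DEFmultApprox_2} \geq \frac{1}{\DEFmultApprox_1}$ through. Since the paper says the proofs of these observations are straightforward and omitted, I would present this at the same level of brevity — state that the same $k$ witnesses the weaker relation and that the index-$k$ inequality follows from $\DEFmultApprox_1 \geq \DEFmultApprox_2 > 0$, $\DEFadditiveApprox_1 \leq \DEFadditiveApprox_2$, and $\valBy{k}{x} \geq 0$.
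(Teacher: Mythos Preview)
Your proposal is correct and matches the paper's own reasoning: the paper simply remarks that the claim follows directly from the definition since $\frac{1}{\DEFmultApprox_2} \geq \frac{1}{\DEFmultApprox_1}$, which is exactly the monotonicity you spell out with the same witness index $k$.
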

One can easily verify that it follows directly from the definition as
$\frac{1}{\DEFmultApprox_2} \geq \frac{1}{\DEFmultApprox_1}$. 
Accordingly, by considering the relation sets $\relationSetParams{\DEFmultApprox_1}{\DEFadditiveApprox_1}$ and $\relationSetParams{\DEFmultApprox_2}{\DEFadditiveApprox_2}$, we can conclude that $\relationSetParams{\DEFmultApprox_2}{\DEFadditiveApprox_2} \subseteq \relationSetParams{\DEFmultApprox_1}{\DEFadditiveApprox_1}$.
This means that as the \emph{error} parameters $\DEFmultError$ and $\DEFadditiveApprox$ increase,
the relation becomes \emph{more partial}:
when $\DEFmultError = 0$ and $\DEFadditiveApprox = 0$ it is a total order, any two elements that yield different utilities appear as a pair in $\relationSetParams{1}{0}$; but as they increase, the set $\relationSetAlphaBeta$ potentially becomes smaller, as fewer pairs are comparable.

\paragraph{Example } To illustrate, consider a group of $3$ agents, that has to select one out of three options $x,y,z$, with sorted utility vectors $\sortedValues{x}=(1,10,15), \sortedValues{y} =(1,40,60), \sortedValues{z}=(2,20,30)$.
Table \ref{table:relationSets} indicates what is $\relationSetParams{\DEFmultApprox}{\DEFadditiveApprox}$ for different choices of $\DEFmultApprox$ and $\DEFadditiveApprox$.
\displayEcai{

}
It is easy to verify that, indeed, $\relationSetParams{1}{0}$ is a total order --- $(z,x), (z,y) \in \relationSetParams{1}{0}$ since $2>1$ and $(y, x) \in \relationSetParams{1}{0}$ since $1=1$ and $40>10$.
\displayEcai{

}
In accordance with Observation \ref{obs:approx-relation-prop2}, the relation set remains the same or becomes smaller as either $\DEFmultApprox$ decreases (and the error factor $\DEFmultError$ increases) or $\DEFadditiveError$ increases.
As an example, we provide a partial calculation of $\relationSetParams{0.75}{1}$.
\displayEcai{

}
First, by Observation \ref{obs:approx-relation-prop2}, we know that $\relationSetParams{0.75}{1} \subseteq \relationSetParams{1}{0}$, and so, it is sufficient to consider only the pairs in $\relationSetParams{1}{0}$. 
\displayEcai{

}
Consider the pair $(z,x)$.
In order to be included in the relation set, there must be a $1 \leq k \leq 3$ that meets the requirements.
For $k=1$, as $2 \ngtr \frac{1}{0.75}(1+1)$, the requirement for $k$ does not hold.
However, for $k=2$, it does. As $2 \geq 1$, the requirement for $i<k$ holds; and as $20 > \frac{1}{0.75}(10+1)$, the requirement for $k$ holds.
Therefore, $(z,x) \in \relationSetParams{0.75}{1}$.
Similarly, one can check that
$(y,x) \in \relationSetParams{0.75}{1}$.
\displayEcai{

}
Next, consider the pair $(z,y)$.
For $k=1$, as before, since $2 \ngtr \frac{1}{0.75}(1+1)$, the requirement for $k$ does not hold.
For $k =2$ and $k=3$, it is sufficient to notice that $20 < 40$, therefore the requirements for both does not hold. And so, $(z,y) \notin \relationSetParams{0.75}{1}$.

\begin{table}
\centering
{\caption{Different relation sets result from different choices of $\DEFmultApprox$ and $\DEFadditiveApprox$ in the example above.
Each cell contains the corresponding relation set $\relationSetParams{\DEFmultApprox}{\DEFadditiveApprox}$.}
\label{table:relationSets}}
\begin{tabular}{|c |c c c c|}
 \hline
\backslashbox{$\DEFmultApprox$}{$\DEFadditiveApprox$}& 0 & 1 & 15 & 45 \\ 
 \hline
 1 & \{(z,x),(z,y),(y,x)\} & \{(z,x),(y,x)\} & \{(y,x)\} & \{\}\\ 
 0.75 & \{(z,x),(z,y),(y,x)\} & \{(z,x),(y,x)\} & \{(y,x)\} &\{\}\\
 0.5 & \{(y,x)\} & \{(y,x)\} & \{\}&\{\} \\
 0.25 & \{\} & \{\} & \{\}&\{\} \\ [1ex]
 \hline
\end{tabular}
\end{table}



The leximin approximation can now be defined.

\paragraph{Leximin approximation}
We say that a solution $x\in S$ is \emph{$(\DEFmultApprox,\DEFadditiveError)$-approximately leximin-optimal} if it is a maximum element of the order $\alphaBetaPreferred$\displayComsoc{.} \displayEcai{ in $S$ for $\DEFmultApprox\in (0,1]$ and $\DEFadditiveError \geq 0$.
That is, if there is \emph{no} solution in $S$ that is $(\DEFmultApprox,\DEFadditiveError)$-leximin-preferred over it.}
For brevity, we use the term \emph{leximin approximation} to describe an approximately leximin-optimal solution.

This definition has some important properties.
Lemma \ref{lemma:absence-of-errors} proves that in the absence of errors ($\DEFmultError = \DEFadditiveError = 0$) it is equivalent to the exact leximin optimal definition. 
Then, Lemma \ref{lemma:beta1-beta2-approx} shows that an $(\DEFmultApprox_1,\DEFadditiveError_1)$-leximin-approximation is also an $(\DEFmultApprox_2,\DEFadditiveError_2)$-leximin-approximation when $0 \leq \DEFmultErrorOf{\DEFmultApprox_1} \leq  \DEFmultErrorOf{\DEFmultApprox_2} < 1$ and $0 \leq \DEFadditiveError_1 \leq \DEFadditiveError_2$.
Finally, Lemma \ref{lemma:exact-is-always-optimal} proves that a leximin optimal solution is also a leximin approximation for all factors.

\begin{lemma}\label{lemma:absence-of-errors}
 A solution is a $(1,0)$-leximin-approximation if and only if it is leximin optimal.
\end{lemma}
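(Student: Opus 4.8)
The plan is to prove the biconditional directly using the equivalence already established in Lemma \ref{lemma:approx-relation-prop1}, which states that $y \leximinPreferred x \iff y \alphaBetaPreferredParams{1}{0} x$ for all $x,y \in S$. Since a $(1,0)$-leximin-approximation is defined as a maximal element of the order $\alphaBetaPreferredParams{1}{0}$, and a leximin optimal solution is defined as a maximum element of the leximin order $\leximinPreferred$, the statement should reduce to observing that these two orders coincide as relations on $S$.

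First I would fix a solution $x \in S$ and unfold both definitions. Saying $x$ is a $(1,0)$-leximin-approximation means there is no $y \in S$ with $y \alphaBetaPreferredParams{1}{0} x$. Saying $x$ is leximin optimal means there is no $y \in S$ with $y \leximinPreferred x$. By Lemma \ref{lemma:approx-relation-prop1}, for any candidate $y \in S$ we have $y \alphaBetaPreferredParams{1}{0} x$ if and only if $y \leximinPreferred x$; hence the existence of a dominating $y$ under one order is equivalent to its existence under the other. Negating both sides gives the desired equivalence: $x$ is a $(1,0)$-leximin-approximation if and only if $x$ is leximin optimal.

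The one subtlety worth spelling out — and what I expect to be the only real point requiring care — is the distinction between a \emph{maximal} element (nothing is strictly preferred over it) and a \emph{maximum} element (it is weakly at least as good as everything). The excerpt defines ``leximin optimal'' as a \emph{maximum} element of the leximin order, but since the leximin order is a total order (as stated in the Preliminaries), on any nonempty set a maximal element is automatically a maximum element, so the two notions agree here; I would note this briefly. The approximate-leximin side is stated purely in terms of ``no solution is $(\DEFmultApprox,\DEFadditiveApprox)$-leximin-preferred over it,'' i.e.\ maximality, which is exactly the negated-existence form used above, so no issue arises there. With these observations the proof is a one-line application of Lemma \ref{lemma:approx-relation-prop1} together with the totality of $\leximinPreferred$, and no genuine obstacle remains.
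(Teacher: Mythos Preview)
Your proof is correct and follows essentially the same route as the paper: unfold both definitions and apply Lemma~\ref{lemma:approx-relation-prop1} pointwise to conclude that ``no $y$ with $y \alphaBetaPreferredParams{1}{0} x$'' is equivalent to ``no $y$ with $y \leximinPreferred x$.'' The additional care you take over maximal versus maximum is harmless but unnecessary here, since the paper explicitly defines a ``maximum element'' of the leximin order as a solution over which no solution is preferred---i.e., exactly the maximality condition you use---so both sides are already phrased as negated-existence statements and no appeal to totality is needed.
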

\begin{proof}
    %
    By definition, a solution $x^*$ is a $(1,0)$-leximin-approximation if and only if $x \nAlphaBetaPreferredParams{1}{0} x^*$ for any solution $x \in S$.
    This holds if and only if $x \nLeximinPreferred x^*$ for any solution $x \in S$ (by Lemma \ref{lemma:approx-relation-prop1}).
    Thus, by definition, $x^*$ is also leximin optimal.
\end{proof}

\begin{lemma}\label{lemma:beta1-beta2-approx}
    Let $0 \leq \DEFmultErrorOf{\DEFmultApprox_1} \leq  \DEFmultErrorOf{\DEFmultApprox_2} < 1$, $0 \leq \DEFadditiveError_1 \leq \DEFadditiveError_2$, and $x \in S$ be an $(\DEFmultApprox_1,\DEFadditiveError_1)$-leximin-approximation. Then $x$ is also an $(\DEFmultApprox_2,\DEFadditiveError_2)$-leximin-approximation.
\end{lemma}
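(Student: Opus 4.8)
The plan is to deduce this immediately from Observation~\ref{obs:approx-relation-prop2} by a contrapositive argument about maximal elements. Recall that $x$ being an $(\DEFmultApprox_1,\DEFadditiveError_1)$-leximin-approximation means precisely that no $y \in S$ satisfies $y \alphaBetaPreferredParams{\DEFmultApprox_1}{\DEFadditiveError_1} x$, and we must establish the analogous statement for the parameters $(\DEFmultApprox_2,\DEFadditiveError_2)$.

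Concretely, I would assume toward a contradiction that $x$ is \emph{not} an $(\DEFmultApprox_2,\DEFadditiveError_2)$-leximin-approximation; then by definition there is some $y \in S$ with $y \alphaBetaPreferredParams{\DEFmultApprox_2}{\DEFadditiveError_2} x$. Since we are given $0 \leq \DEFmultErrorOf{\DEFmultApprox_1} \leq \DEFmultErrorOf{\DEFmultApprox_2} < 1$ and $0 \leq \DEFadditiveError_1 \leq \DEFadditiveError_2$, Observation~\ref{obs:approx-relation-prop2} applies and yields $y \alphaBetaPreferredParams{\DEFmultApprox_1}{\DEFadditiveError_1} x$, contradicting the assumption that $x$ is an $(\DEFmultApprox_1,\DEFadditiveError_1)$-leximin-approximation. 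Equivalently, phrased through the relation sets: Observation~\ref{obs:approx-relation-prop2} gives $\relationSetParams{\DEFmultApprox_2}{\DEFadditiveError_2} \subseteq \relationSetParams{\DEFmultApprox_1}{\DEFadditiveError_1}$, so if there is no $y$ with $(y,x) \in \relationSetParams{\DEFmultApprox_1}{\DEFadditiveError_1}$, then a fortiori there is no $y$ with $(y,x) \in \relationSetParams{\DEFmultApprox_2}{\DEFadditiveError_2}$, which is exactly the required maximality.

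There is essentially no obstacle here; the lemma is a routine corollary of Observation~\ref{obs:approx-relation-prop2}, and the only thing to keep straight is the direction of the inequalities ($\DEFmultErrorOf{\DEFmultApprox_1} \leq \DEFmultErrorOf{\DEFmultApprox_2}$ is the same as $\tfrac{1}{\DEFmultApprox_1} \leq \tfrac{1}{\DEFmultApprox_2}$). If one preferred to avoid invoking the observation, the same conclusion follows directly from the definition of $\alphaBetaPreferred$: given a witness index $k$ for $y \alphaBetaPreferredParams{\DEFmultApprox_2}{\DEFadditiveError_2} x$, the inequalities $\valBy{j}{y} \geq \valBy{j}{x}$ for $j < k$ are independent of the error parameters, and from $\valBy{k}{y} > \tfrac{1}{\DEFmultApprox_2}\bigl(\valBy{k}{x} + \DEFadditiveError_2\bigr)$ together with $\tfrac{1}{\DEFmultApprox_1} \leq \tfrac{1}{\DEFmultApprox_2}$ and $\DEFadditiveError_1 \leq \DEFadditiveError_2$ we get $\valBy{k}{y} > \tfrac{1}{\DEFmultApprox_1}\bigl(\valBy{k}{x} + \DEFadditiveError_1\bigr)$, so the same $k$ certifies $y \alphaBetaPreferredParams{\DEFmultApprox_1}{\DEFadditiveError_1} x$, closing the contradiction.
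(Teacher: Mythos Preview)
Your proposal is correct and follows essentially the same approach as the paper: both use Observation~\ref{obs:approx-relation-prop2} (the containment $\relationSetParams{\DEFmultApprox_2}{\DEFadditiveError_2} \subseteq \relationSetParams{\DEFmultApprox_1}{\DEFadditiveError_1}$) to conclude that if no $y$ is $(\DEFmultApprox_1,\DEFadditiveError_1)$-preferred over $x$, then no $y$ is $(\DEFmultApprox_2,\DEFadditiveError_2)$-preferred over $x$. The paper phrases this directly rather than by contradiction, but the content is identical.
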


\begin{proof}
    Since $x$ is an $(\DEFmultApprox_1,\DEFadditiveError_1)$-leximin-approximation, by definition, $y \nAlphaBetaPreferredParams{\DEFmultApprox_1}{\DEFadditiveError_1} x$ for any solution $y \in S$.
    Observation \ref{obs:approx-relation-prop2} implies
        that
        $y \nAlphaBetaPreferredParams{\DEFmultApprox_2}{\DEFadditiveError_2} x$ 
    for any solution $y \in S$. This means, by definition, that $x$ is an $(\DEFmultApprox_2,\DEFadditiveError_2)$-leximin-approximation.
\end{proof}

\begin{lemma}\label{lemma:exact-is-always-optimal}
    Let $x^* \in S$ be a leximin optimal solution. Then $x^*$ is also an $(\DEFmultApprox,\DEFadditiveError)$-leximin-approximation for any $\DEFmultError \in [0,1)$  and $\DEFadditiveError \geq 0$.
\end{lemma}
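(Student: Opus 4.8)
The plan is to obtain the statement as an immediate consequence of the two preceding lemmas, by first handling the error-free case and then invoking monotonicity in the error parameters. Concretely, I would first apply Lemma~\ref{lemma:absence-of-errors}: since $x^*$ is leximin optimal, it is a $(1,0)$-leximin-approximation. Then, given arbitrary $\DEFmultError \in [0,1)$ and $\DEFadditiveError \geq 0$ (equivalently $\DEFmultApprox = 1 - \DEFmultError \in (0,1]$), I would set $\DEFmultErrorOf{\DEFmultApprox_1} = 0$, $\DEFadditiveError_1 = 0$, $\DEFmultErrorOf{\DEFmultApprox_2} = \DEFmultError$, $\DEFadditiveError_2 = \DEFadditiveError$, and observe that the hypotheses $0 \leq \DEFmultErrorOf{\DEFmultApprox_1} \leq \DEFmultErrorOf{\DEFmultApprox_2} < 1$ and $0 \leq \DEFadditiveError_1 \leq \DEFadditiveError_2$ of Lemma~\ref{lemma:beta1-beta2-approx} are satisfied precisely because $\DEFmultError < 1$ and $\DEFadditiveError \geq 0$. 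Applying Lemma~\ref{lemma:beta1-beta2-approx} then upgrades the $(1,0)$-leximin-approximation $x^*$ to an $(\DEFmultApprox,\DEFadditiveError)$-leximin-approximation, which is exactly the claim.

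Alternatively, one could argue directly and self-containedly: suppose toward a contradiction that some $y \in S$ satisfies $y \alphaBetaPreferred x^*$, witnessed by an index $k$. Then $\valBy{j}{y} \geq \valBy{j}{x^*}$ for all $j < k$, and $\valBy{k}{y} > \frac{1}{\DEFmultApprox}\bigl(\valBy{k}{x^*} + \DEFadditiveApprox\bigr) \geq \valBy{k}{x^*}$, where the last inequality uses $\DEFmultApprox \leq 1$, $\DEFadditiveApprox \geq 0$, and nonnegativity of the objective values. Taking the smallest index $k'$ at which $\valBy{k'}{y} > \valBy{k'}{x^*}$ (which exists and satisfies $k' \leq k$), we get $\valBy{j}{y} = \valBy{j}{x^*}$ for all $j < k'$ and $\valBy{k'}{y} > \valBy{k'}{x^*}$, i.e.\ $y \leximinPreferred x^*$, contradicting the leximin optimality of $x^*$. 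This is essentially the second direction in the proof of Lemma~\ref{lemma:approx-relation-prop1}, repackaged.

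I expect there to be no real obstacle here; the only point requiring a moment's care is confirming that the range $\DEFmultError \in [0,1)$ in the statement matches the strict inequality $\DEFmultErrorOf{\DEFmultApprox_2} < 1$ required by Lemma~\ref{lemma:beta1-beta2-approx} (it does), and that nonnegativity of utilities is what makes $\frac{1}{\DEFmultApprox}(\valBy{k}{x^*} + \DEFadditiveApprox) \geq \valBy{k}{x^*}$ hold in the direct argument. I would present the short chaining proof via Lemmas~\ref{lemma:absence-of-errors} and~\ref{lemma:beta1-beta2-approx} as the main proof, since it reuses machinery already established.
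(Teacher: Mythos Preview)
Your main proof is correct and is exactly the paper's argument: apply Lemma~\ref{lemma:absence-of-errors} to get that $x^*$ is a $(1,0)$-leximin-approximation, then invoke Lemma~\ref{lemma:beta1-beta2-approx} with $(\DEFmultApprox_1,\DEFadditiveError_1)=(1,0)$ and $(\DEFmultApprox_2,\DEFadditiveError_2)=(\DEFmultApprox,\DEFadditiveError)$. Your alternative direct argument is also valid but unnecessary here.
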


\begin{proof}
    By Lemma \ref{lemma:absence-of-errors}, $x^*$ is an $(1,0)$-leximin-approximation.
    Thus, according to Lemma \ref{lemma:beta1-beta2-approx}, $x^*$ is also an $(\DEFmultApprox,\DEFadditiveError)$-leximin-approximation for any $0 \leq \DEFmultErrorOf{\DEFmultApprox} < 1$ and $\DEFadditiveError \geq 0$.
    %
\end{proof}

Using the example given previously, we shall now demonstrate that as the error parameters $\DEFmultError$ and $\DEFadditiveError$ increase, the quality of the approximation decreases.
Consider table \ref{table:relationSets} once again.

If the corresponding relation set for $\DEFmultApprox$ and $\DEFadditiveApprox$ is the total order $\{(z,x),(z,y),(y,x)\}$, the only solution over which no other solution is $(\DEFmultApprox,\DEFadditiveApprox)$-leximin-preferred is $z$. Therefore, $z$ is the only $(\DEFmultApprox,\DEFadditiveApprox)$-leximin-approximation for these factors.
Indeed, it is the only group decision that maximizes the welfare of the agent with the smallest utility.

If the corresponding relation set is either $\{(z,x),(y,x)\}$ or $\{(y,x)\}$, as no solution is $(\DEFmultApprox,\DEFadditiveApprox)$-leximin-preferred over $z$ and $y$, both are $(\DEFmultApprox,\DEFadditiveApprox)$-leximin-approximations.
For example, for $\DEFmultApprox=0.5$ and $\DEFadditiveApprox=0$, $z$ still maximizes the utility of the poorest agent (2), and $y$ gives the poorest agent a utility of $1$, which is acceptable as it is half the maximum possible value $(2)$, and subject to giving the poorest agent at least $1$, maximizes the second-smallest utility ($40$). In contrast, while $x$, too, gives the poorest agent utility $1$, its second-smallest utility is $10$, which is less than half the maximum possible in this case ($40$), and therefore, $x$ is not a $(\DEFmultApprox,\DEFadditiveApprox)$-leximin-approximation.

Lastly, if  the relation set is the empty set, then no solution is $(\DEFmultApprox,\DEFadditiveApprox)$-leximin-preferred over the other, and all are $(\DEFmultApprox,\DEFadditiveApprox)$-leximin-approximations.

\paragraph{Egalitarian generalization} 
Another property of our definition is that a leximin-approximation also approximates the optimal egalitarian welfare to the same approximation factors. Formally: 

\begin{observation}\label{obs:rppeox-lex-generalizes-approx-egal}
    Let $x$ be an $(\DEFmultApprox, \DEFadditiveApprox)$-leximin-approximation. Then, the egalitarian value of $x$ (i.e., $\min_{i \in [n]} f_i(x) = \valBy{1}{x}$) is an $(\DEFmultApprox, \DEFadditiveApprox)$-approximation of the optimal egalitarian value (i.e., $\max_{y \in S}\min_{i \in [n]} f_i(y)$).
\end{observation}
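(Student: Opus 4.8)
The plan is to prove the contrapositive: if the egalitarian value of $x$ is \emph{not} an $(\DEFmultApprox,\DEFadditiveApprox)$-approximation of the optimal egalitarian value, then $x$ is not an $(\DEFmultApprox,\DEFadditiveApprox)$-leximin-approximation. So suppose $\valBy{1}{x} < \DEFmultApprox \cdot m_1 - \DEFadditiveApprox$, where $m_1 = \max_{y \in S}\min_{i\in[n]} f_i(y)$ is the optimal egalitarian value, and let $y^* \in S$ be a solution attaining it, so that $\valBy{1}{y^*} = \min_{i\in[n]} f_i(y^*) = m_1$.

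The key step is to observe that $y^* \alphaBetaPreferred x$, witnessed by $k = 1$. The condition ``$\valBy{j}{y^*} \geq \valBy{j}{x}$ for all $j < k$'' is vacuous when $k=1$, so nothing needs checking there. For the $k$'th coordinate, we need $\valBy{1}{y^*} > \frac{1}{\DEFmultApprox}\left(\valBy{1}{x} + \DEFadditiveApprox\right)$. By assumption, $\valBy{1}{x} < \DEFmultApprox \cdot \valBy{1}{y^*} - \DEFadditiveApprox$ (using $\valBy{1}{y^*} = m_1$), hence $\valBy{1}{x} + \DEFadditiveApprox < \DEFmultApprox \cdot \valBy{1}{y^*}$, and dividing by $\DEFmultApprox > 0$ gives exactly $\frac{1}{\DEFmultApprox}\left(\valBy{1}{x} + \DEFadditiveApprox\right) < \valBy{1}{y^*}$, as required. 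Thus $y^* \alphaBetaPreferred x$, so $x$ is not a maximal element of $\alphaBetaPreferred$, i.e., not a leximin-approximation — contradiction.

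There is essentially no obstacle here: the statement is almost immediate once one picks $k=1$ as the witness index and unwinds the definition of ``$(\DEFmultApprox,\DEFadditiveApprox)$-substantially-higher'' for the smallest coordinate. The only mild point of care is the direction of the inequality manipulation and the fact that $\DEFmultApprox > 0$ guarantees the division is legitimate; also one should note $\valBy{1}{x} = \min_{i\in[n]} f_i(x)$ and $\valBy{1}{y^*} = \min_{i\in[n]} f_i(y^*)$ by definition of the ordered-outcomes notation, so the egalitarian value is literally the first coordinate of the sorted vector. Since $x$ being a leximin-approximation means no solution is $(\DEFmultApprox,\DEFadditiveApprox)$-leximin-preferred over it, exhibiting the single preferred solution $y^*$ completes the contrapositive.
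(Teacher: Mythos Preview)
Your proof is correct and is exactly the natural argument. The paper does not actually give a proof of this observation --- it states it and moves on --- so there is nothing to compare against; your contrapositive via the witness $k=1$ is precisely what the reader is expected to supply. One minor technicality: you silently assume the optimal egalitarian value $m_1$ is attained by some $y^*\in S$; the paper writes $\max$ rather than $\sup$, so this is consistent with their setup, but if one wanted to be fully careful it suffices to take any $y$ with $\valBy{1}{y}$ close enough to $m_1$ that the strict inequality $\valBy{1}{x} < \DEFmultApprox\cdot\valBy{1}{y} - \DEFadditiveApprox$ still holds.
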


\section{Approximation Algorithm}\label{sec:algo-short}
We now present an algorithm for computing a leximin approximation. 
The algorithm is an adaptation of one of the algorithms of Ogryczak
and {\'{S}}liwi{\'{n}}ski~\cite{Ogryczak_2006} for finding exact leximin optimal solutions. 
%

\subsection{Preliminary: exact leximin-optimal solution}
Following the definition of leximin, the core algorithm for finding a leximin optimal solution is iterative, wherein one first maximizes the least objective function, then the second, and so forth. 
In each iteration, 
$t=1,\ldots,n$, 
it looks for the value that maximizes the $t$-th smallest objective, $z_t$, given that for any $i < t$ the $i$-th smallest objective is at least $z_i$ (the value that was computed in the $i$-th iteration).
The core, single-objective optimization problem is thus:
\begin{align}
 \max &&&\ztVar{x}   \tag{\progBasic}\label{eq:basic-OP}\\
        s.t. &&& (\text{\progBasic.1}) \Hquad x \in S \nonumber\\
              &&& (\text{\progBasic.2}) \Hquad \valBy{\ell}{x}\geq z_{\ell} & \forXinY{\ell}{t-1} \nonumber \\
               &&& (\text{\progBasic.3}) \Hquad \valBy{t}{x} \geq \ztVar{x} \nonumber   
\end{align} 
where the variables are the scalar $\ztVar{x}$ and the vector $x$, whereas $z_1, \ldots z_{t-1}$ are constants (computed in previous iterations).

Suppose we are given a procedure $\textsf{OP}(z_1,\ldots,z_{t-1})$, which, given  $z_1,\ldots,z_{t-1},$ outputs $(x,z_t)$ that is the exact optimal solution to \eqref{eq:basic-OP}.  
Then, the \emph{leximin} optimal solution is obtained by iterating this process for $t=1,\ldots,n$, as described in Algorithm \ref{alg:basic-ordered-Outcomes}.
\displayEcai{
The algorithm first maximizes the smallest objective $\valBy{1}{x}$, and puts the result in $z_1$.
Then 
it maximizes the second-smallest objective $\valBy{2}{x}$,
subject to $\valBy{1}{x}$ being at least $z_1$, and puts the result in $z_2$; and so on.
}
\begin{algorithm}[!tbp]
\caption{The Ordered Outcomes Algorithm}
\label{alg:basic-ordered-Outcomes}
\begin{algorithmic}[1] 
\FOR{$t=1$ to $n$}
\STATE \( 
(x_t,z_t)\leftarrow \textsf{OP}(z_1,\ldots,z_{t-1})
\) 
\ENDFOR
\STATE \textbf{return} {$x_n$ (with objective values $f_1(x_n),\ldots,f_n(x_n)$)}.
\end{algorithmic}
\end{algorithm}

Since constraints (\progBasic.2) and (\progBasic.3) are not linear  with respect to the objective-functions, it is difficult to solve the program \eqref{eq:basic-OP} as is. 
Thus, \cite{Ogryczak_2006} suggests a way to ``linearize`` the program in two steps. 
%
%
\displayEcai{

}
First,
we replace \eqref{eq:basic-OP} with the following program, that considers sums instead of individual values (where again the variables are $\ztVar{x}$ and $x$):
\begin{align*}
\max &&&\ztVar{x} \tag{\progSums}\label{eq:sums-OP}\\
s.t. &&& (\text{\progSums.1}) \quad x \in S \\
&&& (\text{\progSums.2}) \quad \sum_{i \in F'} f_i(x) \geq \sum_{i=1}^{|F'|}  z_i && \forall F' \subseteq [n], \Hquad |F'| < t \\
&&& (\text{\progSums.3}) \quad \sum_{i \in F'} f_i(x) \geq \sum_{i=1}^{t-1}  z_i + z_t  && \forall F' \subseteq [n], \Hquad |F'| = t
\end{align*}
Here, constraints (\progBasic.2) and (\progBasic.3) are replaced with constraints (\progSums.2) and (\progSums.3), respectively. 
Constraint (\progSums.2) says that for any $\ell<t$, the sum of any $\ell$ objectives is at least the sum of the first $\ell$ constants $z_i$
(equivalently: the sum of the smallest $\ell$ objectives is at least the sum of the first $\ell$ constants $z_i$\footnote{A formal proof of this claim is given in Appendix \appendixName{\ref{sub:equivalence-P2-P3}}{B.2}\fullVer.}). 
Similarly, (\progSums.3) says that the sum of any $t$ objectives (equivalently: the sum of the smallest $t$ objectives) is at least the sum of the first $t-1$ constants  $z_i$, plus the variable $z_t$.

\displayEcai{
Suppose \eqref{eq:basic-OP}
is replaced with \eqref{eq:sums-OP}
in Algorithm\ref{alg:basic-ordered-Outcomes}.
Then, in the first iteration, the algorithm still maximizes the smallest objective $\valBy{1}{x}$, and puts the result in $z_1$.
In the second iteration, it maximizes the \emph{difference} between the sum of the two smallest objectives $\valBy{1}{x}+\valBy{2}{x}$ and $z_1$,
subject to $\valBy{1}{x}$ being at least $z_1$, and puts the result in $z_2$.
Since $z_1$ is the maximum value of $\valBy{1}{x}$, being at least $z_1$ becomes being exactly $z_1$, which means that, as 
 was for \eqref{eq:basic-OP}, the algorithm actually maximizes $\valBy{2}{x}$, subject to $\valBy{1}{x}$ being at least $z_1$.
%
%
%
%
Similarly for any iteration $1 <t\leq n$, as the sum $\sum_{i=1}^{\ell} z_i$ is the maximum value of $\sum_{i=1}^{\ell} \valBy{i}{x}$ for all $1\leq \ell < t$, it can be concluded that the algorithm actually maximizes $\valBy{t}{x}$, subject to $\valBy{\ell}{x}$ being at least $z_{\ell}$ for $1\leq \ell < t$ (as if \eqref{eq:basic-OP} was used).
Accordingly, the algorithm still finds a leximin-optimal solution.
}



While \eqref{eq:sums-OP} is linear with respect to the objective-functions, it has an exponential number of constraints.
To overcome this challenge, auxiliary variables were used in the second program ($y_{\ell}$ and $m_{\ell,j}$ for all $1 \leq \ell \leq t$ and $1 \leq  j\leq n$):
\begin{align}
\max &&& \ztVar{x} \tag{\progLinear}\label{eq:vsums-OP} \\
s.t. &&& (\text{\progLinear.1}) \Hquad x \in S \nonumber  \\
&&& (\text{\progLinear.2}) \Hquad \ell y_{\ell} - \sum_{j=1}^n m_{\ell,j}\geq \sum_{i=1}^{\ell}  z_i && \forXinY{\ell}{t-1} \nonumber \\
&&& (\text{\progLinear.3}) \Hquad t y_t - \sum_{j=1}^{n} m_{t,j} \geq \sum_{i=1}^{t-1}  z_i + z_t \nonumber \\
&&& (\text{\progLinear.4}) \Hquad m_{\ell,j} \geq y_{\ell} - f_j(x)  && \forXinY{\ell}{t},\Hquad \forXinY{j}{n} \nonumber \\
&&& (\text{\progLinear.5}) \Hquad m_{\ell,j} \geq 0  && \forXinY{\ell}{t},\Hquad \forXinY{j}{n} \nonumber
\end{align}
The meaning of the auxiliary variables in  \eqref{eq:vsums-OP} is explained in the proof of Lemma \ref{lem:equivalence} below.

The importance of the programs \eqref{eq:sums-OP} and \eqref{eq:vsums-OP} for leximin is shown by the following theorem (that combines  Theorem 4 in~\cite{Ogryczak2004TelecommunicationsND} and Theorem 1 in~\cite{Ogryczak_2006}):\begin{theorem*}
If Algorithm \ref{alg:basic-ordered-Outcomes} is applied with a solver for \eqref{eq:sums-OP} or \eqref{eq:vsums-OP}\footnote{If the algorithm uses a solver for \eqref{eq:vsums-OP}, it takes only the assignment of the variables $x$ and $z_t$
, ignoring the auxiliary variables.}
(instead of for \eqref{eq:basic-OP}), the algorithm still outputs a leximin-optimal solution. 
\end{theorem*}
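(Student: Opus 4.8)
The plan is to reduce the statement to the (already asserted) fact that Algorithm~\ref{alg:basic-ordered-Outcomes} equipped with an \emph{exact} solver for \eqref{eq:basic-OP} returns a leximin-optimal solution. Concretely, I would show that running the algorithm with a solver for \eqref{eq:sums-OP}, or with a solver for \eqref{eq:vsums-OP} (keeping only the $x$- and $z_t$-components of its output, as in the footnote), produces at every iteration $t$ exactly the same pair $(x_t,z_t)$ as running it with an exact solver for \eqref{eq:basic-OP}. This splits into two equivalences, handled in turn: \eqref{eq:vsums-OP} $\equiv$ \eqref{eq:sums-OP} (in the variables $(x,z_t)$), and \eqref{eq:sums-OP} $\equiv$ \eqref{eq:basic-OP} along the run of the algorithm.

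For \eqref{eq:sums-OP} $\equiv$ \eqref{eq:basic-OP}, the starting observation is the identity $\sum_{i=1}^{\ell}\valBy{i}{x}=\min_{F'\subseteq[n],\,|F'|=\ell}\sum_{i\in F'}f_i(x)$: the cheapest $\ell$-subset is the set of the $\ell$ smallest objective values. Hence quantifying (\progSums.2) over all $F'$ of size $\ell$ is the same as imposing $\sum_{i=1}^{\ell}\valBy{i}{x}\ge\sum_{i=1}^{\ell}z_i$ for every $\ell<t$, and likewise (\progSums.3) becomes $\sum_{i=1}^{t}\valBy{i}{x}\ge\sum_{i=1}^{t-1}z_i+z_t$. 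I would then argue by induction on $t$ that, along the run of Algorithm~\ref{alg:basic-ordered-Outcomes} with a \eqref{eq:sums-OP}-solver: the constants $z_1,\dots,z_{t-1}$ agree with those produced by the \eqref{eq:basic-OP}-run; each $z_\ell$ ($\ell<t$) satisfies $z_\ell=\max\{\valBy{\ell}{x}:x\in S,\ \valBy{j}{x}=z_j\text{ for all }j<\ell\}$; and the feasible region of the $x$-variable at iteration $t$ is exactly $\{x\in S:\valBy{\ell}{x}=z_\ell\text{ for all }\ell<t\}$. The base case is immediate since (\progSums.2) is vacuous. In the inductive step, the hypothesis makes the constraints for $\ell\le t-2$ force $\valBy{\ell}{x}=z_\ell$; substituting these into the remaining inequality $\sum_{i=1}^{t-1}\valBy{i}{x}\ge\sum_{i=1}^{t-1}z_i$ collapses it to $\valBy{t-1}{x}\ge z_{t-1}$, and since $z_{t-1}$ was the \emph{maximum} of $\valBy{t-1}{x}$ over that very region, equality $\valBy{t-1}{x}=z_{t-1}$ is forced; finally (\progSums.3) reduces to $z_t\le\valBy{t}{x}$, so maximizing $z_t$ is maximizing $\valBy{t}{x}$ over $\{x\in S:\valBy{\ell}{x}=z_\ell,\ \ell<t\}$, which is precisely iteration $t$ of the \eqref{eq:basic-OP}-run (there the constraint $\valBy{\ell}{x}\ge z_\ell$ of (\progBasic.2) tightens to $\valBy{\ell}{x}=z_\ell$ by the maximality of $z_\ell$). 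This closes the induction, and since the \eqref{eq:basic-OP}-run returns a leximin-optimal solution, so does the \eqref{eq:sums-OP}-run.

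For \eqref{eq:vsums-OP} $\equiv$ \eqref{eq:sums-OP}, I would use the fact underlying Lemma~\ref{lem:equivalence}: for any fixed $x$, the auxiliary variables LP-encode $\sum_{i=1}^{\ell}\valBy{i}{x}$, in the sense that $\max\{\ell y_\ell-\sum_{j=1}^n m_{\ell,j}:m_{\ell,j}\ge y_\ell-f_j(x),\ m_{\ell,j}\ge 0\}=\sum_{i=1}^{\ell}\valBy{i}{x}$, with optimum attained at $y_\ell=\valBy{\ell}{x}$ and $m_{\ell,j}=\max\{0,\valBy{\ell}{x}-f_j(x)\}$. Therefore, for a fixed $(x,z_t)$, auxiliary values satisfying (\progLinear.2), (\progLinear.4), (\progLinear.5) exist iff $\sum_{i=1}^{\ell}\valBy{i}{x}\ge\sum_{i=1}^{\ell}z_i$ for all $\ell<t$, i.e.\ iff (\progSums.2) holds, and (\progLinear.3) is satisfiable (given (\progLinear.4), (\progLinear.5) for $\ell=t$) iff (\progSums.3) holds. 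Projecting out the auxiliaries thus makes \eqref{eq:vsums-OP} and \eqref{eq:sums-OP} coincide in feasible set and objective, so an optimal $(x,z_t)$ of one is optimal for the other, and the \eqref{eq:vsums-OP}-run again coincides with the \eqref{eq:basic-OP}-run.

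I expect the inductive step of the \eqref{eq:sums-OP} $\equiv$ \eqref{eq:basic-OP} part to be the main obstacle: a priori the subset-sum constraints (\progSums.2) are strictly weaker than the per-coordinate constraints (\progBasic.2), so the crux is showing that, combined with the maximality of every previously computed $z_\ell$, they nonetheless pin each $\valBy{\ell}{x}$ down to exactly $z_\ell$; one also has to check carefully that the ``$\ge z_\ell$'' feasible region of the \eqref{eq:basic-OP}-run coincides with the ``$=z_\ell$'' region, again by maximality, so that the two runs really agree constant-for-constant. The two identities (the min-subset-sum one and the LP-duality computation behind Lemma~\ref{lem:equivalence}) are routine by comparison.
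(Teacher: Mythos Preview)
Your argument is correct, but it takes a different route from the paper's formal proof. The paper derives the statement as the degenerate case $\multApprox=1$, $\additiveApprox=0$ of Theorem~\ref{th:main}: an exact solver is in particular a $(1,0)$-approximate solver, so Algorithm~\ref{alg:basic-ordered-Outcomes} returns a $\bigl(\tfrac{1}{1-1+1},\tfrac{0}{1-1+1}\bigr)=(1,0)$-leximin-approximation, which by Lemma~\ref{lemma:absence-of-errors} is leximin-optimal. Thus the paper never directly compares the \eqref{eq:basic-OP}-run and the \eqref{eq:sums-OP}-run; it instead argues by contradiction (Appendix~\ref{sub:th:main}) that no solution can be $(\multApprox',\additiveApprox')$-preferred to the output.

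Your approach, by contrast, shows inductively that the two runs are \emph{identical}: the sum constraints (\progSums.2), combined with the maximality of each previously computed $z_\ell$, pin every $\valBy{\ell}{x}$ down to $z_\ell$, so the feasible region and the objective at iteration $t$ coincide with those of \eqref{eq:basic-OP}. This is exactly the informal explanation the paper gives just after introducing \eqref{eq:sums-OP}, and is presumably how the original references \cite{Ogryczak2004TelecommunicationsND,Ogryczak_2006} argue. It is more elementary and self-contained---it does not need Lemmas~\ref{lemma:beta-vk}--\ref{lemma:fk-to-all} or the contradiction scaffolding of Theorem~\ref{th:main}---but it is specific to the exact case: the ``collapse to equality'' step $\valBy{\ell}{x}=z_\ell$ relies on $z_\ell$ being the \emph{exact} maximum and breaks as soon as the solver is only approximate. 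The paper's route, while heavier, gets the exact statement for free once the approximation theorem is in hand.
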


We shall later see that our main result (Theorem \ref{th:main}) extends and implies their theorem.

\subsection{Using an approximate solver}
Now we assume that, instead of an exact solver in Algorithm \ref{alg:basic-ordered-Outcomes}, we only have an approximate solver. 
In this case, the constants $z_1,\ldots,z_{t-1}$ are only approximately-optimal solutions for the previous iterations.
It is easy to see that if \textsf{OP} is an $(\multApprox,\additiveApprox)$-approximation algorithm
to \eqref{eq:basic-OP}, then Algorithm \ref{alg:basic-ordered-Outcomes} outputs an $(\multApprox,\additiveApprox)$-leximin-approximation\footnote{A formal proof is given in Appendix \appendixName{\ref{sub:approximate-P1}}{B.1}\fullVer.}.
%

In contrast, for \eqref{eq:sums-OP} and \eqref{eq:vsums-OP}, we shall see that  Algorithm \ref{alg:basic-ordered-Outcomes} may output a solution that is \emph{not} an $(\multApprox,\additiveApprox)$-leximin-approximation.
However, we will prove that it is not too far from that --- in this case, the output is always an $\left(\frac{\multApprox^2}{1-\multApprox + \multApprox^2}, \frac{\additiveApprox}{1-\multApprox +\multApprox^2}\right)$-leximin-approximation.

To demonstrate both claims more clearly, we start by proving that the programs \eqref{eq:sums-OP} and \eqref{eq:vsums-OP} are \emph{equivalent} in the following sense:
\begin{lemma}
\label{lem:equivalence}
    Let $1 \leq t \leq n$ and let $z_1, \ldots z_{t-1} \in \mathbb{R}$.
    Then, $(x, z_t)$ is feasible for \eqref{eq:sums-OP} if and only if there exist $y_{\ell}$ and $m_{\ell,j}$ for $1 \leq \ell \leq t$ and $1 \leq j \leq n$ such that $\left(x, z_t, (y_1, \ldots, y_t), (m_{1,1}, \ldots m_{t,n})\right)$ is feasible for \eqref{eq:vsums-OP}.
\end{lemma}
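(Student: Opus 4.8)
The plan is to prove the two directions of the equivalence separately, using the familiar trick that the sum of the smallest $\ell$ coordinates of a vector equals the optimum of a small LP. The key observation is this: for any vector $(f_1(x),\ldots,f_n(x))$ and any $\ell$, we have
\[
\sum_{i=1}^{\ell}\valBy{i}{x} \;=\; \min_{F'\subseteq[n],\,|F'|=\ell}\sum_{i\in F'}f_i(x) \;=\; \max_{y\in\mathbb{R}}\Bigl(\ell y - \sum_{j=1}^n \max(0,\,y-f_j(x))\Bigr),
\]
where the last equality is the LP-duality / thresholding identity: setting $y$ equal to $\valBy{\ell}{x}$ makes $\ell y$ count each of the $\ell$ smallest values plus a surplus, and the subtracted terms $\max(0,y-f_j(x))$ exactly cancel the surplus on the smallest $\ell$ coordinates while contributing nothing on the rest; any other choice of $y$ does weakly worse. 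The variable $m_{\ell,j}$ in \eqref{eq:vsums-OP} plays the role of $\max(0,y_\ell - f_j(x))$, which is exactly what constraints (\progLinear.4) and (\progLinear.5) force from below.

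For the direction ``\eqref{eq:sums-OP} feasible $\Rightarrow$ \eqref{eq:vsums-OP} feasible'': given a feasible $(x,z_t)$ for \eqref{eq:sums-OP}, I would explicitly set $y_\ell := \valBy{\ell}{x}$ and $m_{\ell,j} := \max(0,\,y_\ell - f_j(x))$ for all $1\le\ell\le t$, $1\le j\le n$. Constraints (\progLinear.1), (\progLinear.4), (\progLinear.5) then hold by construction. For (\progLinear.2) and (\progLinear.3), the identity above gives $\ell y_\ell - \sum_j m_{\ell,j} = \sum_{i=1}^{\ell}\valBy{i}{x}$, and since $(x,z_t)$ satisfies (\progSums.2)/(\progSums.3) — using that the sum over \emph{any} $\ell$-subset being large is equivalent to the sum over the \emph{smallest} $\ell$ being large (the footnoted claim proved in Appendix B.2) — we get exactly $\sum_{i=1}^{\ell}\valBy{i}{x}\ge\sum_{i=1}^{\ell}z_i$ for $\ell<t$ and $\ge\sum_{i=1}^{t-1}z_i+z_t$ for $\ell=t$. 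This is precisely (\progLinear.2) and (\progLinear.3).

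For the direction ``\eqref{eq:vsums-OP} feasible $\Rightarrow$ \eqref{eq:sums-OP} feasible'': given feasible $\bigl(x,z_t,(y_\ell),(m_{\ell,j})\bigr)$, constraints (\progLinear.4) and (\progLinear.5) give $m_{\ell,j}\ge\max(0,y_\ell-f_j(x))$, hence $\ell y_\ell - \sum_j m_{\ell,j}\le \ell y_\ell - \sum_j\max(0,y_\ell-f_j(x))\le \sum_{i=1}^{\ell}\valBy{i}{x}$, where the last step is the ``$\ge$-over-all-$y$'' half of the identity (which holds for \emph{every} real $y_\ell$, no optimality needed). Chaining this with (\progLinear.2) resp. (\progLinear.3) yields $\sum_{i=1}^{\ell}\valBy{i}{x}\ge\sum_{i=1}^{\ell}z_i$ for $\ell<t$ and $\sum_{i=1}^{t}\valBy{i}{x}\ge\sum_{i=1}^{t-1}z_i+z_t$, and then invoking the smallest-$\ell$-versus-any-$\ell$ equivalence again recovers (\progSums.2) and (\progSums.3); (\progSums.1) is immediate. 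I expect the main obstacle to be stating and applying the thresholding identity cleanly — in particular being careful that only the \emph{weak} inequality $\ell y - \sum_j\max(0,y-f_j(x)) \le \sum_{i=1}^\ell\valBy{i}{x}$ is needed in the second direction (true for all $y$), while \emph{equality} at $y=\valBy{\ell}{x}$ is what is used in the first direction — and in handling the bookkeeping between ``sum over an arbitrary $\ell$-set'' and ``sum over the smallest $\ell$'', which is why the paper isolates that as a separate appendix claim.
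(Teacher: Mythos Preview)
Your proposal is correct and follows essentially the same route as the paper. The paper factors the argument through an explicit intermediate program \eqref{eq:compact-OP} whose constraints use $\sum_{i=1}^{\ell}\valBy{i}{x}$ directly, proving \eqref{eq:sums-OP} $\Leftrightarrow$ \eqref{eq:compact-OP} $\Leftrightarrow$ \eqref{eq:vsums-OP}; but the two nontrivial steps are exactly your thresholding identity (the paper proves the equality at $y_\ell=\valBy{\ell}{x}$ as Equation~\eqref{eq:comp-to-p3} and the weak inequality for arbitrary $y_\ell$ as Equation~\eqref{eq:p3-to-comp}), with the same choice $y_\ell:=\valBy{\ell}{x}$, $m_{\ell,j}:=\max(0,y_\ell-f_j(x))$, and the same ``any $\ell$-subset versus smallest $\ell$'' lemma you invoke.
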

The proof  is provided  in Appendix \appendixName{\ref{sub:equivalence-P2-P3}}{B.2}\fullVer.
\displayEcai{

}
Since both  \eqref{eq:sums-OP} and \eqref{eq:vsums-OP} have the same objective function ($\max z_t$), the lemma implies that $(x,z_t)$ is an $(\alpha,\epsilon)$-approximate solution for \eqref{eq:sums-OP} if and only if $(x,z_t)$ is a part of an $(\alpha,\epsilon)$-approximate solution for \eqref{eq:vsums-OP}.
Thus, it is sufficient to prove the theorems for only one of the problems. 
We will prove them for \eqref{eq:sums-OP}.

\begin{theorem}
There exist $\multApprox\in (0,1]$, $\additiveApprox \geq 0$ and \textsf{OP} that is an $(\multApprox,\additiveApprox)$-approximation procedure to \eqref{eq:sums-OP}, such that if Algorithm \ref{alg:basic-ordered-Outcomes} is applied with this procedure, it might return a solution that is not 
an $(\multApprox,\additiveApprox)$-leximin-approximation.
\end{theorem}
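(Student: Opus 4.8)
The plan is to exhibit a concrete two‑objective instance, together with a legitimate $(\multApprox,\additiveApprox)$‑approximate solver \textsf{OP} for \eqref{eq:sums-OP}, on which Algorithm~\ref{alg:basic-ordered-Outcomes} returns a solution $x$ such that some other feasible solution $y$ satisfies $y \alphaBetaPreferred x$; by definition $x$ is then not an $(\multApprox,\additiveApprox)$‑leximin‑approximation. The whole difficulty is to make the \emph{second} iteration produce the damage, since (as explained at the end) the error incurred in the first iteration alone can never be enough.

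\emph{The instance.} Take $\multApprox=\tfrac12$, $\additiveApprox=0$, $n=2$, $f_1(x)=x_1$, $f_2(x)=x_2$, and $S=\{(1,1),(1,B)\}$ for a constant $B$ in the window $2<B\le\tfrac52$ (for general $\multApprox\in(0,1],\additiveApprox\ge0$ with $\multApprox-\additiveApprox\ge 0$ one takes $m_1=1$ and $B\in\bigl(\tfrac{1+\additiveApprox}{\multApprox},\ \tfrac{(2-2\multApprox+\multApprox^2)+(2-\multApprox)\additiveApprox}{\multApprox}\bigr]$, a window that is nonempty whenever $\multApprox<1$ or $\additiveApprox>0$). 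The optimal egalitarian value is $1$, attained by both solutions, and the leximin‑optimal solution is $y:=(1,B)$. Now run Algorithm~\ref{alg:basic-ordered-Outcomes} with \textsf{OP} defined on the inputs it actually generates as follows. In iteration $t=1$, \eqref{eq:sums-OP} maximizes $z_1$ subject only to $f_i(x)\ge z_1$ for all $i$, with optimum $1$; hence \textsf{OP} may output $z_1=\multApprox\cdot 1-\additiveApprox=\tfrac12$ together with any feasible $x_1$, a valid $(\multApprox,\additiveApprox)$‑answer. In iteration $t=2$, with $z_1=\tfrac12$ now a constant, \eqref{eq:sums-OP} maximizes $z_2$ subject to $f_i(x)\ge z_1$ (the $|F'|=1$ constraints) and $f_1(x)+f_2(x)\ge z_1+z_2$ (the $|F'|=2$ constraint); the maximum coordinate‑sum over $S$ is $1+B$, attained at $(1,B)$, so the optimum is $z_2^\ast=1+B-z_1$. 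Setting $z_2:=\multApprox z_2^\ast-\additiveApprox$, the upper endpoint of the window for $B$ is exactly the inequality $z_1+z_2\le 2$, which states that $(1,1)$ — whose coordinate sum is $2$ — satisfies $f_1+f_2\ge z_1+z_2$ and is therefore feasible for \eqref{eq:sums-OP} in this iteration; and $z_2$ meets the approximation bound by construction. Thus \textsf{OP} may return $\bigl((1,1),z_2\bigr)$, and Algorithm~\ref{alg:basic-ordered-Outcomes} outputs $x=(1,1)$.

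\emph{The output is not approximately leximin‑optimal.} We have $\sortedValues{x}=(1,1)$, while $y=(1,B)\in S$ has $\valBy{1}{y}=1\ge 1=\valBy{1}{x}$ and, because $B$ lies above the lower endpoint of the window, $\valBy{2}{y}=B>\tfrac{1}{\multApprox}\bigl(\valBy{2}{x}+\additiveApprox\bigr)=2$. Hence $y\alphaBetaPreferred x$ with $k=2$, so $x$ is not a maximal element of $\alphaBetaPreferred$ — i.e.\ not an $(\multApprox,\additiveApprox)$‑leximin‑approximation. (For a sanity check against the positive result, Theorem~\ref{th:main} predicts $x$ is a $\bigl(\tfrac{\multApprox^2}{1-\multApprox+\multApprox^2},0\bigr)$‑leximin‑approximation, i.e.\ a $(\tfrac13,0)$‑one here, and indeed $B=\tfrac52\not>3\cdot 1$.) By Lemma~\ref{lem:equivalence} the same instance witnesses the claim for \eqref{eq:vsums-OP} as well; and it is instructive to contrast this with \eqref{eq:basic-OP}, where iteration $2$ maximizes $\valBy{2}{x}$ directly subject to $\valBy{1}{x}\ge z_1$ and is therefore forced to return $(1,B)$.

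\emph{Main obstacle.} The delicate point is the key inequality in the second paragraph: one must verify that the solution we want \textsf{OP} to return in the last iteration is \emph{simultaneously} feasible for \eqref{eq:sums-OP} given the $z_1$ produced earlier and consistent with the $(\multApprox,\additiveApprox)$‑guarantee on $z_2$ — this is the source of the window for $B$. It also explains why the seemingly easier route fails: an \textsf{OP} that reports $z_1=\multApprox m_1-\additiveApprox$ still forces $\tfrac1\multApprox(\valBy{1}{x}+\additiveApprox)\le m_1$, so no feasible solution can beat $x$ on the smallest coordinate, and the loss must be routed into the second coordinate through the coordinate‑sum constraint of \eqref{eq:sums-OP} — which is exactly what distinguishes \eqref{eq:sums-OP} from \eqref{eq:basic-OP}.
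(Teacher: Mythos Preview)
Your proof is correct and follows essentially the same strategy as the paper's: construct a two-objective instance and an $(\multApprox,\additiveApprox)$-approximate \textsf{OP} that under-reports $z_1$, so that in iteration $2$ a solution with a too-small second coordinate becomes a legitimate approximate output. The paper uses the continuous region $\{x_1\le 100,\ x_1+x_2\le 200,\ x\in\mathbb{R}^2_+\}$ with $\multApprox=0.9$ and returns $(94.5,94.5)$, beaten by $(94.5,105.5)$; your two-point set $S=\{(1,1),(1,B)\}$ with $\multApprox=\tfrac12$ is a cleaner discrete variant of the same mechanism. One small slip in your parenthetical generalization: the window length works out to $\frac{(1-\multApprox)(1-\multApprox+\additiveApprox)}{\multApprox}$, which is positive iff $\multApprox<1$, so the clause ``or $\additiveApprox>0$'' should be dropped (consistently with Theorem~\ref{th:main}, which gives an exact $(\multApprox,\additiveApprox)$-approximation when $\multApprox=1$).
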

\begin{proof}
Consider the following multi-objective optimization problem with $n=2$:
\begin{align*}
    \max \quad &\{f_1(x) := x_1, f_2(x) := x_2\} \\
    s.t. \quad  & (1.1) \Hquad x_1 \leq 100, \quad (1.2) \Hquad x_1 + x_2 \leq 200, \quad (1.3) \Hquad x \in \mathbb{R}^2_{+}
\end{align*}
In the corresponding \eqref{eq:sums-OP}, constraint (\progSums.1) will be replaced with constraints (1.1)-(1.3).
The following is a possible run of the algorithm with \textsf{OP} that is a $(0.9,0)$-approximate solver.
\displayComsoc{
In iteration $t=1$, condition (\progSums.2) is empty, and the optimal value of $z_1$ is $100$, so \textsf{OP} may output $z_1=0.9\cdot 100 = 90$.
In iteration $t=2$, given  $z_1=90$, 
condition (\progSums.2) says that each of $x_1$ and $x_2$ must be at least $90$;
the optimal value of $z_2$ under these constraints is $110$, so \textsf{OP} may output $z_2=99$, for example with  $x_1=x_2=94.5$.  
Since $n=2$, the algorithm ends 
and returns the solution $(94.5,94.5)$.
}
\displayEcai{
\begin{itemize}
    \item In iteration $t=1$, condition (\progSums.2) is empty, and the optimal value of $z_1$ is $100$, so \textsf{OP} may output $z_1=0.9\cdot 100 = 90$.  
    \item
    In iteration $t=2$, given  $z_1=90$, 
condition (\progSums.2) says that each of $x_1$ and $x_2$ must be at least $90$;
the optimal value of $z_2$ under these constraints is $110$, so \textsf{OP} may output $z_2=99$, for example with  $x_1=x_2=94.5$.  
\item Since $n=2$, the algorithm ends 
and returns the solution $(94.5,94.5)$.
\end{itemize}
}
But $(x_1,x_2) = (94.5,105.5)$ is also a feasible solution, and it is $(0.9,0)$-leximin-preferred
since $105.4>\frac{1}{0.9}\cdot 94.5 = 105$.
Hence, the returned solution is \emph{not} a $(0.9,0)$-leximin-approximation.
\end{proof}

Note that, while the above solution is not a $(0.9,0)$-leximin-approximation, it is for $\multApprox = 0.896$.
Our main theorem below shows that this is not a coincidence:
using an approximate solver to \eqref{eq:sums-OP} or \eqref{eq:vsums-OP} in Algorithm \ref{alg:basic-ordered-Outcomes} guarantees a non-trivial leximin approximation.

\begin{theorem}\label{th:main}
Let $\multApprox\in (0,1]$, $\additiveApprox \geq 0$, and \textsf{OP} be an $(\multApprox,\additiveApprox)$-approximation procedure to \eqref{eq:sums-OP} or \eqref{eq:vsums-OP}. Then Algorithm \ref{alg:basic-ordered-Outcomes} outputs an $\left(\frac{\multApprox^2}{1-\multApprox + \multApprox^2}, \frac{\additiveApprox}{1-\multApprox +\multApprox^2}\right)$-leximin-approximation.  
\end{theorem}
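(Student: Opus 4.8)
The plan is to reduce to program \eqref{eq:sums-OP} via Lemma~\ref{lem:equivalence} (an $(\multApprox,\additiveApprox)$-approximate solution of \eqref{eq:sums-OP} is part of one for \eqref{eq:vsums-OP}, so treating \eqref{eq:sums-OP} suffices), run Algorithm~\ref{alg:basic-ordered-Outcomes} with the approximate solver, and argue by contradiction. Let $z_1,\dots,z_n$ be the values produced, so $z_t \ge \multApprox z_t^* - \additiveApprox$ where $z_t^*$ is the optimum of the iteration-$t$ subproblem; let $x := x_n$ be the returned solution; write $v_j := \valBy{j}{x}$ and $Z_\ell := \sum_{i=1}^\ell z_i$. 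Suppose some $y \in S$ is $(\multApprox',\additiveApprox')$-leximin-preferred over $x$, with $\multApprox' = \frac{\multApprox^2}{1-\multApprox+\multApprox^2}$ and $\additiveApprox' = \frac{\additiveApprox}{1-\multApprox+\multApprox^2}$; take the smallest witnessing index $k$, so $\valBy{j}{y} \ge v_j$ for $j<k$ and $\valBy{k}{y} > \frac{1}{\multApprox'}(v_k + \additiveApprox')$.

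Next I would extract three facts. (a) Feasibility of $x$ for \eqref{eq:sums-OP} at $t=n$: taking $F'$ to be index-sets of smallest coordinates gives $\sum_{i=1}^\ell v_i \ge Z_\ell$ for all $\ell \le n$; pairing this with the same feasibility at iteration $\ell$ (which forces $\sum_{i=1}^\ell v_i \le z_\ell^* + Z_{\ell-1}$) yields the per-position ceiling $v_\ell \le z_\ell^*$, and for $\ell = k-1$ the ``overshoot'' bound $\sum_{i=1}^{k-1} v_i - Z_{k-1} \le z_{k-1}^* - z_{k-1} \le \frac{(1-\multApprox) z_{k-1} + \additiveApprox}{\multApprox}$. (b) Since $\valBy{j}{y} \ge v_j$ for $j<k$, the point $y$ satisfies constraints (\progSums.1)--(\progSums.2) of the iteration-$k$ subproblem, so $z_k^* \ge \sum_{i=1}^k \valBy{i}{y} - Z_{k-1}$. (c) Approximate optimality: $z_k \ge \multApprox z_k^* - \additiveApprox$.

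Chaining (a) at $\ell = k$ (which gives $\sum_{i=1}^k v_i \ge Z_{k-1} + z_k$), then (c), then (b), produces $\sum_{i=1}^k v_i - Z_{k-1} \ge \multApprox(\sum_{i=1}^k \valBy{i}{y} - Z_{k-1}) - \additiveApprox$. I would then split off the $k$-th coordinates, use $\sum_{i=1}^{k-1}\valBy{i}{y} \ge \sum_{i=1}^{k-1} v_i$ together with the overshoot bound of (a), substitute the assumed $\valBy{k}{y} > \frac{1}{\multApprox'}(v_k + \additiveApprox')$, and use the lower bound $v_k \ge z_{k-1}$ (itself obtained from $\sum_{i=1}^k v_i \ge Z_k$, the ceilings $v_\ell \le z_\ell^*$, and sortedness $v_1 \le \dots \le v_k$); after the identities $\multApprox - \multApprox' = \frac{\multApprox(1-\multApprox)^2}{1-\multApprox+\multApprox^2}$ and $\additiveApprox - \multApprox\additiveApprox' = (1-\multApprox)^2\additiveApprox'$, all terms should collapse to the impossible $\multApprox' > \frac{\multApprox^2}{1-\multApprox+\multApprox^2} = \multApprox'$. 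The case $k=1$ is easier and handled separately: there $z_1^* \ge \valBy{1}{y} > \frac{1}{\multApprox'}(v_1 + \additiveApprox') \ge \frac{1}{\multApprox'}(\multApprox z_1^* - \additiveApprox + \additiveApprox')$, forcing $(\multApprox' - \multApprox)z_1^* > \additiveApprox' - \additiveApprox$, which contradicts $\multApprox' \le \multApprox$ and $\additiveApprox' \ge \additiveApprox$.

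The hard part will be the bookkeeping of the third paragraph: the chain leaves an error term proportional to $z_{k-1}$, coming from the accumulated overshoot of the returned solution relative to the intermediate targets $z_1,\dots,z_{k-1}$, and one must show that this term is \emph{exactly} absorbed by the $\multApprox'$-gap — no more and no less. This is the step that genuinely uses that the solver is approximately \emph{optimal} (not merely feasible) at every iteration: the bare feasibility inequalities do not even give $\valBy{k}{x} \ge z_{k-1}$, which requires the per-position ceilings $v_\ell \le z_\ell^*$ combined with the sorted order. Tracking these constants so that the final guarantee is $\frac{\multApprox^2}{1-\multApprox+\multApprox^2}$ (rather than something weaker, and crucially independent of $n$) is precisely where the particular form of $\multApprox'$ and $\additiveApprox'$ is forced, and making the two sides meet is the crux of the proof.
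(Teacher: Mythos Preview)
Your overall architecture matches the paper's: reduce to \eqref{eq:sums-OP}, assume a witness $y$ with index $k$, show $y$ is feasible for the iteration-$k$ subproblem, and derive that its objective value would exceed $\frac{1}{\multApprox}(z_k+\additiveApprox)$. The chain you write in the third paragraph is also correct up to the point where you obtain
\[
\multApprox\,\valBy{k}{y}\;\le\;v_k+(1-\multApprox)\Bigl(\textstyle\sum_{i=1}^{k-1}v_i-Z_{k-1}\Bigr)+\additiveApprox,
\]
and your one-step overshoot bound $\sum_{i=1}^{k-1}v_i-Z_{k-1}\le z_{k-1}^{*}-z_{k-1}\le\frac{(1-\multApprox)z_{k-1}+\additiveApprox}{\multApprox}$ is valid.

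The gap is the step $v_k\ge z_{k-1}$. This inequality does \emph{not} follow from $\sum_{i=1}^k v_i\ge Z_k$, the ceilings $v_\ell\le z_\ell^{*}$, and sortedness. Concretely, take $n=3$, $\multApprox=0.5$, $\additiveApprox=0$, and $S=\{p,q,r\}$ with sorted vectors $p=(20,90,95)$, $q=(10,145,145)$, $r=(20,20,20)$. A valid $(0.5,0)$-approximate run is: iteration~1 returns $q$ with $z_1=10$ ($z_1^{*}=20$); iteration~2 returns $p$ with $z_2=100$ ($z_2^{*}=145$); iteration~3 returns $p$ with $z_3=95$ ($z_3^{*}=190$). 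The output is $x^{*}=p$, so $v_3=95<z_2=100$. All three of your hypotheses hold here ($\sum v_i=205=Z_3$; $20\le20$, $90\le145$, $95\le190$; sortedness), yet $v_k<z_{k-1}$. Without $v_k\ge z_{k-1}$ your error term stays proportional to $z_{k-1}$ rather than $v_k$, and the two sides of the final inequality no longer meet.

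The paper fills exactly this hole with a different bound on the overshoot: it proves inductively (its Lemmas~\ref{lemma:beta-vk}--\ref{lemma:fk-to-all}, writing $\multError=1-\multApprox$) that
\[
\sum_{i=1}^{k-1}v_i-Z_{k-1}\;\le\;\frac{\multError}{1-\multError}\,v_k+\frac{1}{1-\multError}\additiveApprox
\;=\;\frac{1-\multApprox}{\multApprox}\,v_k+\frac{\additiveApprox}{\multApprox},
\]
i.e.\ the overshoot is controlled directly by $v_k$. The induction telescopes the per-step inequality $\multError\bigl(\sum_{i\le t}v_i-Z_{t-1}\bigr)\ge\sum_{i\le t}v_i-Z_t-\additiveApprox$ over $t=1,\dots,k-1$, producing $\sum_{i=1}^{k-1}\multError^{\,i}v_{k-i}$, which is then upper-bounded by $\frac{\multError}{1-\multError}v_k$ via sortedness and the geometric series. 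Substituting this bound into your displayed inequality gives precisely $\valBy{k}{y}\le\frac{1}{\multApprox'}(v_k+\additiveApprox')$, the desired contradiction. So your plan is right in shape; what is missing is this inductive geometric-series estimate in place of the (incorrect) pointwise comparison $v_k\ge z_{k-1}$.
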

For the above example, it guarantees an $(\frac{81}{91},0)\approx (0.89,0)$-leximin-approximation.

A complete proof of Theorem \ref{th:main} is given in Appendix \appendixName{\ref{sub:th:main}}{B.3}\fullVer.
Here we provide a high level overview of the main steps.
\displayEcai{

}
%
%
%
First, we note that the value of the variable $z_t$ is completely determined by the variable $x$. 
This is because the program aims to maximize $z_t$ that appears only in constraint (\progSums.3), which is equivalent to $z_t\leq \sum_{i=1}^{t} \valBy{i}{x} - \sum_{i=1}^{t-1}  z_i$.
Thus, this constraint will always hold with equality.
%
\displayEcai{

}
Next, 
we show that the returned solution, $\retSol$, is feasible to all single-objective
problems that were solved during the algorithm run. 
This allows us to relate the objective values attained by $\retSol$ and the constants $z_i$ values.
Specifically, for each iteration $t=1,\ldots, n$, the solver used in iteration $t$ is $(\alpha,\epsilon)$-approximately-optimal.
Therefore, we know that the optimal value of this program is at most $\frac{1}{\alpha}(z_t+\epsilon)$, where $z_t$ is the approximation obtained to that problem.
It then follows that the objective value attained by $\retSol$ for this program 
is also is at most $\frac{1}{\alpha}(z_t+\epsilon)$.

%
\displayEcai{

}
We then assume for contradiction that $\retSol$ is \emph{not} a leximin approximation as claimed in the theorem.
This means, by definition, that there exits a solution $y \in S$ and an integer $1 \leq k \leq n$ such that $\valBy{i}{y} \geq \valBy{i}{\retSol}$ for any $i<k$, while $\valBy{k}{y}$ is $\left(\frac{\multApprox^2}{1-\multApprox + \multApprox^2}, \frac{\additiveApprox}{1-\multApprox +\multApprox^2}\right)$--substantially-higher\footnote{See Section \ref{sub:our-def} for formal definition.} than $\valBy{k}{\retSol}$.
Accordingly, we prove that $y$ is feasible to the program that was solved in the $k$-th iteration, and that the objective value it attains in this problem is strictly higher than the optimal value $z_k^*$, which is a contradiction.

%


\displayComsoc{Theorem \ref{th:main} implies that if \textsf{OP} has only a multiplicative error ($\additiveApprox = 0$), the returned solution will also have only a multiplicative error, and if \textsf{OP} has only an additive error ($\multApprox = 1$), the returned solution will also have only the same additive error $\additiveError$.}
\displayEcai{Theorem \ref{th:main} implies that if \textsf{OP} only has a multiplicative error ($\additiveApprox = 0$), the solution returned by Algorithm \ref{alg:basic-ordered-Outcomes} will only have a multiplicative error as well, and if \textsf{OP} only has an additive error ($\multApprox = 1$), the solution returned by Algorithm \ref{alg:basic-ordered-Outcomes} will have only the same additive error $\additiveError$.}

\eden{Again: to remove or to rewrite?
Note that there are many cases in which the required procedure (\textsf{OP}) can be implemented easily. For example, when the $S$ is convex and all $f_i$’s are concave (and polynomially computable), using convex optimization techniques.}

\subsection{Using a randomized solver}
Next, we assume that the solver is not only approximate but  also \emph{randomized} --- it always 
returns a feasible solution,
but only 
with probability $p \in [0,1]$ the solution is also approximately-optimal.
As Algorithm \ref{alg:basic-ordered-Outcomes} activates the solver $n$ times overall, assuming the success events of different activations are independent, there is a probability of $p^n$ that
the solver returns an approximately-optimal solution in every iteration and so, Algorithm \ref{alg:basic-ordered-Outcomes} performs as in the previous subsection.
 This leads to the following conclusion:
\begin{corollary}\label{corollary:main-with-probability}
Let $\multApprox\in (0,1]$, $\additiveError \geq 0$, $p \in (0,1]$, and \textsf{OP} be a \emph{$p$-randomized} $(\multApprox, \additiveError)$-approximation procedure to \eqref{eq:sums-OP} or \eqref{eq:vsums-OP}. Then Algorithm \ref{alg:basic-ordered-Outcomes} outputs an $\left(\frac{\multApprox^2}{1-\multApprox + \multApprox^2}, \frac{\additiveApprox}{1-\multApprox +\multApprox^2}\right)$-leximin-approximation with probability $p^n$.
\end{corollary}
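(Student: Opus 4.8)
The plan is to reduce the statement directly to Theorem~\ref{th:main} by a simple independence argument, since the randomization affects nothing beyond whether each individual call to \textsf{OP} meets its approximation guarantee. First I would observe that Algorithm~\ref{alg:basic-ordered-Outcomes} invokes \textsf{OP} exactly $n$ times, once in each iteration $t=1,\ldots,n$, and that by hypothesis each invocation returns a feasible solution and, independently with probability $p$, one that is additionally $(\multApprox,\additiveApprox)$-approximately-optimal for the program solved in that iteration. Let $E_t$ be the event that the $t$-th invocation is approximately-optimal in this sense; by the stated independence of the success events, $\Pr[\bigcap_{t=1}^n E_t] = p^n$.

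Next I would argue that, conditioned on $\bigcap_{t=1}^n E_t$, the run of Algorithm~\ref{alg:basic-ordered-Outcomes} is indistinguishable from a run using a \emph{deterministic} $(\multApprox,\additiveApprox)$-approximation procedure for \eqref{eq:sums-OP} or \eqref{eq:vsums-OP}: in every iteration the returned pair $(x_t,z_t)$ is feasible (this part holds unconditionally, which is what keeps the algorithm well-defined) and satisfies $z_t \geq \multApprox z_t^* - \additiveApprox$, where $z_t^*$ is the optimum of the program solved in iteration $t$. These per-iteration feasibility and approximation facts are exactly the hypotheses used in the proof of Theorem~\ref{th:main}; that proof never appeals to any global property of a single solver, only to the guarantee holding in each of the $n$ iterations. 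Hence, on the event $\bigcap_{t=1}^n E_t$, Theorem~\ref{th:main} applies verbatim and the solution returned by Algorithm~\ref{alg:basic-ordered-Outcomes} is an $\left(\frac{\multApprox^2}{1-\multApprox + \multApprox^2}, \frac{\additiveApprox}{1-\multApprox +\multApprox^2}\right)$-leximin-approximation.

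Combining the two steps, the output is an $\left(\frac{\multApprox^2}{1-\multApprox + \multApprox^2}, \frac{\additiveApprox}{1-\multApprox +\multApprox^2}\right)$-leximin-approximation with probability at least $\Pr[\bigcap_{t=1}^n E_t] = p^n$, which is the claim.

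The only point requiring care — rather than a real obstacle — is making precise the reduction in the second paragraph: one must note that the decision taken by Algorithm~\ref{alg:basic-ordered-Outcomes} in iteration $t$ depends only on $z_1,\ldots,z_{t-1}$ and the output of \textsf{OP} in iteration $t$, so conditioning on all $n$ calls succeeding genuinely yields a trajectory that could have been produced by a deterministic approximate solver, and that Theorem~\ref{th:main}'s proof uses only the per-iteration guarantee. A secondary remark worth including is that independence of the $E_t$ across iterations is an explicit assumption of the statement, so no further justification is needed; without it one could only invoke a union bound to get the weaker success probability $1 - n(1-p)$.
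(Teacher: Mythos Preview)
Your proposal is correct and follows essentially the same approach as the paper: the paper's argument is precisely that Algorithm~\ref{alg:basic-ordered-Outcomes} invokes \textsf{OP} exactly $n$ times, the independence assumption gives probability $p^n$ that all calls are approximately-optimal, and on that event Theorem~\ref{th:main} applies directly. Your write-up is simply a more detailed version of the same reduction, with the explicit definition of the events $E_t$ and the observation that feasibility holds unconditionally.
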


Notice that, since the procedure \textsf{OP} always returns a feasible solution to the single-objective problem, Algorithm \ref{alg:basic-ordered-Outcomes} always returns a feasible solution as well.

Further, it is easy to see that one can increase the success probability of Algorithm \ref{alg:basic-ordered-Outcomes} if, in each iteration, \textsf{OP} will be operated more times, and the solution with the highest objective will be used.

\section{Conclusion and Future Work}
\label{sec:future}
We presented a practical solution to the problem of leximin optimization when only an approximate single-objective solver is available. 
The algorithm is guaranteed to terminate in polynomial time, and its approximation ratio degrades gracefully as a function of the approximation ratio of the single-objective solver.

Currently, our algorithm handles two main settings. First, when inaccuracies in the single-objective solver stem from numeric errors.
Second, when the problem is convex and satisfy several assumptions.
It may be interesting to study more settings in which the inaccuracies stem from computational hardness of the single-objective problem.
%
%

In particular, to approximate the egalitarian welfare, it is common to model the problem as an integer program or as an exponential sized linear program (e.g., \cite{bansal2006santa, kawase_max-min_2020}) and then approximate the program using different techniques.
Can these algorithms be generalized to consider the additional constraints described in Section \ref{sec:algo-short}? This will allow approximating leximin using the approach in this paper.


Another question is whether it is possible to obtain a better approximation factor for leximin, given an $(\multApprox, \additiveApprox)$-approximation algorithm for the single-objective problem.
Specifically, can an $(\multApprox, \additiveApprox)$-approximation to leximin can be obtained in polynomial time? 
If not, what would be the best possible approximation in this case?

\displayComsoc{\section{Acknowledgments}}
\displayEcai{\newpage\ack} 
This research is partly supported by the Israel Science Foundation grants 712/20 and 2697/22.
We are grateful to Sylvain Bouveret for suggesting several alternative definitions and helpful insights. We sincerely appreciate Arkadi Nemirovski, Nikhil Bansal, Shaddin Dughmi, and Tom Tirer for providing helpful answers and clarifications. 
We are also grateful to the following members of the stack exchange network for their very helpful answers to our technical questions:
Neal Young,%
\footnote{
	\url{https://cstheory.stackexchange.com/questions/51206} and 
	\url{https://cstheory.stackexchange.com/questions/51003}
	and
	\url{https://cstheory.stackexchange.com/questions/52353}
}
1Rock,%
\footnote{
\url{https://math.stackexchange.com/questions/4466551}
}
Mark L. Stone%
\footnote{
\url{https://or.stackexchange.com/questions/8633}
and 
\url{https://or.stackexchange.com/questions/11007}
}
and Rob Pratt.%
\footnote{\url{https://or.stackexchange.com/questions/8980}}
Lastly, we would like to thank the reviewers in COMSOC 2023 and ECAI 2023 for their helpful comments.
\displayComsoc{\newpage}
\bibliography{references}

\clearpage
\appendix

\section{The Approximate Leximin Order}\label{sec:approx-order-is-strict-partial}

Unlike the leximin order, $\leximinPreferred$, which is a \textbf{total} order, the approximate leximin order, $\alphaBetaPreferred$ for $\DEFmultApprox\in (0,1]$ and $\DEFadditiveApprox \geq 0$ is a \textbf{partial} order.
The difference is that in partial orders, not all vectors are comparable.
Consider for example the sorted vectors $(1,2)$ and $(1, 3)$. 
According to the leximin order, $(1,3)$ is clearly preferred (as $3>2$), but according to many approximate leximin orders neither one is preferred over the other, for example according to the orders $\alphaBetaPreferredParams{0.6}{0}$,$ \alphaBetaPreferredParams{1}{1}$ or $\alphaBetaPreferredParams{0.8}{0.5}$.

An order is a strict partial order if it is irreflexive, transitive and asymmetric.
Lemma \ref{lemma:order-is-irreflexive} proves that the order is irreflexive, Lemma \ref{lemma:order-is-transitive} proves it is transitive, and Lemma \ref{lemma:order-is-asymmetric} proves that it is asymmetric.



Let $\DEFmultApprox\in (0,1]$ and $\DEFadditiveApprox \geq 0$. 

\begin{lemma}\label{lemma:order-is-irreflexive}
    The approximate leximin order $\alphaBetaPreferred$ is irreflexive.
\end{lemma}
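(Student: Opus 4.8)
The plan is to argue directly by contradiction. Suppose some solution $x \in S$ satisfies $x \alphaBetaPreferred x$. By the definition of the approximate leximin order, there must exist an integer $1 \le k \le n$ with $\valBy{j}{x} \ge \valBy{j}{x}$ for every $j < k$ (which holds vacuously, so it imposes nothing) and, crucially, $\valBy{k}{x} > \frac{1}{\DEFmultApprox}\left(\valBy{k}{x} + \DEFadditiveApprox\right)$.

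Next I would show this last inequality is impossible. Since $\DEFmultApprox \in (0,1]$ we have $\frac{1}{\DEFmultApprox} \ge 1$; since $\DEFadditiveApprox \ge 0$ and each objective value is nonnegative (the functions map into $\mathbb{R}_{\ge 0}$), we get
\[
\frac{1}{\DEFmultApprox}\left(\valBy{k}{x} + \DEFadditiveApprox\right) \ge \valBy{k}{x} + \DEFadditiveApprox \ge \valBy{k}{x}.
\]
Combining this with the strict inequality above yields $\valBy{k}{x} > \valBy{k}{x}$, a contradiction. Hence no such $x$ exists, and $\alphaBetaPreferred$ is irreflexive.

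There is essentially no hard part here: the only thing to be careful about is invoking nonnegativity of the objective values (or, equivalently, observing $(\DEFmultApprox - 1)\valBy{k}{x} \le 0 \le \DEFadditiveApprox$) so that the multiplicative slack $\tfrac{1}{\DEFmultApprox}$ and the additive slack $\DEFadditiveApprox$ both push the threshold $\tfrac{1}{\DEFmultApprox}(\valBy{k}{x}+\DEFadditiveApprox)$ to be at least $\valBy{k}{x}$. The argument is uniform in $k$, so no case analysis is needed.
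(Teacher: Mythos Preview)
Your proof is correct and follows the same idea as the paper's: the strict inequality at index $k$ cannot hold with both sides equal to $\valBy{k}{x}$. The paper's version is a single sentence (``the definition requires that one component be strictly greater than the other''), whereas you spell out explicitly why $\tfrac{1}{\DEFmultApprox}(\valBy{k}{x}+\DEFadditiveApprox)\ge \valBy{k}{x}$ using $\DEFmultApprox\le 1$, $\DEFadditiveApprox\ge 0$, and nonnegativity of the objective values; this extra detail is fine but not strictly required.
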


\begin{proof}
    Let $x$ be a solution. We will show that $x \nAlphaBetaPreferred x$.
    As the definition requires that one component be \emph{strictly greater} than the other, it is trivial.
\end{proof}

\begin{lemma}\label{lemma:order-is-transitive}
    The approximate leximin order $\alphaBetaPreferred$ is transitive.
\end{lemma}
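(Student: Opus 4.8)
The plan is to combine the two witness indices. Suppose $z \alphaBetaPreferred y$ with witness index $k_1$ (so $\valBy{j}{z}\geq\valBy{j}{y}$ for all $j<k_1$ and $\valBy{k_1}{z}>\frac{1}{\DEFmultApprox}(\valBy{k_1}{y}+\DEFadditiveApprox)$) and $y \alphaBetaPreferred x$ with witness index $k_2$ (so $\valBy{j}{y}\geq\valBy{j}{x}$ for all $j<k_2$ and $\valBy{k_2}{y}>\frac{1}{\DEFmultApprox}(\valBy{k_2}{x}+\DEFadditiveApprox)$). I would set $k:=\min(k_1,k_2)$ and show that this $k$ witnesses $z \alphaBetaPreferred x$. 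Note this argument never touches the multi-objective structure beyond the sorted outcome vectors, so it is really a statement about the ordered vectors $\sortedValues{\cdot}$.

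First I would handle the coordinates $j<k$. Since $j<k\leq k_1$ we get $\valBy{j}{z}\geq\valBy{j}{y}$, and since $j<k\leq k_2$ we get $\valBy{j}{y}\geq\valBy{j}{x}$; chaining yields $\valBy{j}{z}\geq\valBy{j}{x}$, which is the first requirement in the definition of $\alphaBetaPreferred$.

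Next I would verify the strict ``substantially-higher'' requirement at coordinate $k$, splitting on which index attains the minimum. If $k=k_2<k_1$, then $j=k_2<k_1$ gives $\valBy{k_2}{z}\geq\valBy{k_2}{y}$, and the witness condition for $y\alphaBetaPreferred x$ gives $\valBy{k_2}{y}>\frac{1}{\DEFmultApprox}(\valBy{k_2}{x}+\DEFadditiveApprox)$, so $\valBy{k}{z}>\frac{1}{\DEFmultApprox}(\valBy{k}{x}+\DEFadditiveApprox)$. If instead $k=k_1\leq k_2$, the witness condition for $z\alphaBetaPreferred y$ gives $\valBy{k_1}{z}>\frac{1}{\DEFmultApprox}(\valBy{k_1}{y}+\DEFadditiveApprox)$, so it suffices to show $\valBy{k_1}{y}\geq\valBy{k_1}{x}$: when $k_1<k_2$ this is part of the ``first $k_2-1$ coordinates'' condition for $y\alphaBetaPreferred x$, and when $k_1=k_2$ it follows from $\valBy{k_1}{y}>\frac{1}{\DEFmultApprox}(\valBy{k_1}{x}+\DEFadditiveApprox)\geq\valBy{k_1}{x}$, the last inequality using $\DEFmultApprox\in(0,1]$, $\DEFadditiveApprox\geq0$ and $\valBy{k_1}{x}\geq0$. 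Either way $\valBy{k}{z}>\frac{1}{\DEFmultApprox}(\valBy{k}{y}+\DEFadditiveApprox)\geq\frac{1}{\DEFmultApprox}(\valBy{k}{x}+\DEFadditiveApprox)$.

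The one point that needs a moment of care is the case $k_1=k_2$, where one must observe that being $(\DEFmultApprox,\DEFadditiveApprox)$-substantially-higher than a nonnegative value is in particular weakly larger than it; everything else is a routine chaining of the two sets of inequalities. I do not expect any real obstacle here beyond getting this bookkeeping right.
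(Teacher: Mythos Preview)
Your proposal is correct and is essentially the same argument as the paper's: both take $k=\min(k_1,k_2)$, chain the weak inequalities for $j<k$, and at coordinate $k$ combine the substantially-higher condition from one relation with the weak inequality derived from the other (using that substantially-higher implies $\geq$ when $\DEFmultApprox\in(0,1]$, $\DEFadditiveApprox\geq 0$ and values are nonnegative). The only differences are cosmetic---variable names and how the case split is organized---so there is nothing substantive to add.
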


\begin{proof}
    Let $x,y$ and $z$ be solutions such that $x \alphaBetaPreferred y$ and $y \alphaBetaPreferred z$.
    We will prove that $x \alphaBetaPreferred z$.

    Since $x \alphaBetaPreferred y$, there exists an integer $ k_1 \in [n]$ such that:
    \begin{align*}
        \forall j<k_1 \colon &  \valBy{j}{x} \geq \valBy{j}{y}\\
            & \valBy{k_1}{x} > \frac{1}{\DEFmultApprox} \left( \valBy{k_1}{y} + \DEFadditiveApprox \right)
    \end{align*}
    And since $y \alphaBetaPreferred z$, there exists an integer $k_2 \in [n]$ such that:
    \begin{align*}
        \forall j<k_2 \colon &  \valBy{j}{y} \geq \valBy{j}{z}\\
            & \valBy{k_2}{y} > \frac{1}{\DEFmultApprox} \left( \valBy{k_2}{z} + \DEFadditiveApprox \right) 
    \end{align*}

    As $\DEFmultApprox \in (0,1]$ and $\DEFadditiveApprox \geq 0$, it follows that:
    \begin{align}\label{eq:trans-k-s}
        \valBy{k_1}{x} > \valBy{k_1}{y}, \Hquad \valBy{k_2}{y} >  \valBy{k_2}{z}
    \end{align}

    
    Let $k = \min\{k_1,k_2\}$.
    
    If $k = k_1$, by the definition of $k_1$, $\valBy{k}{x} > \frac{1}{\DEFmultApprox} \left( \valBy{k}{y} + \DEFadditiveApprox \right)$.
    However, $\valBy{k}{y} \geq \valBy{k}{z}$, by definition if $k<k_2$ and by Equation \eqref{eq:trans-k-s} if $k=k_2$. \ref{eq:transitive-k}
    Therefore, $\valBy{k}{x} > \frac{1}{\DEFmultApprox} \left( \valBy{k}{z} + \DEFadditiveApprox \right)$.
    
    Otherwise, if $k=k_2$, by the definition of $k_2$, $\valBy{k}{y} > \frac{1}{\DEFmultApprox} \left( \valBy{k}{z} + \DEFadditiveApprox \right)$. But, $\valBy{k}{x} \geq \valBy{k}{y}$, by definition if $k<k_1$ and by Equation \eqref{eq:trans-k-s} if $k=k_1$. Again, we can conclude that $\valBy{k}{x} > \frac{1}{\DEFmultApprox} \left( \valBy{k}{z} + \DEFadditiveApprox \right)$.

     In addition, for each $j<k$, since $j< k_1$ and $j < k_2$, by definition the following holds:
    \begin{align}\label{eq:transitive-k}
        \valBy{j}{x} \geq \valBy{j}{y} \geq \valBy{j}{z}
    \end{align}
    So, $k$ is an integer that satisfy all the requirements, and so, $x \alphaBetaPreferred z$.
    \end{proof}

    \begin{lemma}\label{lemma:order-is-asymmetric}
        The approximate leximin order $\alphaBetaPreferred$ is asymmetric.
    \end{lemma}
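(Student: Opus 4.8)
\textbf{Proof plan for Lemma~\ref{lemma:order-is-asymmetric} (asymmetry of $\alphaBetaPreferred$).}
The plan is to argue directly from the definition, supposing for contradiction that both $x \alphaBetaPreferred y$ and $y \alphaBetaPreferred x$ hold for some $x,y \in S$, and deriving an impossible chain of strict inequalities on the sorted utility vectors. From $x \alphaBetaPreferred y$ I get an index $k_1$ with $\valBy{j}{x} \geq \valBy{j}{y}$ for all $j<k_1$ and $\valBy{k_1}{x} > \frac{1}{\DEFmultApprox}(\valBy{k_1}{y}+\DEFadditiveApprox) \geq \valBy{k_1}{y}$ (using $\DEFmultApprox\in(0,1]$, $\DEFadditiveApprox\geq 0$). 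Symmetrically, from $y \alphaBetaPreferred x$ I get an index $k_2$ with $\valBy{j}{y}\geq\valBy{j}{x}$ for all $j<k_2$ and $\valBy{k_2}{y} > \valBy{k_2}{x}$. So at coordinate $k_1$ we have $\valBy{k_1}{x}>\valBy{k_1}{y}$, and at coordinate $k_2$ we have $\valBy{k_2}{y}>\valBy{k_2}{x}$.

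The key step is a case analysis on the relationship between $k_1$ and $k_2$, with the crucial sub-observation that $k_1 \neq k_2$: if $k_1=k_2=:k$ then we would need both $\valBy{k}{x}>\valBy{k}{y}$ and $\valBy{k}{y}>\valBy{k}{x}$, immediately impossible. So assume w.l.o.g.\ $k_1 < k_2$ (the other case is symmetric, swapping the roles of $x$ and $y$). Then $k_1 < k_2$ means $k_1$ is an index with $j<k_2$, so the "$\forall j<k_2$" clause from $y \alphaBetaPreferred x$ gives $\valBy{k_1}{y}\geq\valBy{k_1}{x}$. But the clause at $k_1$ from $x \alphaBetaPreferred y$ gives $\valBy{k_1}{x}>\valBy{k_1}{y}$, a contradiction.

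This contradiction shows no such pair $x,y$ can exist, hence $\alphaBetaPreferred$ is asymmetric. I expect the only point requiring care to be making explicit that $k_1=k_2$ is impossible before doing the "$k_1<k_2$ without loss of generality" reduction, and ensuring the symmetric case is genuinely symmetric — it is, since the definition of $\alphaBetaPreferred$ treats the two arguments asymmetrically but swapping $x\leftrightarrow y$ together with $k_1\leftrightarrow k_2$ restores the needed form. No subtle obstacle arises beyond this bookkeeping; the inequality $\frac{1}{\DEFmultApprox}(v+\DEFadditiveApprox)\geq v$ for $v\geq 0$ is the only arithmetic fact used.
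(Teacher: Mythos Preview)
Your argument is correct. The contradiction you derive is sound: once you extract the strict inequalities $\valBy{k_1}{x}>\valBy{k_1}{y}$ and $\valBy{k_2}{y}>\valBy{k_2}{x}$, the case split on $k_1$ versus $k_2$ goes through exactly as you describe, and the arithmetic fact $\frac{1}{\DEFmultApprox}(v+\DEFadditiveApprox)\geq v$ (valid since $\DEFmultApprox\in(0,1]$, $\DEFadditiveApprox\geq 0$, $v\geq 0$) is all that is needed.

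The paper, however, takes a shorter and more structural route: it simply invokes the two lemmas already proved in the same appendix. Assuming $x\alphaBetaPreferred y$ and $y\alphaBetaPreferred x$, transitivity (Lemma~\ref{lemma:order-is-transitive}) gives $x\alphaBetaPreferred x$, which contradicts irreflexivity (Lemma~\ref{lemma:order-is-irreflexive}). This is the standard ``irreflexive $+$ transitive $\Rightarrow$ asymmetric'' argument, and it avoids re-opening the definition of $\alphaBetaPreferred$ altogether. Your direct approach has the merit of being self-contained and of not depending on the (somewhat more involved) transitivity proof, but given that transitivity has just been established, the paper's two-line deduction is the more economical choice here.
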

    
    \begin{proof}
        Let $x$ and $y$ be solutions such that $x \alphaBetaPreferred y$. We will show that $y \nAlphaBetaPreferred x$. 
        Assume by contradiction that $y \alphaBetaPreferred x$. 
        From Lemma \ref{lemma:order-is-transitive}, this relation is transitive. Therefore, since $x \alphaBetaPreferred y$ and $y \alphaBetaPreferred x$, also $x \alphaBetaPreferred x$.
        But, from Lemma \ref{lemma:order-is-irreflexive}, this relation is irreflexive --- a contradiction.
    \end{proof}
\section{Proofs Omitted From Section \ref{sec:algo-short}}\label{sec:algo-sec-proofs}

\subsection{Using an approximate solver for \eqref{eq:basic-OP}}
\label{sub:approximate-P1}
Recall that \eqref{eq:basic-OP} is described as follows:
\begin{align}
 \max &&&\ztVar{x}   \tag{\progBasic}\\
        s.t. &&& (\text{\progBasic.1}) \Hquad x \in S \nonumber\\
              &&& (\text{\progBasic.2}) \Hquad \valBy{\ell}{x}\geq z_{\ell} & \forall \ell \in [t-1] \nonumber \\
               &&& (\text{\progBasic.3}) \Hquad \valBy{t}{x} \geq \ztVar{x} \nonumber   
\end{align} 
This section proves the following lemma:
\begin{lemma}
    Let $\multApprox\in (0,1]$, $\additiveApprox \geq 0$, and \textsf{OP} be an $(\multApprox,\additiveApprox)$-approximation procedure to \eqref{eq:basic-OP}. Then Algorithm \ref{alg:basic-ordered-Outcomes} outputs an $\left(\multApprox, \additiveApprox\right)$-leximin-approximation.  
\end{lemma}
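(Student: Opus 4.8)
The plan is to argue by contradiction, exploiting the fact that the solution returned in the last iteration is feasible for the single-objective program of \emph{every} iteration, and in particular is bounded below, coordinate by coordinate, by the constants $z_1,\ldots,z_n$ produced along the way.

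I would begin with the structural observation. Write $\retSol = x_n$ for the returned solution. Since $(\retSol, z_n)$ is the output of $\textsf{OP}(z_1,\ldots,z_{n-1})$, it is feasible for \eqref{eq:basic-OP} with $t=n$; constraints (\progBasic.2) and (\progBasic.3) then give $\valBy{\ell}{\retSol} \ge z_\ell$ for every $\ell \in [n]$. (In fact $\retSol$ is feasible for the program of each iteration $t \le n$, but only this coordinatewise bound is needed here.) Now suppose, for contradiction, that $\retSol$ is not an $(\multApprox,\additiveApprox)$-leximin-approximation. By definition there is a solution $y \in S$ and an index $1 \le k \le n$ with $\valBy{j}{y} \ge \valBy{j}{\retSol}$ for all $j < k$ and $\valBy{k}{y} > \frac{1}{\multApprox}\big(\valBy{k}{\retSol} + \additiveApprox\big)$.

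The key step is to show that the pair $(y, \valBy{k}{y})$ is feasible for \eqref{eq:basic-OP} in iteration $k$: constraint (\progBasic.1) holds since $y \in S$; for each $\ell < k$, constraint (\progBasic.2) holds because $\valBy{\ell}{y} \ge \valBy{\ell}{\retSol} \ge z_\ell$, chaining the hypothesis with the structural bound; and (\progBasic.3) holds trivially when we set the variable $z_k := \valBy{k}{y}$. Hence the optimum $z_k^*$ of the iteration-$k$ program satisfies $z_k^* \ge \valBy{k}{y}$. Finally I would close the loop using the approximation guarantee: $\textsf{OP}$ returns $z_k$ with $z_k \ge \multApprox \cdot z_k^* - \additiveApprox$, i.e., $z_k^* \le \frac{1}{\multApprox}(z_k + \additiveApprox)$, and since $z_k \le \valBy{k}{\retSol}$ by the structural bound, we obtain $\valBy{k}{y} \le z_k^* \le \frac{1}{\multApprox}\big(\valBy{k}{\retSol} + \additiveApprox\big)$, contradicting the strict inequality above. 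The only point requiring care is the feasibility check of $(y, \valBy{k}{y})$ for the iteration-$k$ program — specifically the chained inequality $\valBy{\ell}{y} \ge \valBy{\ell}{\retSol} \ge z_\ell$ for $\ell < k$; the rest is routine bookkeeping with the $(\multApprox,\additiveApprox)$ guarantee.
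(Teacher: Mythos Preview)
Your proof is correct and follows essentially the same approach as the paper's: argue by contradiction, use feasibility of $\retSol$ at $t=n$ to get $\valBy{\ell}{\retSol}\ge z_\ell$ for all $\ell$, chain this with the $(\multApprox,\additiveApprox)$-preference inequalities to show $y$ is feasible at iteration $k$, and then derive $\valBy{k}{y}>\frac{1}{\multApprox}(z_k+\additiveApprox)$, contradicting the approximation guarantee of \textsf{OP}. The only cosmetic difference is that you phrase the contradiction via $z_k^*\ge \valBy{k}{y}$ whereas the paper bounds $\valBy{k}{y}$ directly against the optimal value; the logic is identical.
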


\begin{proof}
    Let $x^*$ be the returned solution and assume by contradiction that it is \emph{not} $\left(\multApprox, \additiveApprox\right)$-approximately leximin-optimal.
    This means that there exists a $y \in S$ that is $(\multApprox, \additiveApprox)$-leximin-preferred over it. 
    That is, there exists an integer $k \in [n]$ such that:
    \begin{align*}
        \forall i < k \colon \Hquad &\valBy{i}{y} \geq \valBy{i}{x^*}\\
        & \valBy{k}{y} > \frac{1}{\multApprox}\left(\valBy{k}{x^*}+\additiveApprox\right)
    \end{align*}
    
    Since $x^*$ is a solution to  \eqref{eq:basic-OP} that was solved in the iteration $t=n$, it must satisfy all its constraints, and therefore:
    \begin{align}\label{eq:p1-x-to-z}
        \forall i \in [n] \colon \Hquad \valBy{i}{x^*} \geq z_i
    \end{align}
    by constraint (\progBasic.2) for $i<n$ and by constraint (\progBasic.3) for $i=n$.
    
    As $y$'s smallest $k$ values are at least as those of $x^*$, we can conclude that for each $i\leq k$ the $i$-th smallest value of $y$ is at least $z_i$.
    Therefore, $y$ is feasible to  \eqref{eq:basic-OP} that was solved in the iteration $t=k$.
    
    During the algorithm run, $z_k$ was obtained as an $(\multApprox, \additiveApprox)$-approximation to \eqref{eq:basic-OP} that was solved in the iteration $t=k$ , and therefore, the optimal value of this problem
     is at most $\frac{1}{\multApprox}(z_k+\additiveApprox)$.
    But, the objective value $y$ yields in this problem is $\valBy{k}{y}$, which is higher than this value:
    \begin{align*}
        \valBy{k}{y} &> \frac{1}{\multApprox}\left(\valBy{k}{x^*}+\additiveApprox\right)\\
        &\geq \frac{1}{\multApprox}\left(z_k+\additiveApprox\right) && \text{(By Equation \eqref{eq:p1-x-to-z} for $i=k$)}
    \end{align*}
    This is a contradiction.
\end{proof}

\subsection{Equivalence of \eqref{eq:sums-OP} and \eqref{eq:vsums-OP}}
\label{sub:equivalence-P2-P3}
Recall the problems' descriptions:
 
\begin{align*}
\max &&&\ztVar{x} \tag{\progSums}\\
s.t. &&& (\text{\progSums.1}) \quad x \in S \\
&&& (\text{\progSums.2}) \quad \sum_{i \in F'} f_i(x) \geq \sum_{i=1}^{|F'|}  z_i && \forall F' \subseteq [n], \Hquad |F'| < t \\
&&& (\text{\progSums.3}) \quad \sum_{i \in F'} f_i(x) \geq \sum_{i=1}^{t-1}  z_i + z_t && \forall F' \subseteq [n], \Hquad |F'| = t
\end{align*}

\begin{align}
\max &&& \ztVar{x} \tag{\progLinear}\\
s.t. &&& (\text{\progLinear.1}) \Hquad x \in S \nonumber  \\
                    &&& (\text{\progLinear.2}) \Hquad \ell y_{\ell} - \sum_{j=1}^n m_{\ell,j}\geq \sum_{i=1}^{\ell}  z_i && \forXinY{\ell}{t-1} \nonumber \\
                    &&& (\text{\progLinear.3}) \Hquad t y_t - \sum_{j=1}^{n} m_{t,j} \geq \sum_{i=1}^{t-1}  z_i + z_t \nonumber \\
                    &&& (\text{\progLinear.4}) \Hquad m_{\ell,j} \geq y_{\ell} - f_j(x)  && \forXinY{\ell}{t},\Hquad \forXinY{j}{n} \nonumber \\
                    &&& (\text{\progLinear.5}) \Hquad m_{\ell,j} \geq 0  &&  \forXinY{\ell}{t},\Hquad \forXinY{j}{n} \nonumber
\end{align}

We use another equivalent representation of \eqref{eq:sums-OP}, which is more compact and will simplify the proofs, also introduced by \cite{Ogryczak_2006}:
\begin{align*}
    \max &&& z_t \tag{\progCompact}\label{eq:compact-OP}\\
    s.t. &&& (\text{\progCompact.1}) \Hquad x \in S\nonumber\\
                    &&& (\text{\progCompact.2}) \Hquad \sum_{i=1}^{\ell} \valBy{i}{x} \geq \sum_{i=1}^{\ell}  z_i && \forXinY{\ell}{t-1} \nonumber\\
                    &&& (\text{\progCompact.3}) \Hquad \sum_{i=1}^{t} \valBy{i}{x} \geq \sum_{i=1}^{t-1}  z_i + z_t
\end{align*}
In this problem, constraints (\progSums.2) and (\progSums.3) are replaced by (\progCompact.2) and (\progCompact.3), respectively.  
(\progSums.2) gives, for each $\ell$, a lower bound on the sum for \emph{any} set of $\ell$ objective functions; whereas (\progCompact.2) only considers the sum of the $\ell$ \emph{smallest} such values,
and similarly for (\progSums.3) and (\progCompact.3). 


This section proves that these \emph{three} problems are \emph{equivalent} in the following sense: 
\begin{lemma}\label{lem:equivalence-of-all-three}
    Let $t \in [n]$ and let $z_1, \ldots z_{t-1} \in \mathbb{R}$.
    Then, $(x, z_t)$ is feasible for \eqref{eq:sums-OP} if and only if $(x, z_t)$ is feasible for \eqref{eq:compact-OP} if and only if there exist $y_{\ell}$ and $m_{\ell,j}$ for $\ell \in [t]$ and $j \in [n]$ such that $\left(x, z_t, (y_1, \ldots, y_t), (m_{1,1}, \ldots m_{t,n})\right)$ is feasible for \eqref{eq:vsums-OP}.
\end{lemma}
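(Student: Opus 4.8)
The plan is to prove the chain of equivalences $\eqref{eq:sums-OP} \Leftrightarrow \eqref{eq:compact-OP} \Leftrightarrow \eqref{eq:vsums-OP}$ by treating the two links separately. The constraint $x \in S$ is identical in all three programs and all three share the objective $\max z_t$, so I fix $x \in S$, the constants $z_1, \ldots, z_{t-1}$, and a value of $z_t$ throughout, and only compare the remaining constraints; establishing equivalence of the feasible sets for every $(x,z_t)$ then yields equivalence of the optimization problems, and Lemma~\ref{lem:equivalence} is the special case obtained by composing the two links. For the link $\eqref{eq:sums-OP} \Leftrightarrow \eqref{eq:compact-OP}$ the only point is that, for a fixed $\ell$, $\min\{\sum_{i \in F'} f_i(x) : F' \subseteq [n],\ |F'| = \ell\} = \sum_{i=1}^{\ell}\valBy{i}{x}$ --- the cheapest way to choose $\ell$ of the values $f_1(x),\ldots,f_n(x)$ is to choose the $\ell$ smallest --- so the whole family (\progSums.2) for a given $\ell < t$ is equivalent to the single inequality (\progCompact.2) for that $\ell$, and likewise (\progSums.3) $\Leftrightarrow$ (\progCompact.3).

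For the link $\eqref{eq:compact-OP} \Leftrightarrow \eqref{eq:vsums-OP}$, the main step is the identity, valid for every $\ell \in [n]$,
\[
\sum_{i=1}^{\ell}\valBy{i}{x} \;=\; \max_{y \in \mathbb{R}}\Bigl(\ell\,y - \sum_{j=1}^{n}\max\{0,\ y - f_j(x)\}\Bigr),
\]
which holds because the function of $y$ on the right is concave and piecewise linear with slope $\ell - |\{j : f_j(x) \le y\}|$, hence is maximized on the interval where exactly $\ell$ of the values $f_j(x)$ lie weakly below $y$ --- for instance at $y = \valBy{\ell}{x}$ --- where it equals $\sum_{i=1}^{\ell}\valBy{i}{x}$. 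For each fixed $\ell \in [t]$, the variables $y_\ell$ and $m_{\ell,1},\ldots,m_{\ell,n}$ occur only in (\progLinear.2)/(\progLinear.3) together with (\progLinear.4)/(\progLinear.5), and (\progLinear.4)--(\progLinear.5) say precisely $m_{\ell,j} \ge \max\{0,\ y_\ell - f_j(x)\}$. Therefore any feasible choice satisfies $\ell y_\ell - \sum_j m_{\ell,j} \le \ell y_\ell - \sum_j \max\{0, y_\ell - f_j(x)\} \le \sum_{i=1}^{\ell}\valBy{i}{x}$, while the choice $y_\ell = \valBy{\ell}{x}$, $m_{\ell,j} = \max\{0, y_\ell - f_j(x)\}$ attains equality; so the existence of $y_\ell$ and $m_{\ell,\cdot}$ satisfying (\progLinear.2), (\progLinear.4), (\progLinear.5) is equivalent to $\sum_{i=1}^{\ell}\valBy{i}{x} \ge \sum_{i=1}^{\ell} z_i$, i.e.\ to (\progCompact.2), and the same argument with $\ell = t$ relates (\progLinear.3) to (\progCompact.3). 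Conjoining over all $\ell \in [t]$ completes this link, since the blocks of auxiliary variables for distinct $\ell$ do not interact.

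The only non-routine part is the concave/piecewise-linear identity above: one must identify the maximizing interval and evaluate the objective there, which requires a little care with ties among the values $f_j(x)$ (handled by continuity of the piecewise-linear function). Everything else is bookkeeping, and once the identity is in place both links --- and hence the lemma, including the special case Lemma~\ref{lem:equivalence} --- follow immediately.
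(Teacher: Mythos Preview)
Your proposal is correct and follows essentially the same route as the paper: both split the equivalence into the two links $\eqref{eq:sums-OP}\Leftrightarrow\eqref{eq:compact-OP}$ (via the observation that the minimum subset-sum of size $\ell$ equals $\sum_{i=1}^{\ell}\valBy{i}{x}$) and $\eqref{eq:compact-OP}\Leftrightarrow\eqref{eq:vsums-OP}$ (via the identity $\sum_{i=1}^{\ell}\valBy{i}{x}=\max_{y}\bigl(\ell y-\sum_j\max\{0,y-f_j(x)\}\bigr)$ with the same optimal witness $y_\ell=\valBy{\ell}{x}$, $m_{\ell,j}=\max\{0,y_\ell-f_j(x)\}$). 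The only cosmetic difference is that you justify the upper bound $\ell y_\ell-\sum_j m_{\ell,j}\le\sum_{i=1}^{\ell}\valBy{i}{x}$ by a concave piecewise-linear maximization argument, whereas the paper establishes the same inequality by a direct term-by-term computation ending in a contradiction.
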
 
It is clear that this lemma implies Lemma \ref{lem:equivalence}, which only claims a part of it.

We start by proving that \eqref{eq:sums-OP} and \eqref{eq:compact-OP} are equivalent. That is, $(x, z_t)$ is feasible for \eqref{eq:sums-OP} if and only if $(x, z_t)$ is feasible for \eqref{eq:compact-OP}.
First, it is clear that $x$ satisfies constraint (\progSums.1) if and only if it satisfies constraint (\progCompact.1) (as both constraints are the same, $x \in S$).
To prove the other requirements, we start with the following lemma:
\begin{lemma}\label{lemma:sums-to-comp-constrants}
    For any $x \in S$, any $\ell \in [n]$ and a constant $c \in \mathbb{R}$ the following two conditions are equivalent:
    \begin{align}\label{eq:sums-to-comp-constrants}
         \forall F' \subseteq [n], |F'| = \ell \colon \sum_{i \in F'} f_i(x) &\geq c 
         \\
         \sum_{i=1}^{\ell} \valBy{i}{x}&\geq c 
    \end{align}
\end{lemma}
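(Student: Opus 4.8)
The plan is to recognize that both displayed conditions express one and the same fact about the quantity $\min_{F' \subseteq [n],\, |F'| = \ell} \sum_{i \in F'} f_i(x)$, namely that it is at least $c$. The heart of the argument is therefore the identity
\[
\min_{F' \subseteq [n],\, |F'| = \ell} \sum_{i \in F'} f_i(x) \;=\; \sum_{i=1}^{\ell} \valBy{i}{x},
\]
i.e.\ that the cheapest way to pick $\ell$ of the objectives is to pick the $\ell$ smallest‑valued ones. Once this identity is available, the equivalence stated in the lemma is immediate, so I would organize the proof as the two implications corresponding to the two inequalities of this identity.

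For the implication from the first condition to the second, I would simply instantiate $F'$: by the definition of the sorted utility vector $\sortedValues{x}$ there is a set $F^{\circ} \subseteq [n]$ with $|F^{\circ}| = \ell$ consisting of indices that attain the $\ell$ smallest values, so $\sum_{i \in F^{\circ}} f_i(x) = \sum_{i=1}^{\ell} \valBy{i}{x}$; applying the hypothesis to $F^{\circ}$ yields $\sum_{i=1}^{\ell} \valBy{i}{x} \geq c$.

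For the converse, let $F' \subseteq [n]$ be an arbitrary $\ell$-element set; it suffices to show $\sum_{i \in F'} f_i(x) \geq \sum_{i=1}^{\ell} \valBy{i}{x}$, since then the hypothesis $\sum_{i=1}^{\ell}\valBy{i}{x} \geq c$ finishes it. Write the values $(f_i(x))_{i \in F'}$ in increasing order as $w_1 \leq \cdots \leq w_{\ell}$. I claim $w_j \geq \valBy{j}{x}$ for every $j \in [\ell]$: at most $j-1$ indices $i \in [n]$ satisfy $f_i(x) < \valBy{j}{x}$ (otherwise the $j$-th smallest value of $x$ would itself be strictly below $\valBy{j}{x}$), hence at most $j-1$ of the elements of $F' \subseteq [n]$ lie strictly below $\valBy{j}{x}$, forcing $w_j \geq \valBy{j}{x}$. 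Summing over $j \in [\ell]$ gives $\sum_{i \in F'} f_i(x) = \sum_{j=1}^{\ell} w_j \geq \sum_{j=1}^{\ell} \valBy{j}{x}$, as needed.

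The only step that is more than bookkeeping is the pointwise domination $w_j \geq \valBy{j}{x}$, and I expect the short counting argument above — at most $j-1$ coordinates of the full vector, and hence at most $j-1$ coordinates of the sub‑selection indexed by $F'$, fall strictly below $\valBy{j}{x}$ — to be the crux; everything else is a direct unwinding of the definitions of $\sortedValues{x}$ and of a size-$\ell$ subset sum.
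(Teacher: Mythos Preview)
Your proof is correct and follows essentially the same approach as the paper: instantiate $F'$ with the indices of the $\ell$ smallest values for one direction, and use that any $\ell$-element sum dominates the sum of the $\ell$ smallest values for the other. The paper simply asserts the latter inequality as evident from the definition of $\valBy{i}{x}$, whereas you supply an explicit pointwise domination argument $w_j \geq \valBy{j}{x}$; this extra detail is sound but not required.
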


\begin{proof}
    For the first direction, recall that the values $ \valBy{1}{x}, \dots,  \valBy{\ell}{x}$ were obtained from $\ell$ objective functions (those who yield the smallest value).
    By the assumption, the sum of any set of function with size $\ell$ is at least $c$; therefore, it is true in particular for the functions corresponding to the values $ (\valBy{1}{x})_{i=1}^{\ell}$.
    For the second direction, assume that $\sum_{i=1}^{\ell} \valBy{i}{x}\geq c$.
    Since $ \valBy{1}{x}, \dots,  \valBy{\ell}{x}$ are the $\ell$ smallest objective values, we get that:
    \begin{align*}
       \forall F' \subseteq [n],\Hquad |F'| = \ell \colon \quad \sum_{i \in F'}f_i(x) \geq \sum_{i=1}^s \valBy{i}{x}\geq c.
    \end{align*}
\end{proof}

Accordingly, $x$ satisfies constraint (\progSums.2) --- for any $\ell \in [t-1]$, 
    \begin{align*}
        \forall F' \subseteq [n], |F'| = \ell \colon \sum_{i \in F'} f_i(x) \geq \sum_{i=1}^{\ell} z_i
    \end{align*}
    if and only if it satisfies $\sum_{i=1}^{\ell} \valBy{i}{x} \geq \sum_{i=1}^{\ell} z_i$, which is constraint (\progCompact.2).
    Similarly, $x$ and $z_t$ satisfy constraint (\progSums.3), 
    \begin{align*}
        \forall F' \subseteq [n], |F'| = t \colon \sum_{i \in F'} f_i(x) \geq \sum_{i=1}^{t} z_i
    \end{align*}
    if and only if $\sum_{i=1}^{t} \valBy{i}{x} \geq \sum_{i=1}^{t} z_i$, which is constraint (\progCompact.3).
    That is, $x$ ans $z_t$ satisfy all the constraints of \eqref{eq:sums-OP} if and only if they satisfy all the constraints of \eqref{eq:compact-OP}.

Now, we will prove that that \eqref{eq:compact-OP} and \eqref{eq:vsums-OP} are equivalent, that is, $(x, z_t)$ is feasible for \eqref{eq:compact-OP} if and only if there exist $y_{\ell}$ and $m_{\ell,j}$ for $\ell \in [t]$ and $j \in [n]$ such that $\left(x, z_t, (y_1, \ldots, y_t), (m_{1,1}, \ldots m_{t,n})\right)$ is feasible for \eqref{eq:vsums-OP}.

We start with the following lemma:
\begin{lemma}\label{lemma:comp-to-p3-m-sums}
    For any $x \in S$ and any constant $c \in \mathbb{R}$ (where $c$ does not depend on $j$),
    \begin{align*}
        \sum_{j=1}^n \max(0, c - f_j(x) ) = \sum_{j=1}^n \max(0, c - \valBy{j}{x} ).
    \end{align*}
\end{lemma}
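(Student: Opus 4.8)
The statement is essentially a triviality about reordering a finite sum, and I would prove it in one short step. The key observation is that the tuple $(\valBy{1}{x}, \ldots, \valBy{n}{x})$ is, by definition, nothing but the tuple $(f_1(x), \ldots, f_n(x))$ written in nondecreasing order; in other words, there is a permutation $\pi$ of $[n]$ such that $\valBy{j}{x} = f_{\pi(j)}(x)$ for every $j \in [n]$. Since the map $v \mapsto \max(0, c - v)$ is a single fixed function of the real argument $v$ (here we use the hypothesis that $c$ does not depend on the index), applying it to each entry and summing gives the same value regardless of the order of the entries.

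Concretely, I would write:
\begin{align*}
    \sum_{j=1}^n \max\bigl(0,\, c - \valBy{j}{x}\bigr)
    = \sum_{j=1}^n \max\bigl(0,\, c - f_{\pi(j)}(x)\bigr)
    = \sum_{j=1}^n \max\bigl(0,\, c - f_{j}(x)\bigr),
\end{align*}
where the first equality is the definition of the sorted values and the second is the reindexing $j \mapsto \pi(j)$, valid because $\pi$ is a bijection of $[n]$ and addition of finitely many reals is commutative.

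\textbf{Main obstacle.} There is no real obstacle here — the only thing to be careful about is the stated hypothesis that $c$ is independent of $j$; this is exactly what allows the summand to be treated as one and the same function of the value being plugged in, so that the permutation invariance of the sum applies. (If $c$ were allowed to vary with $j$, the claim would generally fail, since then the contribution of a given objective value would depend on which slot it occupies.) I would then note that this lemma is the bridge needed to rewrite constraints (\progLinear.4)--(\progLinear.5), which bound $\sum_j m_{\ell,j}$ from below by $\sum_j \max(0, y_\ell - f_j(x))$, in terms of the sorted values $\valBy{j}{x}$, and hence to connect \eqref{eq:vsums-OP} with \eqref{eq:compact-OP}.
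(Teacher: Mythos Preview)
Your proof is correct and is essentially identical to the paper's own argument: the paper also fixes a permutation $\pi$ with $f_{\pi_i}(x)=\valBy{i}{x}$ and then reindexes the sum, using that each summand depends on $j$ only through $f_j(x)$ (i.e., $c$ is index-free). Your added remark about the role of the hypothesis on $c$ and the lemma's use in linking \eqref{eq:vsums-OP} to \eqref{eq:compact-OP} is accurate and matches how the paper deploys it.
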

\eden{TODO:
maybe to change to observation}
\begin{proof}
     Let $(\pi_1, \ldots, \pi_n)$ be a permutation of $\{1,\ldots,n\}$ such that $f_{\pi_i}(x) = \valBy{i}{x}$ for any $i \in [n]$ (notice that such permutation exists by the definition of $\valBy{}{}$).
     That is, the value that $f_{\pi_i}$ obtains is the $i$-th smallest one.
    Since each element in the sum $\sum_{j=1}^n \max(0, c - f_j(x))$ is affected by $j$ only through $f_j(x)$, the permutation $\pi$ allows us to conclude the following:
    \begin{align*}
        &\sum_{j=1}^n \max(0, c -f_j(x)) = \sum_{j=\pi_1}^{\pi_n} \max(0,c -f_j(x))\\ 
        &= \sum_{j=1}^n \max(0,c -f_{\pi_i}(x))  = \sum_{j=1}^{n} \max(0,c -\valBy{j}{x}).
    \end{align*}
\end{proof}

Lemma \ref{lemma:comp-to-p3-mapping} below proves the first direction of the equivalence between \eqref{eq:compact-OP} and \eqref{eq:vsums-OP}:
\begin{lemma}\label{lemma:comp-to-p3-mapping}
    Let $(x, z_t)$ be a feasible solution to \eqref{eq:compact-OP}.
    Then there exist $y_{\ell}$ and $m_{\ell,j}$ for $\ell \in [t]$ and $j \in [n]$ such that $\left(x, z_t, (y_1, \ldots, y_t), (m_{1,1}, \ldots m_{t,n})\right)$ is feasible for \eqref{eq:vsums-OP}.
\end{lemma}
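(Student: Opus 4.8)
The plan is to exhibit an explicit choice of the auxiliary variables and then check each constraint of \eqref{eq:vsums-OP} in turn. Given a feasible $(x,z_t)$ for \eqref{eq:compact-OP}, I would set, for all $\ell \in [t]$ and $j \in [n]$,
\[
 y_\ell := \valBy{\ell}{x}, \qquad m_{\ell,j} := \max\!\bigl(0,\; \valBy{\ell}{x} - f_j(x)\bigr).
\]
The motivation: constraints (\progLinear.4) and (\progLinear.5) together force $m_{\ell,j} \ge \max(0, y_\ell - f_j(x))$, so to make (\progLinear.2)--(\progLinear.3) as easy as possible one takes these with equality; and among all values of $y_\ell$, the quantity $\ell y_\ell - \sum_j \max(0, y_\ell - f_j(x))$ is maximized at $y_\ell = \valBy{\ell}{x}$, where it equals the sum of the $\ell$ smallest objective values. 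Thus this is the "tightest" feasible assignment, and it should work exactly when the constraints of \eqref{eq:compact-OP} hold.

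For the verification: (\progLinear.1) is immediate, since it is literally (\progCompact.1), i.e.\ $x\in S$. Constraints (\progLinear.4) and (\progLinear.5) hold by construction, since $m_{\ell,j}$ is defined as $\max(0, y_\ell - f_j(x))$. The heart of the argument is then the identity
\[
 \ell\, \valBy{\ell}{x} \;-\; \sum_{j=1}^n m_{\ell,j} \;=\; \sum_{i=1}^{\ell} \valBy{i}{x}.
\]
To prove it, I would first apply Lemma~\ref{lemma:comp-to-p3-m-sums} with the constant $c = \valBy{\ell}{x}$ (which indeed does not depend on $j$) to get $\sum_{j=1}^n m_{\ell,j} = \sum_{j=1}^n \max(0, \valBy{\ell}{x} - f_j(x)) = \sum_{j=1}^n \max(0, \valBy{\ell}{x} - \valBy{j}{x})$. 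Since the $\valBy{j}{x}$ are sorted in nondecreasing order, the term $\max(0, \valBy{\ell}{x} - \valBy{j}{x})$ equals $\valBy{\ell}{x} - \valBy{j}{x}$ for $j \le \ell$ and $0$ for $j > \ell$; summing gives $\sum_{j=1}^n m_{\ell,j} = \ell\,\valBy{\ell}{x} - \sum_{j=1}^{\ell} \valBy{j}{x}$, which rearranges to the identity. Plugging this into the left-hand side of (\progLinear.2) reduces that constraint to $\sum_{i=1}^{\ell} \valBy{i}{x} \ge \sum_{i=1}^{\ell} z_i$ for $\ell \in [t-1]$, which is exactly (\progCompact.2); and the same computation with $\ell = t$ reduces (\progLinear.3) to $\sum_{i=1}^{t} \valBy{i}{x} \ge \sum_{i=1}^{t-1} z_i + z_t$, which is exactly (\progCompact.3). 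Hence all constraints of \eqref{eq:vsums-OP} are satisfied.

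I do not expect a genuine obstacle here: once the substitution is identified the proof is routine. The only delicate bookkeeping is making sure the positive part $\max(0,\valBy{\ell}{x}-\valBy{j}{x})$ vanishes precisely for $j>\ell$ (which is where the \emph{sortedness} of the $\valBy{j}{x}$ is used), and remembering to route the unsorted terms $f_j(x)$ through Lemma~\ref{lemma:comp-to-p3-m-sums} before doing the telescoping sum. The converse direction of the equivalence (from a feasible tuple of \eqref{eq:vsums-OP} back to \eqref{eq:compact-OP}), which is not asked here, would instead rely on the dual fact that $\sum_{i=1}^\ell \valBy{i}{x} = \max_{y}\bigl(\ell y - \sum_j \max(0, y - f_j(x))\bigr)$, giving $\ell y_\ell - \sum_j m_{\ell,j} \le \sum_{i=1}^\ell \valBy{i}{x}$ for any feasible $y_\ell, m_{\ell,j}$.
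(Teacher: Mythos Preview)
Your proposal is correct and follows essentially the same approach as the paper: the paper also sets $y_\ell := \valBy{\ell}{x}$ and $m_{\ell,j} := \max(0,\valBy{\ell}{x}-f_j(x))$, verifies (\progLinear.1), (\progLinear.4), (\progLinear.5) immediately, invokes Lemma~\ref{lemma:comp-to-p3-m-sums} to establish the identity $\ell y_\ell - \sum_j m_{\ell,j} = \sum_{i=1}^\ell \valBy{i}{x}$, and then reduces (\progLinear.2) and (\progLinear.3) to (\progCompact.2) and (\progCompact.3) exactly as you describe.
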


\begin{proof}
    For any $\ell \in [t]$ and $j \in [n]$ define $y_{\ell}$ and $m_{\ell,j}$ as follows:
    \begin{align*}
        \quad y_{\ell} &:= \valBy{\ell}{x}
        \\
        m_{\ell,j} &:= \max(0,\valBy{\ell}{x} -f_j(x))
    \end{align*}
    
    First, since $x$ satisfies constraint (\progCompact.1), it is also satisfies constraint (\progLinear.1) of (as both constraints are the same and include only $x$).
    
    In addition, based on the choice of $y$ and $m$, it is clear that  $m_{\ell,j} \geq 0$ and $m_{\ell,j} \geq \valBy{\ell}{x} - f_j(x) = y_{\ell} - f_j(x)$ for any $\ell \in [n]$ and $j \in [n]$.
    Therefore, this assignment satisfies constraints (\progLinear.4) and (\progLinear.5).
    
    To show that this assignment also satisfies constraints (\progLinear.2) and (\progLinear.3), we first prove that for any $\ell \in [n]$ this assignment satisfies the following equation:
    \begin{align}\label{eq:comp-to-p3}
        \ell y_{\ell} - \sum_{j=1}^n m_{\ell,j} = \sum_{i=1}^{\ell} \valBy{i}{x} 
    \end{align}
    By the choice of $m$,  $\sum_{j=1}^n m_{\ell,j} = \sum_{j=1}^n \max(0, \valBy{\ell}{x} - f_j(x))$, and therefore, by Lemma \ref{lemma:comp-to-p3-m-sums}, it equals to $\sum_{j=1}^{n} \max(0,\valBy{\ell}{x} -\valBy{j}{x})$.
    \eden{\eden{TODO:
to try to add: as $\valBy{\ell}{x}$ does not depend on $j$}}
    Since $\valBy{\ell}{x}$ is the $\ell$-th smallest objective, it is clear that $\valBy{\ell}{x} - \valBy{j}{x} \leq 0$ for any $j > \ell$, and $\valBy{\ell}{x} - \valBy{j}{x} \geq 0$ for any $j \leq \ell$.
    And so, $\sum_{j=1}^n m_{\ell,j} = \ell\cdot \valBy{\ell}{x} - \sum_{i=1}^{\ell} \valBy{i}{x}$:
    \begin{align*}
        &\sum_{j=1}^n m_{\ell,j} = \sum_{j=1}^{n} \max(0,\valBy{\ell}{x} -\valBy{j}{x})\\
        &=\sum_{j=1}^{\ell} \max(0,\valBy{\ell}{x} -\valBy{j}{x}) + \sum_{j=\ell+1}^n \max(0,\valBy{\ell}{x} -\valBy{j}{x})\\
        &=\sum_{j=1}^{\ell} (\valBy{\ell}{x} -\valBy{j}{x}) + \sum_{j=\ell+1}^n 0 = \ell\cdot \valBy{\ell}{x} - \sum_{i=1}^{\ell} \valBy{i}{x}
    \end{align*}
    We can now conclude Equation \eqref{eq:comp-to-p3}:
    \begin{align*}
        \ell y_{\ell} - \sum_{j=1}^n m_{\ell,j} &= \ell \cdot \valBy{\ell}{x} - \ell\cdot \valBy{\ell}{x} + \sum_{i=1}^{\ell} \valBy{i}{x}\\
        &=\sum_{i=1}^{\ell} \valBy{i}{x}.
    \end{align*}

    Now, since $x$ satisfies constraint (\progCompact.2),  $\sum_{i=1}^{\ell} \valBy{i}{x} \geq \sum_{i=1}^{\ell} z_i$ for any $\ell \in [t-1]$.
    Therefore, by Equation \eqref{eq:comp-to-p3}, $\ell\cdot y_{\ell} - \sum_{j=1}^n m_{\ell,j}\geq  \sum_{i=1}^{\ell} z_i$ for any $\ell \in [t-1]$ and this assignment satisfies constraint (\progLinear.2).
    Similarly, as $x$ and $z_t$ satisfy constraint (\progCompact.3), $\sum_{i=1}^{t} \valBy{i}{x} \geq \sum_{i=1}^{t} z_i$, and so by Equation \eqref{eq:comp-to-p3}, $t \cdot y_{t} - \sum_{j=1}^n m_{t,j}\geq  \sum_{i=1}^{t} z_i$.
    This means that it also satisfies constraints (\progLinear.3).
\end{proof}

Finally, Lemma \ref{lemma:comp-to-p3-is-bij} below proves the second direction of the equivalence:

\begin{lemma}\label{lemma:comp-to-p3-is-bij}
    Let $\left(x, z_t, (y_1, \ldots, y_t), (m_{1,1}, \ldots m_{t,n})\right)$ be a feasible solution to \eqref{eq:vsums-OP}.
   Then, $(x, z_t)$ is feasible for \eqref{eq:compact-OP}.
\end{lemma}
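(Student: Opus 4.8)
The plan is to take an arbitrary feasible point $\left(x, z_t, (y_1, \ldots, y_t), (m_{1,1}, \ldots m_{t,n})\right)$ of \eqref{eq:vsums-OP} and verify that $(x,z_t)$ satisfies each constraint of \eqref{eq:compact-OP}. Constraint (\progCompact.1) is identical to (\progLinear.1), namely $x\in S$, so it holds immediately. For (\progCompact.2) and (\progCompact.3), which are lower bounds on the partial sums $\sum_{i=1}^{\ell}\valBy{i}{x}$, the bridge is the inequality
\[
    \ell y_{\ell} - \sum_{j=1}^n m_{\ell,j} \;\leq\; \sum_{i=1}^{\ell}\valBy{i}{x}
    \qquad\text{for every }\ell\in[t],
\]
since the left-hand side is exactly the quantity bounded below by $\sum_{i=1}^{\ell}z_i$ (for $\ell<t$) or by $\sum_{i=1}^{t-1}z_i+z_t$ (for $\ell=t$) in (\progLinear.2)/(\progLinear.3).

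To prove this inequality, I would first use constraints (\progLinear.4) and (\progLinear.5): together they give $m_{\ell,j}\geq\max(0,\,y_{\ell}-f_j(x))$ for all $\ell\in[t]$, $j\in[n]$, hence $\sum_{j=1}^n m_{\ell,j}\geq\sum_{j=1}^n\max(0,\,y_{\ell}-f_j(x))$. By Lemma \ref{lemma:comp-to-p3-m-sums} applied with the constant $c=y_{\ell}$ (which does not depend on $j$), this sum equals $\sum_{j=1}^n\max(0,\,y_{\ell}-\valBy{j}{x})$. It therefore suffices to show the elementary fact that for any real number $y$,
\[
    \ell y - \sum_{j=1}^n\max(0,\,y-\valBy{j}{x}) \;\leq\; \sum_{i=1}^{\ell}\valBy{i}{x},
\]
or equivalently $\sum_{i=1}^{\ell}\bigl(y-\valBy{i}{x}\bigr)\leq\sum_{j=1}^n\max(0,\,y-\valBy{j}{x})$. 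This holds term by term: each of the first $\ell$ summands satisfies $y-\valBy{i}{x}\leq\max(0,\,y-\valBy{i}{x})$, and the remaining $n-\ell$ terms on the right are nonnegative. (This is precisely the LP-duality identity that $\sum_{i=1}^{\ell}\valBy{i}{x}=\max_{y}\bigl(\ell y-\sum_j\max(0,y-f_j(x))\bigr)$, but only the easy inequality is needed here.)

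Finally, I would assemble the pieces: for each $\ell\in[t-1]$, constraint (\progLinear.2) combined with the inequality above gives $\sum_{i=1}^{\ell}z_i\leq\ell y_{\ell}-\sum_{j=1}^n m_{\ell,j}\leq\sum_{i=1}^{\ell}\valBy{i}{x}$, which is constraint (\progCompact.2); and constraint (\progLinear.3) combined with the case $\ell=t$ gives $\sum_{i=1}^{t-1}z_i+z_t\leq t y_{t}-\sum_{j=1}^n m_{t,j}\leq\sum_{i=1}^{t}\valBy{i}{x}$, which is constraint (\progCompact.3). Together with $x\in S$, this shows $(x,z_t)$ is feasible for \eqref{eq:compact-OP}. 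The only non-bookkeeping step is the termwise inequality of the second paragraph, and even that is straightforward; the key conceptual point — and the reason this direction works without any tightness assumption on the auxiliary variables — is that we only ever need the upper bound $\ell y_{\ell}-\sum_j m_{\ell,j}\leq\sum_{i=1}^{\ell}\valBy{i}{x}$, never an equality.
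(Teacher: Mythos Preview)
Your proof is correct and follows essentially the same route as the paper: both establish the key inequality $\ell y_{\ell}-\sum_{j=1}^n m_{\ell,j}\leq\sum_{i=1}^{\ell}\valBy{i}{x}$ via $m_{\ell,j}\geq\max(0,y_\ell-f_j(x))$, Lemma~\ref{lemma:comp-to-p3-m-sums}, and the termwise comparison $\sum_{i=1}^{\ell}(y_\ell-\valBy{i}{x})\leq\sum_{j=1}^{n}\max(0,y_\ell-\valBy{j}{x})$, then combine with (\progLinear.2)/(\progLinear.3). The only cosmetic difference is that the paper phrases the termwise step as a proof by contradiction, whereas you argue it directly; your formulation is slightly cleaner.
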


\begin{proof}
    It is easy to see that since $x$ satisfies constraint (\progLinear.1), it is also satisfies constraint (\progCompact.1) (as both are the same).
    To show that it also satisfies constraints (\progCompact.2) and (\progCompact.3), we start by proving that for any $\ell \in [n]$:
    \begin{align}\label{eq:p3-to-comp}
          \sum_{i=1}^{\ell} \valBy{i}{x} \geq \ell y_{\ell} - \sum_{j=1}^n m_{\ell,j}
    \end{align}
    Suppose by contradiction that $ \sum_{i=1}^{\ell} \valBy{i}{x} < \ell y_{\ell} - \sum_{j=1}^n m_{\ell,j}$.

    For any $j\in [n]$ and any $\ell \in [n]$, $m_{\ell,j} \geq y_{\ell} - f_j(x)$ by constraint (\progLinear.4), and also $m_{\ell,j} \geq 0$ by constraint (\progLinear.5).
    Therefore, $m_{\ell,j} \geq \max(0,y_{\ell} -f_j(x))$.
    And so, by Lemma \ref{lemma:comp-to-p3-m-sums}, for any $\ell \in [t]$, the sum of $m_{\ell,j}$ over all $j \in [n]$ can be described as follows:
    \eden{TODO:
to try to add: as $\valBy{\ell}{x}$ does not depend on $j$}
    \begin{align}\label{eq:p3-to-conp-m-sum}
        \sum_{j=1}^n m_{\ell,j} \geq  \sum_{j=1}^n \max(0,y_{\ell} -f_j(x)) = \sum_{j=1}^n \max(0,y_{\ell} -\valBy{j}{x})
    \end{align}
    Therefore, $\sum_{i=1}^{\ell} \valBy{i}{x} <  \ell y_{\ell} - \sum_{j=1}^n \max(0,y_{\ell} -\valBy{j}{x})$ .
    Which means that:
\begin{align*}
    & \ell y_{\ell} - \sum_{i=1}^{\ell} \valBy{i}{x} - \sum_{j=1}^n \max(0,y_{\ell} -\valBy{j} {x}) > 0
\end{align*}
However, we will now see that the value of this expression is at most $0$, which is a contradiction:
\begin{align*}        
    &\ell y_{\ell} - \sum_{i=1}^{\ell} \valBy{i}{x} - \sum_{j=1}^n \max(0,y_{\ell} -\valBy{j} {x}) \\
    &= \sum_{i=1}^{\ell} y_{\ell} - \sum_{i=1}^{\ell} \valBy{i}{x} - \sum_{j=1}^n \max(0,y_{\ell} -\valBy{j} {x})\\
    & \equWithExp{\text{Since max with $0$ is at least $0$}}{\leq \sum_{i=1}^{\ell}( y_{\ell} - \valBy{i}{x} ) - \sum_{j=1}^{\ell} \max(0,y_{\ell} -\valBy{j} {x}) - \sum_{j=\ell+1}^n 0}\\
        &=  \sum_{j=1}^{\ell} \left((y_{\ell} - \valBy{j}{x}) - \max(0,y_{\ell} -\valBy{j}{x})\right)\\
        &\equWithExp{\text{Since each element\displayEcai{ of this sum} is at most $0$}}{\leq 0}
   \end{align*}
This is a contradiction; so \eqref{eq:p3-to-comp} is proved.

    Now, by constraint (\progLinear.2), $\ell y_{\ell} - \sum_{j=1}^n m_{\ell,j}\geq  \sum_{i=1}^{\ell} z_i$  for any $\ell \in [t-1]$. Therefore, by \eqref{eq:p3-to-comp}, also $\sum_{i=1}^{\ell} \valBy{i}{x} \geq \sum_{i=1}^{\ell} z_i$, which means that $x$ satisfies constraint (\progCompact.2).
    Similarly, by constraint (\progLinear.3),  $t y_{t} - \sum_{j=1}^n m_{t,j}\geq  \sum_{i=1}^{t} z_i$, and so by \eqref{eq:p3-to-comp}, also $\sum_{i=1}^{t} \valBy{i}{x} \geq \sum_{i=1}^{t} z_i$.
    This means that $x$ and $z_t$ satisfy constraint (\progCompact.3).
\end{proof}

\subsection{Proof of Theorem \ref{th:main}}
\label{sub:th:main}
This section is dedicated to proving Theorem \ref{th:main}:
let $\multApprox\in (0,1]$, $\additiveApprox \geq 0$, and \textsf{OP} be an $(\multApprox,\additiveApprox)$-approximation procedure to \eqref{eq:sums-OP} or \eqref{eq:vsums-OP}. Then Algorithm \ref{alg:basic-ordered-Outcomes} outputs an $\left(\frac{\multApprox^2}{1-\multApprox + \multApprox^2}, \frac{\additiveApprox}{1-\multApprox +\multApprox^2}\right)$-leximin-approximation. 

Based on Lemma \ref{lem:equivalence-of-all-three}, it is sufficient to prove the theorem for \eqref{eq:compact-OP}.

We start by observing that the value of the variable $z_t$ is completely determined by the variable $x$. This is because $z_t$ only appears in the last constraint, which is equivalent to 
$z_t \leq \sum_{i=1}^{t} \valBy{i}{x} - \sum_{i=1}^{t-1} z_i$. 
Therefore, for every $x$ that satisfies the first two constraints, 
it is possible to satisfy the last constraint by setting $z_t$ to any value which is at most
$\sum_{i=1}^{t} \valBy{i}{x} - \sum_{i=1}^{t-1} z_i$.
Moreover, as the program aims to maximize $z_t$, it will necessarily 
set $z_t$ to be equal to that expression, since $z_t$ is maximized when the constraint holds with equality. This is summarized in the observation below:

\begin{observation}
\label{obs:feasi-and-constraint2}
For any $t\geq 1$ and any constants $z_1,\ldots, z_{t-1}$,
every vector $x$ that satisfies constraints (\progCompact.1)  and (\progCompact.2)
is a part of a feasible solution $(x,z_t)$ for $z_t = \sum_{i=1}^{t} \valBy{i}{x} - \sum_{i=1}^{t-1} z_i$.
Moreover, 
\label{obs:obj-value}
the objective value  obtained by a feasible solution $x$ to the problem \eqref{eq:compact-OP} solved in iteration $t$ is $z_t = \sum_{i=1}^{t} \valBy{i}{x} - \sum_{i=1}^{t-1}  z_i$.
\end{observation}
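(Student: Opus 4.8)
The plan is to exploit the fact that, in \eqref{eq:compact-OP}, the scalar variable $z_t$ appears in exactly one constraint — namely (\progCompact.3), which rearranges to the upper bound $z_t \leq \sum_{i=1}^{t} \valBy{i}{x} - \sum_{i=1}^{t-1} z_i$. Neither (\progCompact.1) nor (\progCompact.2) mentions $z_t$. This single structural observation drives both halves of the statement, and I do not expect a genuine obstacle here: the whole argument is a direct verification, so the ``hard part'' is really just being careful about the bookkeeping and the boundary case.

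For the feasibility claim, I would take an arbitrary $x$ satisfying (\progCompact.1) and (\progCompact.2) and check that the pair $(x,z_t)$ with $z_t := \sum_{i=1}^{t} \valBy{i}{x} - \sum_{i=1}^{t-1} z_i$ satisfies all three constraints. Constraints (\progCompact.1) and (\progCompact.2) hold by hypothesis, as they constrain only $x$ and are unaffected by the choice of $z_t$. For (\progCompact.3), substituting the chosen $z_t$ turns the inequality into $\sum_{i=1}^{t} \valBy{i}{x} \geq \sum_{i=1}^{t-1} z_i + \left(\sum_{i=1}^{t} \valBy{i}{x} - \sum_{i=1}^{t-1} z_i\right)$, whose right-hand side collapses to $\sum_{i=1}^{t} \valBy{i}{x}$; the inequality thus holds with equality. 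Hence $(x,z_t)$ is feasible.

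For the objective-value claim, I would fix a feasible $x$ and argue that since the program maximizes $z_t$, and the only constraint bounding $z_t$ from above is (\progCompact.3), the largest feasible value of $z_t$ is exactly $\sum_{i=1}^{t} \valBy{i}{x} - \sum_{i=1}^{t-1} z_i$. Equivalently, the maximizing $z_t$ attains (\progCompact.3) with equality, so the objective value obtained by $x$ is precisely this expression.

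The only point requiring a little care is the interpretation that a feasible $x$ alone determines the optimal $z_t$, together with the boundary case $t=1$: there the index range of (\progCompact.2) is empty and the sum $\sum_{i=1}^{t-1} z_i$ is empty (equal to $0$), so the claim reduces to $z_1 = \valBy{1}{x}$, which is consistent with the general argument. With this verified, both assertions of the observation follow.
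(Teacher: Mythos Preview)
Your proposal is correct and follows essentially the same reasoning as the paper: the paper notes (immediately before the observation) that $z_t$ appears only in constraint (\progCompact.3), which is equivalent to $z_t \leq \sum_{i=1}^{t} \valBy{i}{x} - \sum_{i=1}^{t-1} z_i$, so any $x$ satisfying the first two constraints yields a feasible pair, and maximization forces equality. Your added remark on the $t=1$ boundary case is a nice touch but not required.
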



Based on Observation \ref{obs:obj-value}, we can now slightly abuse the terminology and say that a solution $x$ is ``feasible`` in iteration $t$ if it satisfies constraints (\progCompact.1)  and (\progCompact.2) of the program solved in iteration $t$.

We denote $\retSol := x_n$ = the solution $x$ attained at the last iteration ($t=n$) of the algorithm. 
Since $\retSol$ is a feasible solution of \eqref{eq:compact-OP} in iteration $n$, and as each
iteration only adds new constraints to (\progCompact.2), it follows that $\retSol$ is also a feasible solution of \eqref{eq:compact-OP} in any iteration $1 \leq t\leq n$. 
\begin{observation}\label{obs:retSol-solves-any-t}
$\retSol$ is a feasible solution of \eqref{eq:compact-OP} in any iteration $1 \leq t\leq n$.
\end{observation}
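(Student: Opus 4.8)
The plan is to prove the observation by exploiting the fact that the constraint family (\progCompact.2) grows \emph{monotonically} across iterations: the constraints imposed in iteration $t$ are exactly a sub-collection of those imposed in iteration $n$. Recall that, under the abused terminology introduced after Observation \ref{obs:obj-value}, a vector $x$ is \emph{feasible in iteration $t$} precisely when it satisfies constraints (\progCompact.1) and (\progCompact.2) of the program \eqref{eq:compact-OP} solved in that iteration (the constraint (\progCompact.3) on $z_t$ is irrelevant, since $z_t$ is automatically determined). So the target reduces to checking these two conditions for $\retSol$ in every iteration $t \leq n$.

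First I would establish that $\retSol = x_n$ is feasible in iteration $n$. Since $x_n$ is the output of the approximation procedure \textsf{OP} applied to the program solved at $t=n$, and any approximation algorithm by definition returns a point in the feasible region, $\retSol$ must satisfy all constraints of that program. In particular $\retSol \in S$, which is constraint (\progCompact.1), and $\sum_{i=1}^{\ell} \valBy{i}{\retSol} \geq \sum_{i=1}^{\ell} z_i$ for every $\ell \in [n-1]$, which is constraint (\progCompact.2) at $t=n$.

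Next I would fix an arbitrary $t$ with $1 \leq t \leq n$ and verify the two required conditions. Constraint (\progCompact.1) is identical in every iteration ($x \in S$), so it holds for $\retSol$ immediately. For constraint (\progCompact.2) in iteration $t$, the condition ranges over $\ell \in [t-1]$; since $t \leq n$ we have $[t-1] \subseteq [n-1]$, so each such inequality $\sum_{i=1}^{\ell} \valBy{i}{\retSol} \geq \sum_{i=1}^{\ell} z_i$ already appears among the inequalities satisfied by $\retSol$ in iteration $n$. Hence $\retSol$ satisfies (\progCompact.2) in iteration $t$ as well, and is therefore feasible in iteration $t$, completing the proof.

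This argument is essentially immediate, so there is no genuine obstacle; the only point requiring care is the bookkeeping. One must keep in mind that the constants $z_1, \ldots, z_{n-1}$ appearing in the iteration-$t$ constraints are the \emph{same} constants fixed by the earlier iterations of Algorithm \ref{alg:basic-ordered-Outcomes} (they are not re-optimized), so that for each $\ell \leq t-1$ the iteration-$t$ inequality is \emph{literally} the iteration-$n$ inequality, not merely an analogous one. One must also invoke the fact that \textsf{OP} returns a feasible — not merely approximately optimal — solution, which is precisely what guarantees $\retSol \in S$ and that $\retSol$ respects the already-fixed lower bounds.
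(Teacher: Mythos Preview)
Your proof is correct and follows exactly the same reasoning as the paper: $\retSol$ is feasible in iteration $n$ (as the output of \textsf{OP}), and since the constraint set (\progCompact.2) only grows with $t$, feasibility in iteration $n$ implies feasibility in every earlier iteration. The paper states this in a single sentence just before the observation; your write-up is simply a more detailed unpacking of that sentence.
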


Lastly, as the value obtained as $(\multApprox, \additiveApprox)$-approximation for this problem is the \emph{constant} $z_t$, the optimal value is at most $\frac{1}{\multApprox} (z_t+\additiveError)$. 
Consequently, the objective value of any feasible solution is at most this value.
Since $\retSol$ is feasible for any iteration $t$ (Observation \ref{obs:retSol-solves-any-t}) and since the objective value corresponding to $\retSol$ is $\sum_{i=1}^t \valBy{i}{\retSol} - \sum_{i=1}^{t-1} z_i$ (Observation \ref{obs:obj-value}), we can conclude:

\begin{observation}\label{obs:obj-xt-to-zt}
    The objective value obtained by $\retSol$ to the problem \eqref{eq:compact-OP} that was solved in iteration $t$ is at most $\frac{1}{\multApprox} (z_t+\additiveError)$. That is:
    \begin{align*}
        \sum_{i=1}^t \valBy{i}{\retSol} - \sum_{i=1}^{t-1} z_i \leq \frac{1}{\multApprox} \left(z_t+\additiveError \right).
    \end{align*}
\end{observation}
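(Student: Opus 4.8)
The plan is to obtain the bound by chaining together three ingredients that are already in place: the approximation guarantee of \textsf{OP} at iteration $t$, the feasibility of $\retSol$ in every iteration, and the identification of the objective value attained by a feasible point. The whole statement is essentially a corollary of the two preceding observations together with the definition of an approximate optimizer, so the work is in assembling these cleanly rather than in any new calculation.

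First I would unwind what the $(\multApprox,\additiveApprox)$-approximation guarantee says in iteration $t$. Let $z_t^*$ denote the \emph{true} optimal value of the program \eqref{eq:compact-OP} solved in iteration $t$. Because \textsf{OP} is an $(\multApprox,\additiveApprox)$-approximation algorithm for this maximization problem, the value it returns satisfies $z_t \geq \multApprox \cdot z_t^* - \additiveApprox$. Since $\multApprox \in (0,1]$ is strictly positive, rearranging yields $z_t^* \leq \frac{1}{\multApprox}\left(z_t + \additiveApprox\right)$. This is the only place where the approximation factors enter; the main point to get right is the direction of the inequality, namely that $z_t$ is the (smaller) value actually returned while $z_t^*$ is the (larger) optimum being approximated from below.

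Next I would pin down the objective value that $\retSol$ attains in iteration $t$. Observation~\ref{obs:retSol-solves-any-t} guarantees that $\retSol$ is a feasible solution of \eqref{eq:compact-OP} in every iteration $1 \leq t \leq n$, and Observation~\ref{obs:obj-value} identifies the objective value of any feasible $x$ in iteration $t$ as $\sum_{i=1}^{t}\valBy{i}{x} - \sum_{i=1}^{t-1} z_i$. Specializing to $x = \retSol$, its objective value in iteration $t$ is $\sum_{i=1}^{t}\valBy{i}{\retSol} - \sum_{i=1}^{t-1} z_i$.

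Finally I would combine the pieces: the objective value of any feasible solution cannot exceed the optimal value $z_t^*$, so $\sum_{i=1}^{t}\valBy{i}{\retSol} - \sum_{i=1}^{t-1} z_i \leq z_t^* \leq \frac{1}{\multApprox}\left(z_t + \additiveApprox\right)$, which is precisely the claimed inequality. There is no genuine obstacle here; the only care needed is to track that the returned constant $z_t$ lower-bounds (rather than upper-bounds) the true optimum, so that the factor $\frac{1}{\multApprox}$ and the $+\additiveApprox$ term land on the correct side.
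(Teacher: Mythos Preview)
Your proposal is correct and follows essentially the same route as the paper: the paper, too, derives the observation by combining the $(\multApprox,\additiveApprox)$-approximation guarantee (so the optimum is at most $\frac{1}{\multApprox}(z_t+\additiveApprox)$), the feasibility of $\retSol$ in every iteration (Observation~\ref{obs:retSol-solves-any-t}), and the identification of its objective value as $\sum_{i=1}^{t}\valBy{i}{\retSol} - \sum_{i=1}^{t-1} z_i$ (Observation~\ref{obs:obj-value}). Your write-up just spells out the rearrangement $z_t \geq \multApprox z_t^* - \additiveApprox \Rightarrow z_t^* \leq \frac{1}{\multApprox}(z_t+\additiveApprox)$ a bit more explicitly than the paper does.
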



We start with Lemmas \ref{lemma:beta-vk}-\ref{lemma:fk-to-all}, which establish a relationship between the $k$-th least objective value obtained by $\retSol$ 
and the difference between the sum of the $(k-1)$ least objective values obtained by $\retSol$ and the sum of the $(k-1)$ first $z_i$ values.
Theorem \ref{th:main} then uses this relation to prove that the existence of another solution that would be $\left(\frac{\multApprox^2}{1-\multApprox + \multApprox^2}, \frac{\additiveApprox}{1-\multApprox +\multApprox^2}\right)$-leximin-preferred over $\retSol$ would lead to a contradiction.

For clarity, throughout the proofs, we denote the multiplicative \emph{error} factor by $\multError = 1-\multApprox$.


\begin{lemma}\label{lemma:beta-vk}
    For any $1 \leq k\leq n$, 
    \begin{align*}
        &\multError \sum_{i=1}^{k} \valBy{i}{\retSol} - \multError \sum_{i=1}^{k-1} z_i \geq \sum_{i=1}^k \valBy{i}{\retSol} - \sum_{i=1}^k z_i - \additiveError
    \end{align*}
\end{lemma}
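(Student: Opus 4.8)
The plan is to derive the inequality directly from Observation \ref{obs:obj-xt-to-zt}, which already controls the objective value attained by $\retSol$ in every iteration. First I would instantiate that observation at $t=k$, which gives
\[
\sum_{i=1}^{k} \valBy{i}{\retSol} - \sum_{i=1}^{k-1} z_i \;\le\; \frac{1}{\multApprox}\bigl(z_k + \additiveError\bigr).
\]
Since $\multApprox \in (0,1]$ is strictly positive, multiplying both sides by $\multApprox$ preserves the direction of the inequality, yielding $\multApprox\bigl(\sum_{i=1}^{k}\valBy{i}{\retSol} - \sum_{i=1}^{k-1} z_i\bigr) \le z_k + \additiveError$. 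Before invoking Observation \ref{obs:obj-xt-to-zt} I would recall, as it sets up, that $\retSol$ is feasible in iteration $t=k$ (Observation \ref{obs:retSol-solves-any-t}) and that the objective value a feasible point attains in iteration $t$ is exactly $\sum_{i=1}^{t}\valBy{i}{\retSol} - \sum_{i=1}^{t-1} z_i$ (Observation \ref{obs:obj-value}), so the hypothesis of Observation \ref{obs:obj-xt-to-zt} genuinely applies.

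The rest is purely algebraic, using the substitution $\multApprox = 1-\multError$. Writing $D := \sum_{i=1}^{k}\valBy{i}{\retSol} - \sum_{i=1}^{k-1} z_i$ for brevity, the inequality $\multApprox D \le z_k + \additiveError$ becomes $(1-\multError)D \le z_k + \additiveError$, i.e. $D - \multError D \le z_k + \additiveError$, which rearranges to $D - z_k - \additiveError \le \multError D$. Now I would expand both sides: on the left, $D - z_k = \sum_{i=1}^{k}\valBy{i}{\retSol} - \sum_{i=1}^{k-1} z_i - z_k = \sum_{i=1}^{k}\valBy{i}{\retSol} - \sum_{i=1}^{k} z_i$, using the trivial index bookkeeping $\sum_{i=1}^{k-1} z_i + z_k = \sum_{i=1}^{k} z_i$; on the right, $\multError D = \multError \sum_{i=1}^{k}\valBy{i}{\retSol} - \multError\sum_{i=1}^{k-1} z_i$. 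Substituting these back gives precisely the claimed inequality
\[
\multError \sum_{i=1}^{k} \valBy{i}{\retSol} - \multError \sum_{i=1}^{k-1} z_i \;\ge\; \sum_{i=1}^k \valBy{i}{\retSol} - \sum_{i=1}^k z_i - \additiveError .
\]

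I do not expect a genuine obstacle: the argument is a one-line consequence of an already-established observation followed by elementary rearrangement. The only points requiring a little care are keeping the sign correct when distributing $\multError$ and $\multApprox$ (using $1 - \multError = \multApprox$), and checking the boundary case $k=1$, where $\sum_{i=1}^{0} z_i$ is the empty sum $0$ and Observation \ref{obs:obj-xt-to-zt} reads $\valBy{1}{\retSol} \le \tfrac{1}{\multApprox}(z_1 + \additiveError)$ — the same manipulation goes through verbatim. The identity $\sum_{i=1}^{k-1} z_i + z_k = \sum_{i=1}^{k} z_i$ is the only "nontrivial" fact used, and it is immediate.
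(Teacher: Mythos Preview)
Your proposal is correct and follows essentially the same approach as the paper: both start from Observation \ref{obs:obj-xt-to-zt} at $t=k$, multiply through by $\multApprox = 1-\multError$, and rearrange to obtain the claimed inequality. The only difference is cosmetic---you introduce the shorthand $D$ and explicitly check the $k=1$ boundary, while the paper writes out the chain of implications directly.
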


\begin{proof}
By Observation \ref{obs:obj-xt-to-zt},
    \begin{align*}
         &\sum_{i=1}^k \valBy{i}{\retSol} - \sum_{i=1}^{k-1} z_i \leq \frac{1}{\multApprox} \left(z_k + \additiveError \right) = \frac{1}{1-\multError} \left(z_k + \additiveError \right)\\
         &\Rightarrow (1-\multError) \left(\sum_{i=1}^{k} \valBy{i}{\retSol} - \sum_{i=1}^{k-1}  z_i\right) \leq z_k +\additiveError\\
         &\Rightarrow \sum_{i=1}^{k} \valBy{i}{\retSol} - \sum_{i=1}^{k-1}  z_i - \multError \sum_{i=1}^{k} \valBy{i}{\retSol} + \multError \sum_{i=1}^{k-1}  z_i \leq z_k +\additiveError\\
         &\Rightarrow \sum_{i=1}^{k} \valBy{i}{\retSol} - \sum_{i=1}^{k}  z_i - \additiveError \leq \multError \sum_{i=1}^{k} \valBy{i}{\retSol} - \multError \sum_{i=1}^{k-1}  z_i. 
    \end{align*}
        \qedhere
\end{proof}

\begin{lemma}\label{lemma:beta-sums-to-diff}
    For any $1 \leq k \leq n$, 
    \begin{align*}
        \sum_{i=1}^k \multError^{i} \valBy{k-i+1}{\retSol} \geq \sum_{i=1}^k \valBy{i}{\retSol} - \sum_{i=1}^{k} z_i -\frac{1}{1 - \multError}\additiveError
    \end{align*}
\end{lemma}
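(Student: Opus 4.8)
The statement is a clean consequence of Lemma \ref{lemma:beta-vk}, and the natural approach is induction on $k$ (with $1\le k\le n$). Throughout, abbreviate $S_k := \sum_{i=1}^k \valBy{i}{\retSol}$ and $Z_k := \sum_{i=1}^k z_i$, so Lemma \ref{lemma:beta-vk} reads $\multError S_k - \multError Z_{k-1} \geq S_k - Z_k - \additiveError$, and the target inequality for index $k$ is $\sum_{i=1}^k \multError^{i}\valBy{k-i+1}{\retSol} \geq S_k - Z_k - \frac{\additiveError}{1-\multError}$. Note $\multError = 1-\multApprox \in [0,1)$ since $\multApprox\in(0,1]$, so $\frac{1}{1-\multError}$ is well-defined and at least $1$; the case $\multError = 0$ is harmless because the weighted sum starts at the power $\multError^1$.

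\textbf{Base case ($k=1$).} Here the claim is $\multError\valBy{1}{\retSol} \geq \valBy{1}{\retSol} - z_1 - \frac{\additiveError}{1-\multError}$. Lemma \ref{lemma:beta-vk} at $k=1$ gives $\multError\valBy{1}{\retSol} \geq \valBy{1}{\retSol} - z_1 - \additiveError$, and since $\frac{1}{1-\multError}\geq 1$ we may weaken $-\additiveError$ to $-\frac{\additiveError}{1-\multError}$, which finishes the base case.

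\textbf{Inductive step.} Assume the statement for $k-1$, i.e. $\sum_{i=1}^{k-1}\multError^{i}\valBy{k-i}{\retSol} \geq S_{k-1} - Z_{k-1} - \frac{\additiveError}{1-\multError}$. Peel off the $i=1$ term of the weighted sum for $k$ and reindex the rest by $j=i-1$: $\sum_{i=1}^{k}\multError^{i}\valBy{k-i+1}{\retSol} = \multError\valBy{k}{\retSol} + \multError\sum_{j=1}^{k-1}\multError^{j}\valBy{k-j}{\retSol}$. Apply the inductive hypothesis to the second summand and use $\multError\valBy{k}{\retSol} + \multError S_{k-1} = \multError S_k$ to obtain $\sum_{i=1}^{k}\multError^{i}\valBy{k-i+1}{\retSol} \geq \multError S_k - \multError Z_{k-1} - \frac{\multError\additiveError}{1-\multError}$. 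Now invoke Lemma \ref{lemma:beta-vk} to replace $\multError S_k - \multError Z_{k-1}$ by the lower bound $S_k - Z_k - \additiveError$, giving $\sum_{i=1}^{k}\multError^{i}\valBy{k-i+1}{\retSol} \geq S_k - Z_k - \additiveError - \frac{\multError\additiveError}{1-\multError}$. Finally, $\additiveError + \frac{\multError\additiveError}{1-\multError} = \additiveError\cdot\frac{(1-\multError)+\multError}{1-\multError} = \frac{\additiveError}{1-\multError}$, which is exactly the claimed bound for $k$.

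\textbf{Main obstacle.} There is no real difficulty in the algebra; the one point that needs care is tracking how the additive slack accumulates across the recursion. Each application of the inductive step multiplies the previously accumulated additive error by $\multError$ and adds one fresh $\additiveError$ from Lemma \ref{lemma:beta-vk}, so after $k$ steps the total is $\additiveError\sum_{j=0}^{k-1}\multError^{j} \le \additiveError\sum_{j=0}^{\infty}\multError^{j} = \frac{\additiveError}{1-\multError}$ — a bound that is uniform in $k$, which is precisely what lets the final leximin-approximation factor be independent of the number of objectives. Making sure this geometric bound is applied as an inequality (rather than an equality) at each step, and that $\multError<1$ justifies the closed form, is the only thing to watch.
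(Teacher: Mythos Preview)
Your proof is correct and follows essentially the same approach as the paper: induction on $k$, with the base case coming directly from Lemma \ref{lemma:beta-vk} (weakened via $\frac{1}{1-\multError}\ge 1$), and the inductive step obtained by peeling off the leading term, reindexing, applying the hypothesis for $k-1$, then invoking Lemma \ref{lemma:beta-vk} and collapsing $\additiveError + \frac{\multError\additiveError}{1-\multError}$ to $\frac{\additiveError}{1-\multError}$. Your shorthand $S_k, Z_k$ and the closing remark about the geometric accumulation of additive error are nice expository additions, but the argument is the same.
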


\begin{proof}
    The proof is by induction on $k$.
    For $k=1$ the claim follows directly from Lemma \ref{lemma:beta-vk} as $\frac{1}{1-\multError} \geq 1$ for any $\multError \in [0,1)$.
    Assuming the claim is true for $1,\ldots k-1$, we show it is true for $k$:
    \begin{align*}
        &\sum_{i=1}^k \multError^{i} \valBy{k-i+1}{\retSol} = \multError \valBy{k}{\retSol} + \sum_{i=2}^k \multError^{i} \valBy{k-i+1}{\retSol}\\
        &= \multError \valBy{k}{\retSol} + \sum_{i=1}^{k-1} \multError^{i+1} \valBy{k-(i+1)+1}{\retSol} \\
        &= \multError \valBy{k}{\retSol} + \multError \sum_{i=1}^{k-1} \multError^{i} \valBy{(k-1) -i+1}{\retSol}\\
    & \equWithExp{\text{By induction assumption}}{\geq \multError \valBy{k}{\retSol} + \multError \left(\sum_{i=1}^{k-1} \valBy{i}{\retSol} - \sum_{i=1}^{k-1} z_i -\frac{1}{1-\multError} \additiveError\right)}\\
        &= \multError \sum_{i=1}^{k} \valBy{i}{\retSol} - \multError\sum_{i=1}^{k-1} z_i -\frac{\multError}{1 - \multError}  \additiveError\\
    & \equWithExp{\text{By Lemma \ref{lemma:beta-vk}}}{\geq \sum_{i=1}^{k} \valBy{i}{\retSol} -  \sum_{i=1}^{k} z_i - \additiveError -\frac{\multError}{1 - \multError}  \additiveError}\\
         &=\sum_{i=1}^k \valBy{i}{\retSol} - \sum_{i=1}^{k} z_i -\frac{1}{1 - \multError}\additiveError
    \end{align*}
        \qedhere
\end{proof}

\begin{lemma}\label{lemma:fk-to-all}
    For all $1<k \leq n$, 
    \begin{align*}
        \frac{\multError}{1-\multError} \valBy{k}{\retSol} \geq \sum_{i=1}^{k-1}\valBy{i}{\retSol} - \sum_{i=1}^{k-1}z_i - \frac{1}{1 - \multError}\additiveError
    \end{align*}
\end{lemma}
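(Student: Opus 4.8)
The plan is to obtain Lemma~\ref{lemma:fk-to-all} almost for free from Lemma~\ref{lemma:beta-sums-to-diff}, by instantiating that lemma at index $k-1$ instead of $k$ (which is legitimate since $k>1$ ensures $1\le k-1\le n$) and then collapsing the weighted sum on its left-hand side into a single multiple of $\valBy{k}{\retSol}$. Substituting $k-1$ for $k$ in Lemma~\ref{lemma:beta-sums-to-diff} and using $\valBy{(k-1)-i+1}{\retSol}=\valBy{k-i}{\retSol}$ gives
\[
\sum_{i=1}^{k-1}\multError^{i}\,\valBy{k-i}{\retSol}\ \ge\ \sum_{i=1}^{k-1}\valBy{i}{\retSol}-\sum_{i=1}^{k-1}z_i-\frac{1}{1-\multError}\additiveError .
\]
Since the right-hand side here is exactly the quantity appearing in the claim, it remains only to show the left-hand side is at most $\frac{\multError}{1-\multError}\valBy{k}{\retSol}$.

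For this bound, I would note that each index $k-i$ with $1\le i\le k-1$ lies in $\{1,\dots,k-1\}$, so $\valBy{k-i}{\retSol}\le\valBy{k}{\retSol}$ because $\sortedValues{\retSol}$ is non-decreasing, while $\multError^{i}\ge 0$ since $\multError=1-\multApprox\in[0,1)$. Hence
\[
\sum_{i=1}^{k-1}\multError^{i}\,\valBy{k-i}{\retSol}\ \le\ \valBy{k}{\retSol}\sum_{i=1}^{k-1}\multError^{i}\ \le\ \valBy{k}{\retSol}\sum_{i=1}^{\infty}\multError^{i}\ =\ \frac{\multError}{1-\multError}\,\valBy{k}{\retSol},
\]
where the second inequality uses that the added tail terms $\multError^{i}$ are non-negative together with $\valBy{k}{\retSol}\ge 0$ (all objective values are non-negative), and the final equality is the geometric-series identity, valid because $\multError\in[0,1)$. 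Chaining the two displays yields the lemma.

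I do not expect a real obstacle: once Lemma~\ref{lemma:beta-sums-to-diff} is available, this is a two-line argument. The only things to be careful about are the bookkeeping — applying the previous lemma at $k-1$ rather than $k$ and re-indexing $\valBy{(k-1)-i+1}{\retSol}=\valBy{k-i}{\retSol}$ — and keeping track of exactly which hypotheses are invoked: monotonicity of the sorted vector to replace $\valBy{k-i}{\retSol}$ by $\valBy{k}{\retSol}$, non-negativity of objective values to pass from the finite to the infinite geometric sum, and $\multError\in[0,1)$ for convergence. The degenerate case $\multError=0$ (no multiplicative error) is covered automatically, since then both sides of the asserted inequality vanish and the argument above degenerates gracefully.
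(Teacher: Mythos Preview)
Your proposal is correct and follows essentially the same approach as the paper: both apply Lemma~\ref{lemma:beta-sums-to-diff} at index $k-1$, replace each $\valBy{k-i}{\retSol}$ by $\valBy{k}{\retSol}$ via monotonicity of the sorted vector, and bound the resulting finite geometric sum by $\frac{\multError}{1-\multError}$. The only cosmetic difference is direction---the paper starts from $\frac{\multError}{1-\multError}\valBy{k}{\retSol}$ and chains downward, whereas you start from Lemma~\ref{lemma:beta-sums-to-diff} and chain upward---and your re-indexing $\valBy{(k-1)-i+1}{\retSol}=\valBy{k-i}{\retSol}$ is slightly tidier. One small imprecision: in the $\multError=0$ case the right-hand side need not vanish, it is merely $\le 0$ (which your chained displays still establish), so the claim holds but not for the reason stated in your closing remark.
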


\begin{proof}
    First, notice that since $k \geq (k-1)-i+1$ for any $1\leq i \leq k$ and as the function $\valBy{i}$ represents the $i$-th smallest objective value, also:
    \begin{align}\label{eq:increase-by-obj-size}
        \forall 1\leq i \leq k \colon \Hquad \valBy{k}{\retSol} \geq \valBy{(k-1)-i+1}{\retSol}
    \end{align}
    In addition, consider the geometric series with a first element $1$, a ratio $\multError$, and a length $(k-1)$. 
    As $\multError < 1$, its sum can be bounded in the following way:
    \begin{align}\label{eq:geometric-series-beta}
        \sum_{i=1}^{k-1} \multError^{i-1} = \frac{1-\multError^{k-1}}{1-\multError} < \lim_{k \to \infty}\frac{1-\multError^{k-1}}{1-\multError} = \frac{1}{1-\multError}
    \end{align}
    
    Now, the claim can be concluded as follows:
    \begin{align*}
        & \frac{\multError}{1-\multError}\valBy{k}{\retSol} = \multError \left(\frac{1}{1-\multError} \valBy{k}{\retSol} \right)\\
        & \equWithExp{\text{By Equation \eqref{eq:geometric-series-beta}}}{ > \multError \left(\sum_{i=1}^{k-1} \multError^{i-1} \valBy{k}{\retSol} \right)}\\
        & \equWithExp{\text{By Equation \eqref{eq:increase-by-obj-size}}}{\geq  \multError \left(\sum_{i=1}^{k-1} \multError^{i-1} \valBy{(k-1)-i+1}{\retSol} \right)}\\
        &= \sum_{i=1}^{k-1} \multError^{i} \valBy{(k-1)-i+1}{\retSol} \\ 
        & \equWithExp{\text{By Lemma \ref{lemma:beta-sums-to-diff} for $(k-1)\geq 1$}}{\geq \sum_{i=1}^{k-1}\valBy{i}{\retSol} - \sum_{i=1}^{k-1}z_i -\frac{1}{1 - \multError} \additiveError}
\end{align*}
\end{proof}


We are now ready to prove the Theorem \ref{th:main}.
\begin{proof}[Proof of Theorem \ref{th:main}]
    Recall that the claim is that $\retSol$ is a $\left(\frac{\multApprox^2}{1-\multApprox + \multApprox^2}, \frac{\additiveApprox}{1-\multApprox +\multApprox^2}\right)$-leximin-approximation.
    
    For brevity, we define the following constant:
    \begin{align*}
        \Delta(\multApprox) = \frac{1}{1-\multApprox + \multApprox^2}
    \end{align*}
    Accordingly, we need to prove that $\retSol$ is a $\left(\multApprox^2\cdot\Delta(\multApprox), \additiveApprox \cdot\Delta(\multApprox)\right)$-approximation.
    
   As $\multApprox = 1 - \multError$, it is easy to verify that:
   \begin{align}\label{eq:delta-alpha-with-beta}
       \Delta(\multApprox) = \frac{1}{1-\multError + \multError^2}
   \end{align}

    Now, suppose by contradiction that $\retSol$ is \emph{not} $\left(\multApprox^2\cdot\Delta(\multApprox), \additiveApprox \cdot\Delta(\multApprox)\right)$-approximately leximin-optimal.
    By definition, this means there exists a solution $y \in S$  that is $\left(\multApprox^2\cdot\Delta(\multApprox), \additiveApprox \cdot\Delta(\multApprox)\right)$-leximin-preferred over it.
    That is, there exists an integer $1 \leq k \leq n$ such that:
    \begin{align*}
        \forall j < k \colon &\valBy{j}{y} \geq \valBy{j}{\retSol};\\
        & \valBy{k}{y} > \frac{1}{\multApprox^2\cdot\Delta(\multApprox)} \left(\valBy{k}{\retSol} + \additiveApprox \cdot\Delta(\multApprox) \right).
    \end{align*}

    Since $\retSol$ was obtained in \eqref{eq:compact-OP} that was solved in the last iteration $n$, it is clear that $\sum_{i=1}^k \valBy{i}{\retSol} \geq \sum_{i=1}^{k} z_i$ (by constraint (\progCompact.2) if $k<n$ and (\progCompact.3) otherwise).
    Which implies:
    \begin{align}\label{eq:fk-to-zk}
        \sum_{i=1}^k \valBy{i}{\retSol} - \sum_{i=1}^{k-1} z_i \geq z_k
    \end{align}

    Now, consider \eqref{eq:compact-OP} that was solved in iteration $k$.
    By Observation \ref{obs:retSol-solves-any-t}, $\retSol$ is feasible to this problem.
    As the $(k-1)$ smallest objective values of $y$ are at least as high as those of $\retSol$, it is easy to conclude that $y$ also satisfies constraints (\progCompact.2) of this problem; since, for any $\ell < k$:
    \begin{align*}
        \sum_{i=1}^{\ell} \valBy{i}{y} \geq\sum_{i=1}^{\ell} \valBy{i}{\retSol} \geq \sum_{i=1}^{\ell} z_i
    \end{align*}
    Therefore, by Observation \ref{obs:feasi-and-constraint2}, $y$ is also feasible to this problem. 

    The value obtained during the algorithm run as an approximation for this problem is $z_k$.
    This means that the optimal value is at most $\frac{1}{\multApprox}\left(z_k + \additiveError \right)$.
    As $y$ is feasible in this problem, and since the objective value obtained by $y$ in this problem is $\sum_{i=1}^k \valBy{i}{y} - \sum_{i=1}^{k-1} z_i$ (by Observation \ref{obs:obj-value}), this implies the following:
    \begin{align}\label{eq:y-upper bound} 
        \sum_{i=1}^k \valBy{i}{y} - \sum_{i=1}^{k-1} z_i \leq \frac{1}{(1-\multError)}\left(z_k+\additiveError\right)
    \end{align}

    If $k=1$, we get that the objective value obtained by $y$ is $\valBy{1}{y}$.
    In addition, $\valBy{1}{\retSol} \geq z_1$ by Equation \eqref{eq:fk-to-zk}. 
    However, as $0<\multApprox \leq 1$, then $\Delta(\multApprox) \geq 1$ but $\multApprox \cdot\Delta(\multApprox) \leq 1$.
    It follows that:
    \begin{align*}
        \valBy{1}{y}> \frac{1}{\multApprox^2 \cdot \Delta(\multApprox)} \left(\valBy{1}{\retSol} + \additiveError \cdot \Delta(\multApprox) \right)\geq \frac{1}{\multApprox} \left(z_1 + \additiveError \right)
    \end{align*}
    In contradiction to Equation \eqref{eq:y-upper bound}.

    Therefore, $k>1$.  
    We start by showing that the following holds:
    \begin{align}\label{eq:yk-to-sum}
        \valBy{k}{y} > \frac{1}{1-\multError} \left( \valBy{k}{\retSol} + \multError\sum_{i=1}^{k-1}\valBy{i}{\retSol} - \multError \sum_{i=1}^{k-1}z_i  +\additiveError \right)
    \end{align}
    Consider $\valBy{k}{y}$, by the definition of $y$ for $k$ we get that:
    \begin{align*}
        &\valBy{k}{y} > \frac{1}{ \multApprox^2 \cdot \Delta(\multApprox)} \left(\valBy{k}{\retSol} +  \additiveError \cdot \Delta(\multApprox) \right)\\
        &\equWithExp{\text{\displayComsoc{By Equ. \eqref{eq:delta-alpha-with-beta} and $\multError$'s def.}\displayEcai{Since $\multApprox = 1 - \multError$ and by Equation \eqref{eq:delta-alpha-with-beta}, this equals to}}}{= \frac{1}{\multApprox}\left(\frac{1-\multError +\multError^2}{1-\multError} \valBy{k}{\retSol}+ \frac{1}{\multApprox}\additiveError\right)}\\
        &= \frac{1}{\multApprox}\left(\valBy{k}{\retSol} + \frac{\multError^2}{1-\multError} \valBy{k}{\retSol}+ \frac{1}{\multApprox}\additiveError\right)\\
        &\equWithExp{\text{By Lemma \ref{lemma:fk-to-all}\displayEcai{ we can conclude that}}}{
        \geq \frac{1}{\multApprox}\left(\valBy{k}{\retSol} + \multError\Bigr[\sum_{i=1}^{k-1}\valBy{i}{\retSol} - \sum_{i=1}^{k-1}z_i -\frac{1}{1-\multError}\additiveError\Bigr]+ \frac{1}{\multApprox}\additiveError\right)}\\
        & \equWithExp{\text{\displayComsoc{By $\multError$'s def.}\displayEcai{As $\multApprox = 1-\multError$}}}{\frac{1}{1-\multError}\left(\valBy{k}{\retSol} + \multError\sum_{i=1}^{k-1}\valBy{i}{\retSol} -\multError \sum_{i=1}^{k-1}z_i +\additiveError\right)}
    \end{align*}    

    But, we shall now prove that this means, once again, that the objective value of $y$, which is $\sum_{i=1}^k \valBy{i}{y} - \sum_{i=1}^{k-1} z_i$, is higher than $\frac{1}{1-\multError} \left(z_k +\additiveError\right)$, in contradiction to Equation \eqref{eq:y-upper bound}:
    \begin{align*}
        &\displayComsoc{\quad}\sum_{i=1}^k \valBy{i}{y} - \sum_{i=1}^{k-1} z_i=\sum_{i=1}^{k-1} \valBy{i}{y} - \sum_{i=1}^{k-1} z_i + \valBy{k}{y}\\
        &\text{By the definition of $y$ for $i<k$:}\\
        &\displayComsoc{\quad}\geq \sum_{i=1}^{k-1} \valBy{i}{\retSol} - \sum_{i=1}^{k-1} z_i + \valBy{k}{y}\\
        &\text{By Equation \eqref{eq:yk-to-sum}:}\\
        &\displayComsoc{\quad}> \sum_{i=1}^{k-1} \valBy{i}{\retSol} - \sum_{i=1}^{k-1} z_i + \frac{1}{1-\multError} \valBy{k}{\retSol} \displayEcai{ \\
        & \quad\quad\quad} + \frac{\multError}{1-\multError}\sum_{i=1}^{k-1}\valBy{i}{\retSol} - \frac{\multError}{1-\multError}\sum_{i=1}^{k-1}z_i +\frac{1}{1-\multError}\cdot\additiveError\\
        &\text{Since  $1+\frac{\multError}{1-\multError} = \frac{1}{1-\multError}$, this equals to:}\\
        &\displayComsoc{\quad} = \frac{1}{1-\multError}\sum_{i=1}^{k-1} \valBy{i}{\retSol} - \frac{1}{1-\multError}\sum_{i=1}^{k-1} z_i + \frac{1}{1-\multError} \valBy{k}{\retSol} + \frac{1}{1-\multError} \additiveError\\
        & \displayComsoc{\quad}= \frac{1}{1-\multError} \left(\sum_{i=1}^k \valBy{k}{\retSol} - \sum_{i=1}^{k-1}z_i + \additiveError\right)\\
        &\text{By Equation \eqref{eq:fk-to-zk}:}\\
        &\displayComsoc{\quad}\geq \frac{1}{1-\multError} \left(z_k +\additiveError\right).
    \end{align*}
    This is a contradiction, so Theorem \ref{th:main} is proved.
\end{proof}



\end{document}